\title{
	Normal Forms for Elements of ${}^*$-Continuous Kleene Algebras
        Representing the Context-Free Languages
}
\author{%
        Mark Hopkins
\\	The Federation Archive
\\	https://github.com/FederationArchive
\\	federation2005@netzero.net
\\	\and
	Hans Leiß
\\      Centrum f\"ur Informations- und Sprachverarbeitung,
\\      Ludwig-Maxi\-mi\-li\-ans-Universit\"at M\"unchen (retired)
\\	h.leiss@gmx.de
}
\date{24 April 2025} 
\begin{document}
\maketitle
\runninghead{M.~Hopkins, H.~Lei\ss}{Normal forms for elements of $K\xR C_2'$}

\begin{abstract}
Within the tensor product $K \mathop{\otimes_{\cal R}} C_2'$ of any \Star-continuous
Kleene algebra $K$ with the polycyclic \Star-continuous Kleene algebra $C_2'$ over
two bracket pairs there is a copy of the fixed-point closure of $K$: the centralizer
of $C_2'$ in $K \mathop{\otimes_{\cal R}} C_2'$.
Using an automata-theoretic representation of elements of $K\mathop{\otimes_{\cal R}}
C_2'$ \`a la Kleene and with the aid of normal form theorems that restrict the
occurrences of brackets on paths through the automata, we develop a foundation for a
calculus of context-free expressions without variable binders.
We also give some results on the bra-ket *-continuous Kleene algebra $C_2$, motivate
the ``completeness equation'' that distinguishes $C_2$ from $C_2'$, and show that
$C_2'$ validates a relativized form of this equation.
\end{abstract}

\section{Introduction}

A Kleene algebra $K = (K,+,\cdot,{}^*,0,1)$ is \Star-continuous if
\[
	a \cdot c^*\cdot b = \Sum{a\cdot c^n\cdot b}{n\in\N}
\]
for all $a, b, c \in K$, where $\sum$ is the least upper bound with respect to the natural partial order $\leq$ on $K$ given by $a \leq b$ iff $a + b = b$.
Well-known examples of \Star-continuous Kleene algebras are the algebras $\CR M = (\CR M, +,\cdot,{}^*,0,1)$
of regular or ``rational'' subsets of a monoid $M = (M,\cdot^M,1^M)$, where
$0 := \emptyset$, $1 := \{1^M\}$ and $+$ is union, $\cdot$ is elementwise product,
and \Star\ is iteration or ``monoid closure'', i.e.~for $A \in \CR M$, $A^*$ is the least $B \supseteq A$ that contains $1^M$ and is closed under $\cdot^M$.

We will make use of two other kinds of \Star-continuous Kleene algebras:
quotients $K/\rho$ of \Star-continuous Kleene algebras $K$ under $\CR$-congruences $\rho$ on $K$,
i.e.~semiring congruences which make suprema of regular subsets
congruent if their elements are congruent in a suitable sense, and tensor products $K\xR K'$ of \Star-continuous Kleene algebras $K$ and $K'$.

Let $\Delta_m$ be a set of $m$ pairs of ``brackets'', $p_i, q_i$, $i < m$, and $\CR\Delta_m^*$ the \Star-continuous Kleene algebra of regular subsets of $\Delta_m^*$.
Hopkins \cite{Hopkins-I-2008} considers the $\CR$-congruence $\rho_m$ on $\CR \Delta_m^*$ generated by the equation set
\begin{equation}\label{bra-ket-Cn}
	\setof{p_iq_j=\delta_{i,j}}{i,j<m} \cup \{q_0p_0+ \ldots + q_{m-1}p_{m-1}= 1\}
\end{equation}
and the finer $\CR$-congruence $\rho_m'$ generated by the equations
\begin{equation}\label{poly-Cn}
	\setof{p_iq_j = \delta_{i,j}}{i,j<m},
\end{equation}
where $\delta_{i,j}$ is the Kronecker $\delta$.
The latter equations allow us to algebraically distinguish matching brackets, where $p_iq_j=1$, from non-matching ones, where $p_iq_j=0$.\footnote{
	~In $\CR\Delta^*_m$, elements of $\Delta_m^*$ are interpreted by their singleton sets, 0 by the empty set.
}
\,These $\CR$-congruences give rise to the \blue{bra-ket}
and the \blue{polycyclic} \Star-con\-tinuous Kleene algebra $\blue{C_m} = \CR\Delta_m^*/\rho_m$ and $\blue{C_m'} = \CR\Delta_m^*/\rho_m'$, respectively.
For $m > 2$, $C_m$ can be coded in $C_2$ and $C_m'$ in $C_2'$, so we focus on the case $m=2$.

Two \Star-continuous Kleene algebras $K$ and $C$ can be combined to a \blue{tensor product $K\xR C$} which, intuitively,
is the smallest common \Star-continuous Kleene algebra extension of $K$ and $C$ in which elements of $K$ commute with those of $C$.

In unpublished work, the first author showed that for any \Star-continuous Kleene algebra $K$, the tensor product $K\xR C_2$ contains an isomorphic copy of the fixed-point closure of $K$.
In particular, for finite alphabets $X$, each context-free set $L \subseteq X^*$ is represented in $\CR X^*\xR C_2$ as the value of a regular expression over the disjoint union $X \dotcup \Delta_2$ of $X$ and $\Delta_2$.
In fact, the \blue{centralizer of $C_2$} in $K\xR C_2$, i.e.~the set of those elements of $K\xR C_2$ that commute with every element of $C_2$,
consists of exactly the representations of context-free subsets of the multiplicative monoid of $K$.
These results constitute a generalization of the Chomsky and Schützenberger
representation theorem\,(\cite{Chomsky63}, Proposition 2) in formal language theory,
which says that any context-free set $L\subseteq X^*$ is the image $h(R\cap D)$ of a
regular set $R\subseteq (X\cup\Delta)^*$ under a homomorphism $h:(X\cup\Delta)^*\to
X^*$ that keeps elements of $X$ fixed and ``erases'' symbols of $\Delta$ to 1.
The generalization is shown  in \cite{Leiss22} with the simpler algebra $K\xR C_2'$ instead of $K\xR C_2$.

It is therefore of some interest to understand the structure of $K\xR C_2$ and $K\xR C_2'$.
In this article, an extension of \cite{HopkinsLeiss23}, we focus on $K\xR C_2'$,
using ideas from and improvements of unpublished results on $K\xR C_2$ by the first
author. Our main results are normal forms for elements of $K\xR C_2'$ that relate
arbitrary elements to those of the centralizer of $C_2'$.
We also present some results specific to $C_2$ and its matrix algebra.
The rest of this article is structured as follows.

Section \ref{sec-KA} recalls the definitions of \Star-continuous Kleene algebras (aka
$\CR$-dioids), bra-ket and polycyclic \Star-continuous Kleene algebras, and quotients
and tensor products of \Star-continuous Kleene algebras.  We then show a Kleene representation theorem,
i.e.~that each element $\phi$ of $K\xR C_2'$ is the value $L(\CA)=SA^*F$ of an
automaton $\CA=\Z{S, A, F}$, where $S \in \{0, 1\}^{1\times n}$ resp.~$F \in \{0,
1\}^{n\times 1}$ code the set of initial resp.~accepting states of the $n$ states of
$\CA$ and $A \in \Mat nn{K\xR C_2'}$ is a transition matrix.

Section \ref{sec-Normalforms} refines the representation $\phi=L(\CA)$ to a \emph{normal form} where brackets on paths through the automaton $\CA$ occur mostly in a balanced way.
Section 3.1 identifies, in any Kleene algebra with elements $u, x, v$, the value $(u+x+v)^*$ with the value $(Nv)^* N (uN)^*$,
provided the algebra has a least solution $N$ of the inequation $y \geq (x+uyv)^*$ defining Dyck's language $D(x) \subseteq \{u, x, v\}^*$ with ``bracket'' pair $u, v$.
We then show that for any \Star-continuous Kleene algebra $K$ and $n \geq 1$, $\Mat
nn{K\xR C_2'}$ has such a solution $N$ of $y\geq (UyV+X)^*$ for matrices $U$ of 0's
and opening brackets from $C_2'$, $X$ of elements of $K$, and $V$ of 0's and closing
brackets from $C_2'$, and that entries of $N$ belong to the centralizer of $C_2'$ in
$K\xR C_2'$.

Section \ref{sec-normal-forms} refines the representation $\phi=L(\CA)$ to the
sketched normal form: the transition matrix $A$ can be split as $A=U+X+V$ into a
matrix $X \in K^{n\times n}$ of transitions by elements of $K$, a matrix $U \in \{0,
\p0, \p1\}^{n\times n}$ of transitions by $0$ or opening brackets of $C_2'$, and a
matrix $V \in \{0, \q0, \q1\}^{n\times n}$ of transitions by 0 or closing brackets of
$C_2'$.  Then $A^*$ can be normalized to $(NV)^*N (UN)^*$, where $N$ is balanced in
$U$ and $V$ and all other occurrences of closing brackets $V$ are in front of all
other occurrences of opening brackets $U$.  We call $SA^*F=S(NV)^*N(UN)^*F$ the first
normal form of $\phi$.  This result is a generalization of a normal form for elements
of the polycyclic monoid $P_2'[X]$, the quotient of $(\Delta_2\cup X\cup\{0\})^*$ by the monoid congruence generated by the bracket match- and mismatch
equations, the equations for commuting brackets of $\Delta_2$ with symbols of $X$, and
the annihilator equations for 0. Namely, if $\Delta_2= U \cup V$ is split into
opening brackets $U$ and closing brackets $V$, any $w\in (\Delta_2\cup X\cup\{0\})^*$
is congruent to a normal form $\nf{w}\in V^*X^*U^*\cup\{0\}$. (The
centralizer of $\Delta_2$ in $P_2'[X]$ is $X^*\cup\{0\}$, so the analogues of $N$ are
contracted in the factor $X^*$.)

Section \ref{sec-reduced-nf} proves a conjecture of \cite{HopkinsLeiss23}: if
$\phi=L(\CA)$ belongs to the centralizer of $C_2'$ in $K\xR C_2'$, then the normal
form $SA^*F=S(NV)^*N(UN)^*F$ can be simplified to $SA^*F = SNF$.  We call this the
reduced normal form.  For this, we have to assume that $K$ is non-trivial and has no
zero divisors, which is satisfied e.g.~when $K=\CR M$ for a monoid $M$.  A second
normal form is given for a slightly more general transition matrix $A$ than $U+X+V$,
which is useful for the representation of products of context-free subsets.
For the elements of the centralizer of $C_2'$ in $K\xR C_2'$ only, a different
characterization had been given in \cite{Leiss22}. The normal form theorems presented
here improve on this by showing how the elements of the centralizer of $C_2'$, i.e.~the
representations of context-free subsets of $K$ in $K\xR C_2'$, relate to the
remaining elements of $K\xR C_2'$.

For a finite set $X$, the elements of $\CR X^*\xR C_2'$ are named by regular
expressions over $\Delta_2\dotcup X$, as mentioned above. A subset of those, called
the context-free expressions over $X$, name the elements of the centralizer of $C_2'$
in $\CR X^*\xR C_2'$, i.e.~the representations of the context-free languages
$L\subseteq X^*$.
Section \ref{sec-combining-nfs} provides a foundation of a calculus of context-free
expressions by showing how to combine normal forms for elements of any $K\xR C_2'$ by
regular operations.

Section \ref{sec-bra-ket} deals with the bra-ket \Star-continuous Kleene algebras
$C_m$. Section \ref{sec-mat-Cm} gives an interpretation of $C_m$ in the algebra of
binary relations on a countably infinite set, $\Mat\omega\omega\B$. We also show that
$C_m$ is isomorphic to $\Mat mm{C_m}$ and $C_m\xR \Mat mm\B$, thereby excluding an
interpretation by finite-dimensional matrices.  Section \ref{sec-completeness}
considers a natural interpretation of brackets as stack operations, where $p_i$
pushes symbol $i\in\{1,\ldots,m-1\}$ to the stack and $q_i$ pops $i$ from the
stack. Then $q_i p_i$ tests if symbol $i$ is on the stack top, while $q_0 p_0$ tests
if the stack boundary $0$ is on top, so that the equation $q_0p_0+\ldots
+q_{m-1}p_{m-1}=1$ distinguishing $C_m$ from $C_m'$ asserts a \emph{completeness
  condition} for a stack with stack alphabet $\{1,\ldots,m-1\}$. For regular programs
$r \in \textit{RegExp\/}(\{q_0p_0,p_1,\ldots,q_{m-1}\})$, the scope $p_0 \ldots q_0$ of $p_0rq_0$
asserts that we start and end with an empty stack. Section \ref{sec-completeness}
shows that the completeness equation of $C_m$ in a sense already holds in $C_m'$ in
the scope of $p_0 \ldots q_0$.

Finally, the conclusion summarizes our results and indicates possible future extensions.

\section{\Star-continuous Kleene algebras and $\CR$-dioids}\label{sec-KA}

A \blue{Kleene algebra}, as defined in \cite{Kozen94}, is an idempotent semiring or dioid $(K,+,\cdot,0,1)$ with a unary operation ${}^* : K \to K$ such that for all $a, b \in K$
\begin{eqnarray*}
	a\cdot a^*+1 \leq a^*	&\wedge&	\forall x(a\cdot x + b \leq x \to a^*\cdot b \leq x), \\
	a^*\cdot a +1 \leq a^*	&\wedge&	\forall x(x\cdot a + b \leq x \to b\cdot a^* \leq x),
\end{eqnarray*}
where $\leq$ is the natural partial order on $K$ given by $a \leq b$ iff $a+b=b$.

A Kleene algebra is \emph{non-trivial} if $0\not=1$, and it \blue{has zero-divisors} if
there are non-zero elements $a,b$ such that $a\cdot b=0$.
The boolean Kleene algebra $\B = (\{0,1\},+,\cdot,{}^*,0,1)$ with boolean addition and
multiplication and ${}^*$ given by $0^*=1^*=1$ is a subalgebra of any
non-trivial Kleene algebra $K$.

A Kleene algebra $K = (K,+,\cdot,{}^*,0,1)$ is \blue{\Star-continuous} if
\[
	a \cdot c^*\cdot b = \Sum{a\cdot c^n\cdot b}{n\in\N}
\]
for all $a, b, c \in K$, where $\sum$ is the least upper bound with respect to the natural partial order.
Well-known \Star-continuous Kleene algebras are the algebras $\CR M = (\CR M, +,\cdot,{}^*,0,1)$
of regular subsets of monoids $M= (M,\cdot^M,1^M)$, where $0 := \emptyset$, $1 := \{1^M\}$ and for $A, B \in \CR M$,
\[ \renewcommand\arraystretch{1.4} \begin{array}{c@{\qquad}l}
	A+B = A \cup B,		        & A \cdot B = \setof{a\cdot^M b}{a\in A,b\in B},
\\	A^* = \Union{A^n}{n\in\N}	& \textrm{with } A^0 = 1, A^{n+1} = A \cdot A^n.
\end{array}
\]
If $K$ is a dioid $(K,+^K,\cdot^K,0^K,1^K)$ or a Kleene algebra, by \blue{$\CR K$} we mean the Kleene algebra $\CR M$ of its multiplicative monoid $M=(K,\cdot^K,1^K)$.

An \blue{$\CR$-dioid} is a dioid $K=(K,+^K,\cdot^K,0^K,1^K)$ where each $A \in \CR K$
has a least upper bound $\sum A \in K$, i.e.~$\sum$ is \blue{$\CR$-complete}, and
where $\sum(AB) = (\sum A)(\sum B)$ for all $A, B \in \CR K$, i.e.~$\sum$ is \blue{$\CR$-distributive}.
An \blue{$\CR$-morphism} is a dioid morphism that preserves least upper bounds of
regular sets.

Any $\CR$-dioid $K$ can be expanded to a \Star-continuous Kleene algebra by putting $c^* := \sum\{c\}^*$ for $c \in K$.
Conversely, the dioid reduct of a \Star-continuous Kleene algebra $K$ is an $\CR$-dioid,
since, by induction, every regular set $C$ has a least upper bound $\sum C \in K$ satisfying $a\cdot (\sum C)\cdot b = \sum (aCb)$,
which implies the $\CR$-distributivity property $\sum(AB) = (\sum A)(\sum B)$ for $A, B \in \CR K$ (see \cite{Hopkins-I-2008}).

The \Star-continuous Kleene algebras, with Kleene algebra homomorphisms (semiring
homomorphisms that preserve \Star), form a category. It is isomorphic to the category
\blue{$\D\CR$} of $\CR$-dioids and $\CR$-morphisms,
cf.~\cite{Kozen-Algorithms91,Hopkins-I-2008,LeissHopkins18a}, and a subcategory of the category $\D$ of
dioids and dioid morphisms.
There is an adjunction $(\CR,\widehat\CR,\eta,\epsilon)$
between the category $\M$ of monoids and the category $\D\CR$, where $\widehat\CR$ is
the forgetful functor, the unit $\eta$ is given by $\eta_M : M\to \CR M$ with
$\eta_M(m)=\{m\}$ and the counit $\epsilon$ by $\epsilon_K:\CR K\to K$ with
$\epsilon_K(A)=\sum A$, for monoids $M$ and $\CR$-dioids $K$, cf.~Theorem 16 of \cite{Hopkins-II-2008}.

The $\CR$-dioids of the form $\CR M$ with monoid $M$ form the Kleisli subcategory of $\D\CR$.
The cases of most immediate interest are the algebras $\CR X^*$ associated with
regular expressions and regular languages over an alphabet $X$, and $\CR\left(X^*
\times Y^*\right)$ of rational relations and rational transductions with alphabets
$X$ and $Y$, respectively, of inputs and outputs.

\subsection{The polycyclic $\CR$-dioids}\label{sec-polycyclic}

We will make use of two kinds of $\CR$-dioids which do not belong to the
Kleisli subcategory, but are quotients of the regular sets $\CR\Delta^*$ by
suitable $\CR$-congruence relations $\rho$ on $\CR\Delta^*$, where $\Delta$
is an alphabet of ``bracket'' pairs.
In this section, we introduce the polycyclic $\CR$-dioids $C_m'$ over an alphabet
$\Delta_m$ of $m$ bracket pairs; the bra-ket $\CR$-dioids $C_m$ over $\Delta_m$ are
deferred to Section \ref{sec-completeness}.

Let $\rho$ be a dioid congruence on an $\CR$-dioid $D$.
The set $D/\rho$ of congruence classes is a dioid under the operations defined by
$(d/\rho)(d'/\rho) := (dd')/\rho$, $1 := 1/\rho$, $d/\rho + d'/\rho := (d+d')/\rho$, $0 := 0/\rho$.
Let $\leq$ be the partial order on $D/\rho$ derived from $+$.
For $U \subseteq D$, put $U/\rho :=\setof{d/\rho}{d\in U}$ and
\[
	(U/\rho)^\da = \setof{e/\rho}{e/\rho \leq d/\rho \textrm{ for some }d \in U, e \in D}.
\]
An \blue{$\CR$-congruence} on $D$ is a dioid-congruence $\rho$ on $D$ such that for
all $U, U' \in \CR D$, if $(U/\rho)^\da = (U'/\rho)^\da$, then $(\sum U)/\rho = (\sum
U')/\rho$.
 It is easy to see that the kernel of an $\CR$-morphism is an $\CR$-congruence.

\begin{proposition}[Proposition 1 of \cite{LeissHopkins18a}]
	If $D$ is an $\CR$-dioid and $\rho$ an $\CR$-congruence on $D$, then $D/\rho$ is an $\CR$-dioid.
	For every $R \subseteq D\times D$ there is a least $\CR$-congruence $\rho \supseteq R$ on $D$.
\end{proposition}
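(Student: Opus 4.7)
My plan is to prove the two claims in turn, transporting structure from $D$ to $D/\rho$ for the first, and using a transfinite closure construction for the second. For the first claim, that $D/\rho$ is an $\CR$-dioid, I would use the quotient map $\pi\colon D\to D/\rho$, a surjective dioid morphism. Since regular sets are built inductively from singletons by $+,\cdot,{}^*$ and $\pi$ preserves these operations, an easy induction shows that every $V\in\CR(D/\rho)$ arises as $\pi(U)=U/\rho$ for some $U\in\CR D$. This lets one \emph{define} $\sum V := (\sum U)/\rho$, and well-definedness is exactly what the $\CR$-congruence hypothesis supplies: if $U/\rho=U'/\rho$ then their downward closures in $D/\rho$ coincide trivially, so $(\sum U)/\rho=(\sum U')/\rho$. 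That $\sum V$ is an upper bound of $V$, and that $\CR$-distributivity $\sum(VW)=(\sum V)(\sum W)$ holds in $D/\rho$, follow by computing in $D$ and projecting along $\pi$. For ``least upper bound'', given any upper bound $e/\rho$ of $V=U/\rho$, I would apply the $\CR$-congruence property to $U$ and $U\cup\{e\}$: since $e/\rho$ dominates every $u/\rho$ with $u\in U$, the two sets have the same quotient downward closure, whence $(\sum U+e)/\rho = e/\rho$, i.e.\ $\sum V\le e/\rho$ in $D/\rho$.

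For the second claim I would first try the obvious intersection argument: let $\rho$ be the intersection of all $\CR$-congruences on $D$ containing $R$, a nonempty family because the total relation $D\times D$ is trivially one. Intersection automatically preserves the dioid-congruence axioms, but the main obstacle is that it need not preserve the $\CR$-closure condition in a transparent way: the hypothesis $(U/\rho)^\da=(U'/\rho)^\da$ is stated relative to $\rho$ itself, and passing to a coarser $\rho_i$ can both enlarge and collapse the downward closures, so one cannot just ``restrict'' the condition coordinate-wise.

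To bypass this, I would construct $\rho$ by transfinite closure instead. Set $\rho_0$ to be the dioid-congruence generated by $R$; at successor stages enlarge $\rho_\alpha$ by adding all pairs $(\sum U,\sum U')$ with $(U/\rho_\alpha)^\da=(U'/\rho_\alpha)^\da$ and re-closing under the dioid-congruence operations; at limit stages take unions. Since every $\rho_\alpha$ sits inside the set $D\times D$, the chain stabilizes at some ordinal $\alpha^*$; the fixed point $\rho := \rho_{\alpha^*}$ is by construction a dioid-congruence that satisfies the $\CR$-closure condition, hence an $\CR$-congruence. A routine induction on $\alpha$ shows that any $\CR$-congruence containing $R$ contains every $\rho_\alpha$, so $\rho$ is the least such.
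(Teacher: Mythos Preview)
The paper gives no proof of this proposition; it merely cites it as Proposition~1 of \cite{LeissHopkins18a}. Your argument is essentially correct, with two remarks.

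First, a minor slip: in the ``least upper bound'' step you say you apply the $\CR$-congruence property ``to $U$ and $U\cup\{e\}$'', but the conclusion $(\sum U+e)/\rho=e/\rho$ shows the intended comparison is between $U\cup\{e\}$ and $\{e\}$. Indeed, since $e/\rho$ dominates every $u/\rho$ with $u\in U$, one has $((U\cup\{e\})/\rho)^\da=(\{e\}/\rho)^\da$, hence $(\sum(U\cup\{e\}))/\rho=(\sum\{e\})/\rho$, which is what you want.

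Second, your caution about the intersection approach is unnecessary: the intersection of $\CR$-congruences \emph{is} an $\CR$-congruence. Suppose $\rho=\bigcap_i\rho_i$ with each $\rho_i$ an $\CR$-congruence, and $(U/\rho)^\da=(U'/\rho)^\da$. Then for each $d\in U$ there is $d'\in U'$ with $d/\rho\le d'/\rho$, i.e.\ $(d+d',d')\in\rho\subseteq\rho_i$, so $d/\rho_i\le d'/\rho_i$; symmetrically for $U'$. Hence $(U/\rho_i)^\da=(U'/\rho_i)^\da$ for every $i$, giving $(\sum U,\sum U')\in\rho_i$ for every $i$ and thus $(\sum U,\sum U')\in\rho$. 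So the one-line intersection argument suffices, and your transfinite closure, while correct, is more machinery than needed.
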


Let $\Delta_m= P_m \dotcup Q_m$ be a set of $m$ ``opening brackets''
$P_m=\setof{p_i}{0\leq i < m}$ and $m$ ``closing brackets''
$Q_m=\setof{q_i}{0\leq i < m}$, with $P_m \cap Q_m=\emptyset$.
The \blue{polycyclic $\CR$-dioid $C_m'$} is the quotient $C_m' = \CR\Delta_m^*/\rho$ of $\CR\Delta_m^*$ by the $\CR$-congruence $\rho$ generated by the relations
\begin{equation} 
	\setof{p_iq_j = \delta_{i,j}}{i,j<m}.
\end{equation}
These equations allow us to algebraically distinguish matching brackets, where $p_iq_j=1$, from non-matching ones, where $p_iq_j=0$.
The \blue{polycyclic monoid $P_m'$} of $m$ generators is the quotient of $(\Delta_m \dotcup \{0\})^*$ by the monoid congruence $\sigma_m$ generated by
\begin{equation}\label{poly-Pn'}
	\setof{p_iq_j = \delta_{i,j}}{i,j<m} \cup \setof{x0 = 0}{x\in \Delta_m \dotcup \{0\}} \cup \setof{0x = 0}{x\in \Delta_m}. \nonumber
\end{equation}
Each element $w \in (\Delta_m \dotcup \{0\})^*$ has a \blue{normal form} $\nf w \in Q_m^*P_m^* \cup \{0\}$,
ob\-tained by using the equations to shorten $w$, that represents $w/\sigma_m \in P_m'$.
Hence,
\[
	P_m' \simeq (Q_m^*P_m^* \cup \{0\},\cdot,1) \quad\textrm{with }v\cdot w = \nf{vw}.
\]
The polycyclic $\CR$-dioid $C_m'$ can be understood as the regular sets of strings in
normal form:
\begin{proposition}[Proposition 9 of \cite{Leiss22}]\label{prop-C2'}
  Let $\nu$ be the least $\CR$-congruence on $\CR P_m'$ that identifies
  $\{0\}$ with the empty set. Then $C_m' \simeq \CR P_m'/\nu$ via the
  mapping defined by $A/\rho\mapsto \setof{\nf{w}}{w\in A}/\nu$ for
  $A\in\CR\Delta_m^*$. Each element $A/\rho$ of $C_m'$ is uniquely represented by a
  subset of $Q_m^*P_m^*$, namely $\setof{\nf{w}}{w\in A}\setminus \{0\}$.
\end{proposition}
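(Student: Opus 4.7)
The plan is to realize the isomorphism as the factorization through $\rho$ of the $\CR$-morphism induced by the monoid morphism $h\colon \Delta_m^* \to P_m'$, $w\mapsto w/\sigma_m$. Applying the functor $\CR$ gives an $\CR$-morphism $\CR h\colon \CR\Delta_m^* \to \CR P_m'$; it is surjective because $h$ is surjective onto $P_m'$ and the regular subsets of $P_m'$ are generated by singletons, each of which is $\CR h(\{w\})$ for some $w\in\Delta_m^*$. Composing with the quotient $\pi\colon \CR P_m' \to \CR P_m'/\nu$, I would verify that $\Phi := \pi\circ \CR h$ sends $\{p_iq_j\}$ to $\{1\}/\nu = 1$ when $i=j$ and to $\{0\}/\nu = 0$ when $i\neq j$. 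The kernel of $\Phi$ is then an $\CR$-congruence on $\CR\Delta_m^*$ containing the generators of $\rho$, hence contains $\rho$ by minimality, and $\Phi$ descends to a surjective $\CR$-morphism $\bar\Phi\colon C_m' \to \CR P_m'/\nu$, $A/\rho \mapsto \CR h(A)/\nu$.

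The heart of the argument, and the main obstacle I expect, is injectivity of $\bar\Phi$. First I would characterize $\nu$ concretely: the relation $\tau$ on $\CR P_m'$ defined by $B\,\tau\,B'$ iff $B\setminus\{0\} = B'\setminus\{0\}$ is an $\CR$-congruence identifying $\{0\}$ with $\emptyset$, so $\nu \subseteq \tau$ by minimality. Conversely, any $\CR$-congruence containing $\{0\}\sim\emptyset$ already identifies $B$ with $B\cup\{0\}$ (add the generator to $B\sim B$), so $\tau\subseteq\nu$. Thus each $\nu$-class has a unique representative disjoint from $\{0\}$, namely a subset of $Q_m^*P_m^*$.

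Next I would prove the normalization lemma: for $w\in\Delta_m^*$, $\{w\}/\rho = \{\nf{w}\}/\rho$ when $\nf{w}\neq 0$, and $\{w\}/\rho = 0$ otherwise. This goes by induction on Dyck-reduction steps using $\{p_iq_j\}/\rho = \delta_{i,j}$ and confluence of the rewriting. For any regular $A\subseteq \Delta_m^*$, the set $\setof{\{w\}}{w\in A}$ is the image of $A$ under the singleton-embedding monoid morphism $\Delta_m^*\to \CR\Delta_m^*$, hence regular in $\CR\Delta_m^*$, and its supremum there is $A$; applying the $\CR$-morphism $\CR\Delta_m^*\to C_m'$ yields $A/\rho = \sum\setof{\{w\}/\rho}{w\in A}$ as an $\CR$-sup in $C_m'$.

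For injectivity, suppose $\bar\Phi(A/\rho) = \bar\Phi(A'/\rho)$, so by the characterization of $\nu$ the sets $\setof{\nf{w}}{w\in A}\setminus\{0\}$ and $\setof{\nf{w}}{w\in A'}\setminus\{0\}$ coincide. Set $A'' := A\cup A'$, which is regular, as is $A''\setminus A = A''\cap A^c$. For each $w \in A''\setminus A \subseteq A'$, either $\nf{w}=0$ and $\{w\}/\rho = 0 \leq A/\rho$, or $\nf{w}=\nf{w'}$ for some $w'\in A$, whence the normalization lemma gives $\{w\}/\rho = \{w'\}/\rho \leq A/\rho$. Hence $(A''\setminus A)/\rho$ is the $\CR$-sup of elements $\leq A/\rho$, giving $A''/\rho = A/\rho + (A''\setminus A)/\rho = A/\rho$, and by symmetry $A''/\rho = A'/\rho$, so $A/\rho = A'/\rho$. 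The representation statement then falls out, since the unique $0$-free representative $\CR h(A)\setminus\{0\} = \setof{\nf{w}}{w\in A}\setminus\{0\}$ of $\bar\Phi(A/\rho)$ determines $A/\rho$ by injectivity.
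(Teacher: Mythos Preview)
The paper does not supply its own proof of this proposition; it is quoted from \cite{Leiss22} and stated without argument here, so there is nothing in the present paper to compare against.

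Your proof is correct and follows the natural route: lift the monoid quotient $h:\Delta_m^*\to P_m'$ to an $\CR$-morphism $\CR h$, compose with the canonical surjection onto $\CR P_m'/\nu$, and factor through $\rho$. The two key ingredients---the concrete identification of $\nu$ with the relation ``equal after deleting $0$'' and the normalization lemma $\{w\}/\rho=\{\nf w\}/\rho$ (respectively $=0$)---are exactly what is needed, and your verification that $\tau$ is an $\CR$-congruence and that $\nu=\tau$ is sound. One small simplification in the injectivity step: the detour through $A''\setminus A$ and regular complements is unnecessary. Once you know $\{w\}/\rho\le A/\rho$ for every $w\in A'$, the $\CR$-sup representation $A'/\rho=\sum\setof{\{w\}/\rho}{w\in A'}$ immediately gives $A'/\rho\le A/\rho$, and by symmetry $A/\rho=A'/\rho$.
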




The normal form can be extended from $P_m'$ to monoid extensions $P_m'[X]$ of $P_m'$ in which elements of $X$ are required to commute with elements of $P_m'$.
Formally, let $Y = \Delta_m \dotcup\{0\} \dotcup X$ and $P_m'[X]$ the quotient of $Y^*$ under the congruence generated by
(i) the matching rules $\setof{p_iq_j=\delta_{i,j}}{i,j<m}$,
(ii) the annihilation rules $y0 = 0$ and $0y=0$ for $y \in Y$, and
(iii) the commutation rules $\setof{xd = dx}{x\in X, d\in \Delta_m}$.
The set $Y^*$ can be decomposed into strings containing a 0, strings containing an opening bracket followed by
a symbol of $X$ or by a closing bracket, strings containing a symbol of $X$ followed by a closing
bracket, and strings consisting only of closing brackets followed by symbols of $X$ followed by
opening brackets, i.e.~
\[
	Y^* =  Y^*\{0\}Y^* \cup Y^*(P_mX \cup P_mQ_m \cup XQ_m)Y^* \cup Q_m^*X^*P_m^*.
\]
A normal form $\nf{w} \in Q_m^*X^*P_m^* \cup \{0\}$ for strings $w \in Y^*$ can hence
be obtained: use the annihilation rules to replace $u0v$ by 0, use the commutation rules to move opening brackets $p_i \in P_m$
to the right and closing brackets $q_i \in Q_m$ to the left of elements of $X^*$,
then use the matching rules to shorten $up_iq_jv$ to $uv$ or $u0v$, and repeat this process.
I.e.~for $i, j < m, i\not=j$ and $x \in X$, $u, v \in Y^*$ we put
\[
\begin{array}{rcl@{\qquad}rcl@{\qquad}rcl}
	\nf{up_ixv}	&:=& \nf{uxp_iv},	& \nf{u0v}	&:=& 0,	& \nf{up_iq_iv}	&:=& \nf{uv}, \\
	\nf{uxq_iv}	&:=& \nf{uq_ixv},	& \nf{1}	&:=& 1,	& \nf{up_iq_jv}	&:=& 0.
\end{array}
\]
We leave it to the readers to convince themselves that this amounts to a confluent rewriting system, so that $\tnf: Y^* \to Q_m^*X^*P_m^* \cup \{0\}$ is well-defined, and that
\begin{equation}\label{eqn-Pm'[X]}
	P_m'[X] \simeq (Q_m^*X^*P_m^* \cup \{0\},\cdot,1), \quad \textrm{where } u\cdot v := \nf{uv}.
\end{equation}

The normal form \tnf\ on $P_m'[X]$ is the motivating idea behind the normal form theorem (Theorem \ref{thm-nf-C'})
for elements of the tensor product $\CR X^*\xR C_m'$ to be introduced in the next section.
On the tensor product, regular sets $A \in \CR X^*$ and (congruence classes of) regular sets $B \in \CR\Delta_m$ commute with each other,
and the tensor product is an $\CR$-dioid structure, not just a monoid structure.

We notice that a suitable coding of $m \geq 2$ bracket pairs by two pairs extends to
an embedding of $C_m'$ in $C_2'$.
In the context $p_0\ldots q_0$, the code of any normal form $w\in Q_m^*P_m^*$ except
1 is annihilated.

\begin{lemma}\label{lem-recoding-C'}
	For $m \geq 2$ there is an embedding $\CR$-morphism $g: C_m' \to C_2'$
	such that for $i, j < m$,
        \[ g(p_i)\cdot g(q_j) = \delta_{i,j} \quad\textrm{and}\quad
        p_0\cdot g(q_j)=0 = g(p_i)\cdot q_0,
        \]
        where we wrote $p_i, q_j$ for the congruence class of $\{p_i\}, \{q_j\}$ in
        $C_m'$ and $C_2'$, respectively.
\end{lemma}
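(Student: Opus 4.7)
The plan is to define $g$ on generators by the encoding
\[ g(p_i) := p_1 p_0^i p_1, \qquad g(q_i) := q_1 q_0^i q_1 \quad (i<m), \]
interpreted in $C_2'$, and to extend this first to a monoid homomorphism $\Delta_m^* \to C_2'$ and then, using the adjunction $(\CR,\widehat\CR,\eta,\epsilon)$ recalled in Section 2, to an $\CR$-morphism $\CR\Delta_m^* \to C_2'$. The design is guided by the idea that a $p_1$-framed block can cancel only against a $q_1$-framed block, while the inner runs of $p_0$'s and $q_0$'s record the index $i$.

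Next I would check that $g$ respects the defining relations $p_iq_j = \delta_{i,j}$ of the $\CR$-congruence $\rho_m'$, so that it descends to an $\CR$-morphism $C_m' \to C_2'$. The computation is
\[ g(p_i)\,g(q_j) = p_1\,p_0^i\,p_1\,q_1\,q_0^j\,q_1 \;=\; p_1\,p_0^i\,q_0^j\,q_1 \]
after cancelling the inner $p_1q_1=1$. For $i=j$, iterated use of $p_0q_0=1$ and then $p_1q_1=1$ collapses the right-hand side to $1$. For $i>j$ it reduces to $p_1p_0^{i-j}q_1=0$ since $p_0q_1=0$; for $i<j$ it reduces to $p_1q_0^{j-i}q_1=0$ since $p_1q_0=0$. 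The two boundary-annihilation claims follow immediately from the same mismatch identities in $C_2'$: $p_0\cdot g(q_j)$ begins with the factor $p_0q_1=0$, and $g(p_i)\cdot q_0$ ends with $p_1q_0=0$.

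For injectivity I would appeal to Proposition~\ref{prop-C2'}, applied both to $C_m'$ and to $C_2'$: each element is uniquely represented by a subset of strings in normal form $Q_m^*P_m^*$ resp.~$Q_2^*P_2^*$. Given a normal-form word $w = q_{i_1}\cdots q_{i_k}p_{j_1}\cdots p_{j_\ell} \in Q_m^*P_m^*$, the image
\[ g(w) = (q_1q_0^{i_1}q_1)\cdots(q_1q_0^{i_k}q_1)(p_1p_0^{j_1}p_1)\cdots(p_1p_0^{j_\ell}p_1) \]
already lies in $Q_2^*P_2^*$ and is therefore its own $C_2'$-normal form. The coding $w\mapsto g(w)$ is manifestly injective on $Q_m^*P_m^*$, because the indices $i_r,j_s$ can be read off as the lengths of the $q_0$- and $p_0$-runs between consecutive framing brackets. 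Hence $g$ separates normal-form representatives, and so is injective on $C_m'$.

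The main obstacle I expect is not the matching/mismatch arithmetic, which is routine, but the bookkeeping in the first step: verifying that the monoid homomorphism really does extend to an $\CR$-morphism on $\CR\Delta_m^*$ and then passes to the quotient $C_m'$. This is a universal-property argument drawing on the $\CR$-completeness and $\CR$-distributivity of $C_2'$ together with the characterization of $\rho_m'$ as the \emph{least} $\CR$-congruence containing the matching relations $\{p_iq_j=\delta_{i,j}\mid i,j<m\}$; once those are in place the remaining verifications reduce to the polycyclic normal-form calculus used above.
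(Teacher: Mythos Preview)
Your proposal is correct and follows the same overall strategy as the paper: encode the $m$ bracket pairs by words over two pairs, verify the matching relations, and deduce injectivity from the unique normal-form representation of Proposition~\ref{prop-C2'}. The main difference is the choice of encoding. The paper uses a prefix-style code $\ol{p_i}=bp^{i+1}$, $\ol{q_i}=q^{i+1}d$ (with $b=p_0,\,p=p_1,\,d=q_0,\,q=q_1$), whereas you use a framed code $g(p_i)=p_1p_0^ip_1$, $g(q_i)=q_1q_0^iq_1$; both satisfy the required identities, and both carry $Q_m^*P_m^*$ injectively into $Q_2^*P_2^*$. A second minor difference is that the paper lifts the word-level map to an $\CR$-morphism $\CR\Delta_m^*\to\CR\Delta_2^*$ via the functor $\CR$ and then composes with the quotient map to $C_2'$, while you go directly to $C_2'$ via the adjunction; these routes are equivalent. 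One small point worth making explicit in either version is that the word-level encoding commutes with normalization, i.e.\ $\nf{g(w)}=g(\nf{w})$ for $w\in\Delta_m^*$; this is what ties the injectivity on normal-form words to injectivity of the induced map on $C_m'$.
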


\eject

\begin{proof}
	Write $\Delta_m = P_m \dotcup Q_m$ with $P_m=\{p_0,\ldots,p_{m-1}\}$, $Q_m=\{q_0,\ldots,q_{m-1}\}$,
	but for $\Delta_2 = P_2 \dotcup Q_2$, use $b, p$ for $p_0, p_1$ and $d, q$ for $q_0, q_1$.
	Let $\ol{\,\cdot\,}: \Delta_m^* \to \Delta_2^*$ be the homomorphism generated
        by the coding of $\Delta_m$ in $\Delta_2^*$ by
	\[
		\ol{p_i}= bp^{i+1}, \quad \ol{q_i} = q^{i+1}d, \quad\textrm{for }i < m.
	\]
	The functor $\CR$ lifts $\ol{\,\cdot\,}$ by $\ol A := \setof{\ol w}{w\in A}$ to a monotone
        homomorphism $\ol{\,\cdot\,}:\CR\Delta_m^* \to \CR\Delta_2^*$;
        since the supremum $\sum$ on $\CR\CR\Delta_m^*$ and $\CR\CR\Delta_2^*$ is the union of sets, $\ol{\,\cdot\,}$ is an $\CR$-morphism.
	Let $\rho_m$ be the $\CR$-congruence on $\CR\Delta_m^*$ generated by the (semiring) equations
	\[
		p_iq_i=1, \quad p_iq_j=0, \quad \textrm{for }i\not=j < m.
	\]
	Then clearly
	\[
		\ol{p_i}\ol{q_j}/\rho_2 = bp^{i+1}q^{j+1}d/\rho_2 = \delta_{i,j} = p_iq_j/\rho_m
	\]
	and
	\[
		(b\cdot \ol{q_j})/\rho_2 = bq^{j+1}d/\rho_2 = 0 = bp^{i+1}d/\rho_2 = (\ol{p_i}\cdot d)/\rho_2.
	\]
	Extend the $\CR$-morphism $\ol{\,\cdot\,}:\CR\Delta_m^* \to \CR\Delta_2^*$ to a map $g: C_m' \to C_2'$ by
	\[
		g(A/\rho_m) := \ol{A}/\rho_2 \quad\textrm{for }A \in \CR\Delta_m^*.
	\]
	This map is well-defined and injective:
	by Proposition \ref{prop-C2'}, $A/\rho_m$ is represented by a set of strings in normal form, $\setof{\nf{w}}{w \in A} \setminus \{0\} \subseteq Q_m^*P_m^*$,
	and $\ol{\,\cdot\,}$ maps $Q_m^*P_m^*$ injectively to a set of normal form strings of $Q_2^*P_2^*$.

	Clearly, $g: C_m' \to C_2'$ is a monotone semiring morphism.
	Since $\cdot/\rho_m: \CR\Delta_m^* \to C_m'$ is surjective, $g: C_m' \to C_2'$ is an $\CR$-morphism:
	for each $U \in \CR C_m'$ there is $V \in \CR\Delta_m^*$ such that $U = \setof{A/\rho_m}{A \in V}$, hence
	\begin{eqnarray*}
		g(\sum U)	&=& g((\bigcup V)/\rho_m) = g(\bigcup V)/\rho_2
	\\			&=& (\Union{g(A)}{A\in V})/\rho_2
	\\			&=& \Sum{g(A)/\rho_2}{A\in V}
	\\			&=& \Sum{g(A/\rho_m)}{A\in V}
	\\ \hspace{4.2cm}	&=& \Sum{g(B)}{B\in U}. \hspace{4cm}
	\end{eqnarray*}
\end{proof}
Based on Lemma \ref{lem-recoding-C'}, in the following we state most results only for $m=2$.

\subsection{The tensor product $K\xR C$ of $\CR$-dioids $K$ and $C$}\label{sec-tensorproduct}

Two maps $\embeddings{f}{M_1}{M}{M_2}{g}$ to a monoid $M$ are \blue{relatively commuting} if $f(m_1)g(m_2) = g(m_2)f(m_1)$ for all $m_1\in M_1$ and $m_2\in M_2$.
In a category whose objects have a monoid structure, a \blue{tensor product} of two objects $M_1$ and $M_2$ is an object $M_1 \otimes M_2$
with two relatively commuting morphisms $\tensorproduct M\top$ such that for any pair $\embeddings f{M_1}M{M_2}g$ of relatively commuting morphisms
there is a unique morphism $h_{f,g}: M_1 \otimes M_2 \to M$ with $f=h_{f,g} \circ \top_1$ and $g=h_{f,g} \circ \top_2$.
That is, the diagram
\[
\begin{tikzcd}[row sep=1.7cm, column sep=1.2cm]
  M_1 \arrow[rd,"f\,\ " below] \arrow[r, "\top_1"]
  & M_1 \otimes M_2 \arrow[d,dotted, "h_{f,g}"]
  & \arrow[ld,"\!g"] \arrow[l,"\top_2" above] M_2
 \\        &       M         &
\end{tikzcd}
\]
can be uniquely completed as shown.
Intuitively, the tensor product $M_1 \otimes M_2$ is the free extension of $M_1$ and $M_2$ in which elements of $M_1$ commute with those of $M_2$.

In the category of monoids, $M_1 \otimes M_2$ is the cartesian product $M_1\times M_2$ with componentwise unit and product, and $h_{f,g}(m_1, m_2) = f(m_1) \cdot g(m_2)$.
The category $\D\CR$ of \Star-continuous Kleene algebras also has tensor products:
\begin{theorem}[Theorem 4 of \cite{LeissHopkins18a}]\label{thm-tensorproducts}
	Let $K_1, K_2$ be $\CR$-dioids and $M_1, M_2$ their multiplicative monoids.
	The tensor product of $K_1, K_2$ is
	\[
		K_1\xR K_2 := \CR(M_1\times M_2)/_\equiv,
	\]
	the quotient of the regular sets $\CR(M_1\times M_2)$ of the monoid product $M_1\times M_2$ by the $\CR$-congruence $\equiv$ generated by the ``tensor product equations''
	\[
		\setof{A\times B = \{(\sum A, \sum B)\}}{A\in \CR M_1, B\in\CR M_2}.
	\]
Since the natural embeddings of $M_1,M_2$ in $M_1\times M_2$
 lift $A \in \CR M_1$ and $B\in\CR M_2$ to sets in $\CR(M_1\times M_2)$,
\[
	A\times B = (A\times\{1\})(\{1\}\times B) \in \CR(M_1\times M_2).
\]
	The $\CR$-morphisms $\product K{\top}\xR$ are $\top_1(a) := \{(a, 1)\}/_\equiv$ for $a \in K_1$ and $\top_2(b) = \{(1, b)\}/_\equiv$ for $b \in K_2$.
	For a pair of commuting $\CR$-morphisms $\embeddings f{K_1}K{K_2}g$ to an $\CR$-dioid $K$, the induced map is
	\[
		h_{f,g}(R/_\equiv) := \Sum{f(a)g(b)}{(a,b)\in R}, \quad R \in \CR(M_1\times M_2).
	\]
\end{theorem}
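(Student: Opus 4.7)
The plan is to verify that the construction in the statement satisfies the four requirements of a tensor product in $\D\CR$: (i) $K_1\xR K_2$ is an $\CR$-dioid, (ii) $\top_1, \top_2$ are $\CR$-morphisms, (iii) they relatively commute, and (iv) for every commuting pair $(f,g)$ there is a unique factoring $\CR$-morphism $h_{f,g}$. The single piece of structure doing all the non-trivial work is the generating relation $A\times B \equiv \{(\sum A, \sum B)\}$ defining $\equiv$.

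Task (i) is immediate: $\CR(M_1\times M_2)$ is an $\CR$-dioid, $\equiv$ is an $\CR$-congruence by construction, so the previously cited Proposition 1 applies. For (ii), the dioid-morphism clauses for $\top_1$ reduce to trivial cases of the tensor equation: taking $A=\{a,a'\}, B=\{1\}$ handles $+$; taking $A=\emptyset$ handles $0$; and $\cdot$ is direct from $\{(a,1)\}\{(a',1)\} = \{(aa',1)\}$. The critical $\CR$-sup clause is the whole point of the construction: for $A\in\CR K_1$, writing $A\times\{1\} = \bigcup_{a\in A}\{(a,1)\}$ exhibits $A\times\{1\}$ as a $\CR$-sup of singletons in $\CR(M_1\times M_2)$, and applying the tensor equation $A\times\{1\}\equiv\{(\sum A,1)\}$ gives $\sum_{a\in A}\top_1(a) = \top_1(\sum A)$ in the quotient. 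The argument for $\top_2$ is symmetric, and (iii) is then immediate since $\{(a,1)\}\{(1,b)\} = \{(a,b)\} = \{(1,b)\}\{(a,1)\}$ already holds in $\CR(M_1\times M_2)$.

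For (iv), I introduce the lifted map $\tilde h_{f,g}\colon \CR(M_1\times M_2)\to K$ by the formula in the statement applied to an arbitrary $R\in\CR(M_1\times M_2)$, and verify it is an $\CR$-morphism. Preservation of $+, 0, 1$ is immediate; preservation of $\cdot$ uses the commutation hypothesis to rearrange $f(a)g(b)f(a')g(b') = f(aa')g(bb')$; and preservation of $\CR$-sups uses $\CR$-distributivity in $K$ together with the fact that $\CR$-sups in $\CR(M_1\times M_2)$ are unions. The kernel of $\tilde h_{f,g}$ is then an $\CR$-congruence, and it contains the generating tensor equations because
\[
  \tilde h_{f,g}(A\times B) \;=\; \sum_{(a,b)\in A\times B}\! f(a)g(b) \;=\; \bigl(\sum_{a\in A} f(a)\bigr)\bigl(\sum_{b\in B} g(b)\bigr) \;=\; f(\sum A)\,g(\sum B) \;=\; \tilde h_{f,g}(\{(\sum A,\sum B)\}),
\]
using $\CR$-distributivity in $K$ and the $\CR$-morphism property of $f, g$. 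Hence $\tilde h_{f,g}$ factors through $\equiv$ as the required $h_{f,g}$; the triangles commute by inspection on singletons $\{(a,1)\}/_\equiv$ and $\{(1,b)\}/_\equiv$; and uniqueness follows because any competing $h$ must agree with $h_{f,g}$ on every product $\top_1(a)\top_2(b) = \{(a,b)\}/_\equiv$, hence on all $\CR$-sups of such products, hence everywhere.

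The main obstacle is the $\CR$-sup preservation of $\tilde h_{f,g}$: given a regular family $V\in\CR(\CR(M_1\times M_2))$ one must argue that $\tilde h_{f,g}(V)$ is regular in $K$ and that $\tilde h_{f,g}(\bigcup V) = \sum \tilde h_{f,g}(V)$. This interchange of two nested sums is exactly what $\CR$-distributivity in $K$ and the $\CR$-morphism hypothesis on $f, g$ are designed to support, but getting regular families of regular sets into a form where those hypotheses can be invoked is where the bookkeeping is most delicate; once that is done, everything else is routine.
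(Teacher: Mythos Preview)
The paper does not prove this theorem; it is stated without proof as a citation of Theorem~4 of \cite{LeissHopkins18a}, so there is no ``paper's own proof'' against which to compare. Your outline is the natural verification that the quotient construction satisfies the universal property of a tensor product in $\D\CR$, and it matches the way such results are established in the cited source: check that the quotient is an $\CR$-dioid, that $\top_1,\top_2$ are commuting $\CR$-morphisms, and that the formula for $h_{f,g}$ gives a well-defined $\CR$-morphism factoring any commuting pair.

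Your identification of the one delicate point is accurate: showing that $\tilde h_{f,g}$ preserves suprema of regular families is where the work lies. One clean way to handle it, which you may want to spell out, is to factor $\tilde h_{f,g}$ through the free extension of the monoid map $(a,b)\mapsto f(a)g(b)$: the functor $\CR$ applied to this monoid homomorphism $M_1\times M_2\to K$ gives an $\CR$-morphism $\CR(M_1\times M_2)\to \CR K$, and composing with the counit $\epsilon_K=\sum:\CR K\to K$ (which is an $\CR$-morphism by the adjunction $(\CR,\widehat\CR,\eta,\epsilon)$ mentioned in the paper) yields exactly your $\tilde h_{f,g}$ as a composite of $\CR$-morphisms. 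This avoids any ad hoc bookkeeping with nested regular families. With that adjustment, your argument is complete.
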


For $a\in K_1$ and $b\in K_2$, the tensor $\top_1(a)\top_2(b) = \{(a,b)\}/_\equiv$\,
is written $a\otimes b$, but when $K_1$ and $K_2$ are disjoint, we simply use $ab$.
(If they are not disjoint, $ab$ could also mean $(ab\otimes 1)$ or $(1\otimes ab)$.)
Notice that if $a=0$ in $K_1$ or $b=0$ in $K_2$, then $a\otimes b = 0$ in $K_1\xR K_2$,
for if, say, $a=0$, then
\[\{(0,b)\}=\{(\sum\emptyset,\sum\{b\})\} \equiv \emptyset\times\{b\} = \emptyset.\]
It follows that $K_1\xR K_2$ is trivial if $K_1$ or $K_2$ is trivial.

\begin{proposition}[Proposition 7 of \cite{LeissHopkins18a}]\label{prop-R(MxN)}
If $M_1$ and $M_2$ are monoids, then $\CR M_1 \xR \CR M_2 \simeq \CR(M_1\times M_2)$.
\end{proposition}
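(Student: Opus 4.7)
The plan is to verify that $\CR(M_1\times M_2)$, equipped with suitable embeddings, satisfies the universal property that characterizes the tensor product $\CR M_1 \xR \CR M_2$ in $\D\CR$. The cleanest route exploits the adjunction $(\CR,\widehat\CR,\eta,\epsilon)$ between $\M$ and $\D\CR$ together with the fact that in $\M$ the tensor product is just the cartesian product $M_1\times M_2$.

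I would introduce the candidate embeddings as the images under the functor $\CR$ of the monoid morphisms $\iota_1\colon M_1 \to M_1\times M_2$, $a\mapsto (a,1)$, and $\iota_2\colon M_2 \to M_1\times M_2$, $b\mapsto (1,b)$. Concretely, $\top_1(A)=A\times\{1\}$ and $\top_2(B)=\{1\}\times B$. Because $\CR$ is a functor into $\D\CR$, the $\top_i$ are $\CR$-morphisms, and they are relatively commuting since $\top_1(A)\cdot\top_2(B)=A\times B=\top_2(B)\cdot\top_1(A)$ inside $\CR(M_1\times M_2)$. Now given any $\CR$-dioid $K$ and any relatively commuting pair of $\CR$-morphisms $f\colon\CR M_1\to K$, $g\colon\CR M_2\to K$, the adjunction turns these into monoid morphisms $\tilde f = f\circ\eta_{M_1}$ and $\tilde g = g\circ\eta_{M_2}$ from $M_i$ into the multiplicative monoid of $K$, and these restrictions to singletons remain relatively commuting. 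By the universal property of $M_1\times M_2$ as the tensor product in $\M$, $\tilde f$ and $\tilde g$ factor through a unique monoid morphism $\varphi\colon M_1\times M_2 \to K$, namely $\varphi(a,b)=\tilde f(a)\tilde g(b)$. The adjunction then yields a unique $\CR$-morphism $h\colon\CR(M_1\times M_2)\to K$ corresponding to $\varphi$, and $h\circ\top_1 = f$, $h\circ\top_2 = g$ hold because both sides are $\CR$-morphisms agreeing on singletons.

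The main obstacle is checking that the adjunction transports the relative-commutation condition in both directions. Restricting $\CR$-morphisms to singletons preserves the condition trivially, so the subtle direction is the converse: if $\tilde f,\tilde g$ commute on singletons, then their $\CR$-morphism extensions $f,g$ commute on all regular sets. This uses that each $A\in\CR M_i$ is a regular supremum of the singletons it contains, that $\CR$-morphisms preserve regular suprema, and that multiplication in an $\CR$-dioid is $\CR$-distributive, so $f(A)g(B)=\sum_{(a,b)\in A\times B}\tilde f(a)\tilde g(b)=\sum_{(a,b)}\tilde g(b)\tilde f(a)=g(B)f(A)$. Once this is in place, uniqueness of $h$ follows by running the same adjunction argument backwards, yielding the desired isomorphism $\CR M_1 \xR \CR M_2 \simeq \CR(M_1\times M_2)$.
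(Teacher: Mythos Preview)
Your proposal is correct and follows essentially the same route as the paper: both verify that $\CR(M_1\times M_2)$, equipped with $\top_1(A)=A\times\{1\}$ and $\top_2(B)=\{1\}\times B$, satisfies the universal property of the tensor product, and the $h$ you obtain via the adjunction is precisely the paper's explicitly defined $h_{f,g}$. One small remark: the ``converse'' direction you flag as the main obstacle is not actually needed for the argument as you structure it --- uniqueness of $h$ already follows from the freeness of $\CR(M_1\times M_2)$ over $M_1\times M_2$ (two $\CR$-morphisms agreeing on singletons are equal), and existence uses only the trivial forward direction.
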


\begin{shortproof}
Let $\top_1(A)= A\times \{1\}$ for $A\in \CR
M_1$ and $\top_2(B)= \{1\}\times B$ for $B\in \CR M_2$ in
\[
\begin{tikzcd}[row sep=1.8cm, column sep=1cm]
    \CR{M_1} \arrow[rd,"f\,\ " below] \arrow[r,"\top_1"]
    & \CR{(M_1\times M_2)} \arrow[d,dotted,"h_{f,g}"]
    & \arrow[ld,"\!g"]\arrow[l,"\top_2" above] \CR{M_2}
\\                 & K &
\end{tikzcd}
\]
and put $h_{f,g}(S) = \Sum{f(\{a\})g(\{b\})}{(a,b)\in S}$ for $S\in \CR(M_1\times M_2)$
and commuting $\CR$-morphisms $f,g$ to an $\CR$-dioid $K$. These satisfy the
properties of a tensor product of $\CR M_1$ and $\CR M_2$, so the claim holds by the
uniqueness of tensor products.
\end{shortproof}
In the following, for $\CR$-dioids $K_1, K_2$, we also write $K_1\times K_2$ for the
product of their underlying multiplicative monoids,
and for $R \in \CR(K_1\times K_2)$, we write \blue{$[R]$} instead of $R/_\equiv$.
For $R, S \in \CR(K_1\times K_2)$, one has $[R]+[S] = [R \cup S]$, $[R][S] = [RS]$, and
\[
  [R]^*	= \Sum{[R]^n}{n\in\N} = \Sum{[R^n]}{n\in\N} = [\,\Union{R^n}{n\in\N}\,] = [R^*].
\]
Notice also that $[R]=[\bigcup\setof{\{(a,b)\}}{(a,b)\in R}] = \Sum{a\otimes b}{(a,b)\in R}$.

The $\CR$-morphisms in $\product K{\top}\xR$ are embeddings, unless one of
$K_1$ and $K_2$ is trivial and the other is not:

\begin{lemma}\label{lem-non-triv}
  Let $K_1$ and $K_2$ be non-trivial $\CR$-dioids. Then the tensor product
  \[ \product K\top{\xR} \]
  is non-trivial, and $\top_1$ and $\top_2$ are embeddings.
\end{lemma}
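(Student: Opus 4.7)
The two assertions are linked: once $\top_1$ (or $\top_2$) is known to be injective, one immediately gets $\top_1(1_{K_1})=1\neq 0=\top_1(0_{K_1})$ in $K_1\xR K_2$, so non-triviality is a consequence. By symmetry it therefore suffices to prove $\top_1$ is an embedding; the argument for $\top_2$ is the same with the roles of the factors swapped.

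To prove $\top_1$ is injective, I would invoke the universal property of the tensor product (Theorem~\ref{thm-tensorproducts}): any commuting pair of $\CR$-morphisms $f\colon K_1\to L$ and $g\colon K_2\to L$ induces an $\CR$-morphism $h_{f,g}\colon K_1\xR K_2\to L$ satisfying $h_{f,g}\circ\top_1=f$, so an injective $f$ forces an injective $\top_1$. The simplest natural choice is $L=K_1$, $f=\id_{K_1}$, and $g\colon K_2\to K_1$ an $\CR$-morphism whose image lies in the center $Z(K_1)$, so that $g(K_2)$ commutes with the image of $f=\id_{K_1}$. Because $\mathbb{B}=\{0,1\}$ is a central subalgebra of any non-trivial Kleene algebra, it would suffice to produce an $\CR$-morphism $g_0\colon K_2\to\mathbb{B}$ and compose with the inclusion $\mathbb{B}\hookrightarrow K_1$.

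The main obstacle is then the construction of $g_0$. The naive ``support'' candidate $g_0(b)=0\iff b=0_{K_2}$ is additive and star-compatible, but it is multiplicative precisely when $K_2$ has no zero-divisors. In the presence of zero-divisors---for instance when $K_2$ contains a polycyclic $\CR$-dioid, where $p_iq_j=0$ for $i\neq j$---this simple recipe fails, and no morphism $K_2\to\mathbb{B}$ need exist at all (compare, e.g., the behaviour of $\Mat{2}{2}{\mathbb{B}}$). The real argument must therefore either enlarge the ambient algebra $L$ to one in which both $K_1$ and $K_2$ embed with commuting images (a matrix or operator algebra would be natural), or proceed directly by working inside the description $K_1\xR K_2=\CR(M_1\times M_2)/{\equiv}$ of Theorem~\ref{thm-tensorproducts}.

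In the latter, direct approach the task is to exhibit an $\CR$-morphism on $\CR(M_1\times M_2)$ that respects the generators $A\times B\equiv\{(\sum A,\sum B)\}$ of $\equiv$, takes $\{(1_{M_1},1_{M_2})\}$ to $1$ and $\emptyset$ to $0$, and in particular separates the singletons $\{(a,1_{M_2})\}$ for distinct $a\in K_1$. Verifying the multiplicativity of such a semantic valuation in the presence of zero-divisors on both sides is, I expect, the main technical step; once this is done, injectivity of $\top_1$ (hence of $\top_2$) and non-triviality of $K_1\xR K_2$ follow at once from the universal property.
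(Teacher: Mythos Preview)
Your diagnosis is accurate but you have not actually produced a proof: you correctly observe that a retraction $K_2\to\B$ need not exist when $K_2$ has zero divisors, and you correctly conclude that one must work directly with the description $K_1\xR K_2=\CR(K_1\times K_2)/{\equiv}$, but you stop at the point where the real work begins. Saying ``exhibit an $\CR$-morphism that respects the generators and separates singletons'' is a restatement of the problem, not a solution; the whole difficulty is that no obvious target algebra presents itself once the $\B$-valued support map is ruled out.

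The paper's argument does not build a morphism out of $\CR(K_1\times K_2)$ at all. Instead it defines a binary relation $P$ on $\CR(K_1\times K_2)$ that is visibly coarser than equality, visibly fine enough to separate $\{(a,1)\}$ from $\{(a',1)\}$ for $a\neq a'$, and then shows $\equiv\,\subseteq P$ by induction on the generation of $\equiv$. Concretely, with $Z=\{(a,b):a=0\text{ or }b=0\}$, one declares $P(R,S)$ to hold when, for every two-sided context $(a,b)\cdot(-)\cdot(a',b')$ and every $(x,y)\in K_1\times K_2$, the sets $(a,b)R(a',b')\setminus Z$ and $(a,b)S(a',b')\setminus Z$ have the same upper bounds with respect to the componentwise order. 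Excising $Z$ is precisely the device that absorbs zero divisors: the tensor product equations $A\times B\equiv\{(\sum A,\sum B)\}$ are compatible with $P$ because $aAa'\times bBb'$ is either contained in $Z$ (when $\sum aAa'=0$ or $\sum bBb'=0$) or has the same upper bounds as the singleton $\{(\sum aAa',\sum bBb')\}$; the product closure step is handled by the universal quantification over contexts. This upper-bound invariant, rather than any algebra homomorphism, is the missing idea in your sketch.
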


\begin{proof} 
An element $(x,y)\in K_1\times K_2$ is an \blue{upper bound of} $R\subseteq K_1\times
K_2$, written $R\preceq (x,y)$, if $a\leq x$ and $b\leq y$ for all $(a,b)\in R$. Let
$Z = \setof{(a,b)\in K_1\times K_2}{a=0 \textrm{ or }b=0}$. For $R,S\in \CR(K_1\times
K_2)$, define $P(R,S)$ by
  \begin{equation}\label{eqn-Q}
    \forall (x,y),(a,b),(a',b')[ (a,b)R(a',b')\setminus Z \preceq (x,y)
      \iff  (a,b)S(a',b')\setminus Z \preceq (x,y)].
  \end{equation}
  With $(a,b)=(a',b')=(1,1)$, (\ref{eqn-Q}) says that $R\setminus Z$ and $S\setminus
  Z$ have the same upper bounds in $K_1\times K_2$.  In particular, for $R\subseteq
  Z$ and $S\not\subseteq Z$, $P(R,S)$ is false, since $(0,0)$ is an upper bound of
  $R\setminus Z$, but not of $S\setminus Z$.  We defer the proof of
  $\equiv\,\subseteq P$ to the appendix, Section \ref{sec-appendix}. Then for any
  $(a,b)\notin Z$, $\{(0,0)\}\not\equiv \{(a,b)\}$, hence $0\otimes 0 \not=a\otimes
  b$. As $(1,1)\notin Z$, $0 = 0\otimes 0 \not= 1\otimes 1 = 1$, so $K_1\xR K_2$ is
  non-trivial.  Furthermore, if $(a,b)$ and $(a',b')$ are different elements of
  $(K_1\times K_2)\setminus Z$, then $a\otimes b \not= a'\otimes b'$, because
  $\{(a,b)\}\equiv\{(a',b')\}$ implies, via (\ref{eqn-Q}), that $(a,b)$ is an upper
  bound of $\{(a',b')\}$ and $(a',b')$ is an upper bound of $\{(a,b)\}$, so
  $(a,b)=(a',b')$.  In particular, $\top_1$ and $\top_2$ are injective.
\end{proof}

\begin{corollary}\label{cor-tensor-ordering}
  If $K_1\xR K_2$ is the tensor product of non-trivial $\CR$-dioids
  $K_1$ and $K_2$, then for all $a,a'\in K_1$ and $b,b'\in K_2$,
  \begin{enumerate}
  \item if $a\otimes b = 0$, then $a = 0$ in $K_1$ or $b = 0$ in $K_2$,
  \item if $0\not=a\otimes b\leq a'\otimes b'$ in $K_1\xR K_2$, then $0\not=a\leq a'$ in
    $K_1$ and $0\not=b\leq b'$ in $K_2$.
  \end{enumerate}
\end{corollary}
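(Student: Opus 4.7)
The plan is to derive both parts directly from Lemma \ref{lem-non-triv} and the containment $\equiv\;\subseteq P$ that is established in its proof (deferred to the appendix), together with the elementary facts $[R]+[S]=[R\cup S]$ and $0=0\otimes 0=[\{(0,0)\}]$ recorded after Proposition \ref{prop-R(MxN)}.

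For part (1) I would argue by contraposition. Assume $a\not=0$ in $K_1$ and $b\not=0$ in $K_2$, so $(a,b)\notin Z$. In the proof of Lemma \ref{lem-non-triv} it is shown that any two $\equiv$-equivalent singletons outside $Z$ must be equal; more importantly, a singleton outside $Z$ cannot be $\equiv$-equivalent to one inside $Z$, since $\{(a,b)\}\setminus Z=\{(a,b)\}$ has $(a,b)$ among its upper bounds while $\{(0,0)\}\setminus Z=\emptyset$ has only upper bounds of the form $(x,y)$, and these two sets of upper bounds cannot coincide. Hence $[\{(a,b)\}]\not=[\{(0,0)\}]$, i.e.\ $a\otimes b\not=0$.

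For part (2) I would first apply (1): from $a\otimes b\not=0$ we get $a,b\not=0$, and from $a'\otimes b'\geq a\otimes b\not=0$ we also get $a',b'\not=0$, so both $(a,b)$ and $(a',b')$ lie in $(K_1\times K_2)\setminus Z$. The hypothesis $a\otimes b\leq a'\otimes b'$ unfolds, via $[R]+[S]=[R\cup S]$, to
\[
  [\{(a,b),(a',b')\}] = [\{(a',b')\}],
\]
that is, $\{(a,b),(a',b')\}\equiv\{(a',b')\}$. Using $\equiv\,\subseteq P$ and instantiating (\ref{eqn-Q}) at $(x,y)=(a',b')$, the right-hand side $\{(a',b')\}\setminus Z=\{(a',b')\}$ is trivially $\preceq(a',b')$; therefore the left-hand side $\{(a,b),(a',b')\}\setminus Z=\{(a,b),(a',b')\}$ must also be $\preceq(a',b')$, which says exactly that $a\leq a'$ in $K_1$ and $b\leq b'$ in $K_2$.

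No serious obstacle remains, since the heavy lifting, $\equiv\,\subseteq P$, is done in Lemma \ref{lem-non-triv}. The only care needed is the translation between elements of $K_1\xR K_2$ written as tensors and as bracketed regular sets, and the observation that the partial order $\phi\leq\psi$ in the quotient $\CR(K_1\times K_2)/_\equiv$ corresponds to $\phi+\psi=\psi$ there, which via $[R]+[S]=[R\cup S]$ reduces the ordering statement to the single $\equiv$-identity above.
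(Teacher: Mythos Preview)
Your proof is correct and follows essentially the same route as the paper: both parts reduce to the containment $\equiv\,\subseteq P$ from Lemma~\ref{lem-non-triv}, and your argument for (2) is verbatim the paper's. One small remark on (1): your justification that the upper-bound sets of $\{(a,b)\}$ and $\emptyset$ differ is phrased loosely---saying that $(a,b)$ is an upper bound of $\{(a,b)\}$ does not distinguish them, since $(a,b)$ is also an upper bound of $\emptyset$; the clean witness (which the paper uses) is that $(0,0)$ is an upper bound of $\emptyset=\{(0,0)\}\setminus Z$ but not of $\{(a,b)\}$.
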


\begin{proof}
  For (i), if $a\not=0$ and $b\not=0$, then $\{(0,0)\}\not\equiv \{(a,b)\}$ by the
  previous proof, which just means that $a\otimes b \not=0$ in $K_1\xR K_2$.  For
  (ii), suppose $0\not= a\otimes b \leq a'\otimes b'$. Then $0\not=a'\otimes b'$ too,
  and $a\not=0\not=a'$ in $K_1$ and $b\not=0\not=b'$ in $K_2$. Since $a\otimes b +
  a'\otimes b' = a' \otimes b'$, we have
  \[ \{(a,b),(a',b')\} \equiv \{(a',b')\}, \]
  and since $\equiv\,\subseteq P$ for the predicate $P$ in the proof of Lemma
  \ref{lem-non-triv}, for any $(x,y)$ we have, by (\ref{eqn-Q}), 
  \[ \{(a,b),(a',b')\} \preceq (x,y) \Leftrightarrow \{(a',b')\}\preceq (x,y). \]
  The right-hand side is true for $(x,y)=(a',b')$, so $(a,b)\leq (a',b')$ holds by the left-hand side.
\end{proof}

\begin{corollary}\label{cor-injectivity}
Let $f$ and $g$ be injective $\CR$-morphisms between non-trivial $\CR$-dioids in
\[
\begin{tikzcd}[row sep=1.6cm, column sep=1.1cm]
  K_1 \arrow[r,"\top_1"] \arrow[d,"f"]
  & K_1\xR K_2 \arrow[d,dotted,"{h_{\fg}}"]
  & \arrow[l,"\top_2" above]\arrow[d,"g"] K_2
\\    K_1' \arrow[r,"\top_1'"] & K_1'\xR K_2' & \arrow[l,"\top_2'" above] K_2'.
\end{tikzcd}
\]
Then $h_{\fg} : K_1\xR K_2\to K'_1\xR K'_2$, the induced
$\CR$-morphism for $\top_1'\circ f$ and $\top_2'\circ g$, is injective.
\end{corollary}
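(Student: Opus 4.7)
The plan is to prove the contrapositive: if $[R]\neq[S]$ in $K_1\xR K_2$, then $h_{\fg}([R])\neq h_{\fg}([S])$ in $K_1'\xR K_2'$.  The central tool is the inclusion $\equiv\,\subseteq P$ established in the proof of Lemma \ref{lem-non-triv}, applied contrapositively both in $\CR(K_1\times K_2)$ to extract separating witnesses and in $\CR(K_1'\times K_2')$ to produce the contradiction.  Two elementary consequences of $f,g$ being injective dioid morphisms between non-trivial dioids are needed throughout: they reflect the natural partial order ($f(a)\leq f(b)$ iff $f(a+b)=f(b)$, hence by injectivity iff $a\leq b$) and they reflect zero.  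Writing $F$ for the pointwise lift of $f\times g$ to $\CR(K_1\times K_2)\to\CR(K_1'\times K_2')$, these observations yield $h_{\fg}([R])=[F(R)]$, $(f(u),g(v))\in Z'$ iff $(u,v)\in Z$, and $F(T)\preceq(f(x),g(y))$ iff $T\preceq(x,y)$ for any $T\subseteq K_1\times K_2$.

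Assuming $[R]\neq[S]$, without loss of generality $[R]\not\leq[S]$, i.e.\ $R\cup S\not\equiv S$ in $\CR(K_1\times K_2)$.  By the contrapositive of $\equiv\,\subseteq P$, there exist witnesses $(a,b),(c,d),(x,y)\in K_1\times K_2$ at which $P(R\cup S,S)$ fails.  Because $(a,b)(R\cup S)(c,d)\setminus Z$ is the union of $(a,b)R(c,d)\setminus Z$ and $(a,b)S(c,d)\setminus Z$, and a union has $(x,y)$ as upper bound iff each of its parts does, only one direction of the failure is possible, namely
\[
        (a,b)R(c,d)\setminus Z \not\preceq (x,y),
        \quad\textrm{while}\quad
        (a,b)S(c,d)\setminus Z \preceq (x,y).
\]
Supposing for contradiction that $h_{\fg}([R])=h_{\fg}([S])$, we have $F(R)\equiv' F(S)$ and hence $P'(F(R),F(S))$.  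Instantiating $P'$ at the images $(f(a),g(b)),(f(c),g(d)),(f(x),g(y))$ and using $(f(a),g(b))F(R)(f(c),g(d))=F((a,b)R(c,d))$ together with the two translations of the first paragraph, the $P'$-equivalence at these images reduces exactly to the $P$-equivalence at $(a,b),(c,d),(x,y)$, contradicting the witness choice.

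The main subtlety is recognizing that failure of $P(R\cup S,S)$ delivers the witnesses in a definite direction (the $R$-part overflowing the bound while the $S$-part meets it); without this observation the passage through $P'$ would only yield a tautology rather than a contradiction.  The role of the injectivity hypothesis on $f$ and $g$ is precisely to make the order- and zero-reflection available, so that $P$-conditions translate faithfully between $\CR(K_1\times K_2)$ and $\CR(K_1'\times K_2')$; the remainder is a routine unfolding of definitions.
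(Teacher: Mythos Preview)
Your argument has a genuine gap at the step where you extract witnesses.  You write: ``$R\cup S\not\equiv S$ \ldots\ By the contrapositive of $\equiv\,\subseteq P$, there exist witnesses \ldots\ at which $P(R\cup S,S)$ fails.''  But the contrapositive of $\equiv\,\subseteq P$ is $\neg P(R,S)\Rightarrow R\not\equiv S$, not the direction you need.  To pass from $R\cup S\not\equiv S$ to $\neg P(R\cup S,S)$ you would need $P\subseteq\,\equiv$, i.e.\ that $P$ is no coarser than $\equiv$.  The appendix proof shows only that $P$ is an $\CR$-congruence containing the tensor product equations, hence $\equiv\,\subseteq P$ by leastness of $\equiv$; it does not show equality, and a priori $P$ may identify more pairs than $\equiv$ does.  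So the witness-extraction step is unjustified and the rest of the argument, which hinges on those witnesses, does not go through.

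The paper takes a different route that avoids this trap.  Rather than trying to invert $\equiv\,\subseteq P$, it works inside the tensor products themselves: from $[R]\neq[S]$ one gets that the downward closures of $\{a\otimes b:(a,b)\in R\}$ and $\{a\otimes b:(a,b)\in S\}$ in $K_1\xR K_2$ differ.  Corollary~\ref{cor-tensor-ordering} (itself a consequence of $\equiv\,\subseteq P$) characterizes the order among nonzero simple tensors componentwise, and your observation that injective dioid morphisms reflect order and zero then transfers the inequality of downward closures to $\{fa\otimes' gb:(a,b)\in R\}$ versus $\{fa\otimes' gb:(a,b)\in S\}$ in $K_1'\xR K_2'$.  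The last step concludes $h([R])\neq h([S])$.  If you want to repair your approach, the clean fix is to abandon the attempt to produce $P$-witnesses from $\not\equiv$ and instead use Corollary~\ref{cor-tensor-ordering} directly: from $[R]\not\leq[S]$ obtain a simple tensor $a\otimes b$ with $(a,b)\in R$ that is not below any $a'\otimes b'$ with $(a',b')\in S$, then push this through $f\times g$ using order reflection.
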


\begin{proof}
By Lemma \ref{lem-non-triv}, the $\CR$-morphisms $\top_1,\top_2,\top_1',\top_2'$ are
embeddings. The homomorphism $f\times g : K_1\times K_2\to K'_1\times K'_2$ lifts to
a monotone homomorphism $\fg:\CR(K_1\times K_2)\to \CR(K'_1\times K'_2)$.  Since
$\top_1'\circ f$ and $\top_2'\circ g$ are commuting, they induce an $\CR$-morphism $h
= h_{\fg}$. For $R\in\CR(K_1\times K_2)$, it maps $[R]\in K_1\xR K_2$ to
\[ h([R]) = [\fg(R)]' = \sum{'}\setof{fa\otimes' gb}{(a,b)\in R}
\in K'_1\xR K'_2,
\]
where $fa\otimes' gb = \top_1'(fa)\top_2'(gb)$ and $\sum{'}$ is the least upper bound
of the $\CR$-dioid $K'_1\xR K'_2$. In particular, for $(a,b)\in K_1\times K_2$,
\[ h(a\otimes b) = fa\otimes' gb. \]
By Lemma \ref{lem-non-triv}, $h$ is monotone and injective on the image of
$K_1\times K_2$ under $\otimes$.
To see that $h$ is injective, suppose $R,S\in \CR(K_1\times K_2)$ and
\[ [R] = \sum\setof{a\otimes b}{(a,b)\in R} \not=
         \sum\setof{a\otimes b}{(a,b)\in S} = [S].
\]
Then $\setof{a\otimes b}{(a,b)\in R}^\da\not= \setof{a\otimes b}{(a,b)\in S}^\da$ by
the definition of $\sum$ of $K_1\xR K_2$.  Since $h$ is
monotone and injective on the image of $K_1\times K_2$ under $\otimes$,
\[ \setof{fa\otimes' gb}{(a,b)\in R}^\da \not= \setof{fa\otimes' gb}{(a,b)\in S}^\da. \]
Then we must have 
\[ h([R]) = \sum{'}\setof{fa\otimes' gb}{(a,b)\in R}^\da \not=
            \sum{'}\setof{fa\otimes' gb}{(a,b)\in S}^\da = h([S]),
\]
as otherwise $\fg(R)\equiv'\fg(S)$ for the $\CR$-congruence $\equiv'$ on
$\CR(K'_1\times K'_2)$ defining $K'_1\xR K'_2$, and $\equiv'$ were not the \emph{least}
$\CR$-congruence on $\CR(K'_1\times K'_2)$ containing the tensor product equations.
\end{proof}

We will mainly consider tensor products $K\xR C$ where $K=\CR X^*$ and $C$ is a
polycyclic $\CR$-dioid $C_m'$ or bra-ket $\CR$-dioid $C_m$.
For $L\in \CR X^*$, we have $\setof{\{w\}}{w\in L}\in\CR(\CR X^*)$, and since $\top_1$ is an $\CR$-morphism,
  \[ L\otimes 1 = \top_1(\bigcup\setof{\{w\}}{w\in L}) = \sum\setof{\{w\}\otimes 1}{w\in L}
  \in \CR X^*\xR C_2'.\]
The interest in Kleene algebras $\CR X^*\xR C_2'$ comes from the fact that $\CC X^*$,
the set of context-free languages over $X$, embeds in ${\CR X^*}\xR {C_2'}$, via
\[ L \in \CC X^* \mapsto \sum L := \Sum{\{w\}\otimes 1}{w\in L} \in \CR X^*\xR C_2', \]
cf.~Theorem 17 of \cite{Leiss22}. Notice that $L\otimes 1$ need not exist for
non-regular $L$.
Since all elements of $\CR X^*\xR C_2'$ can be denoted by regular expressions over $X
\dotcup \Delta_2$, every \emph{context-free} set $L \subseteq X^*$ is represented by
the value of a \emph{regular} expression.

\begin{example}\label{expl-1}
	Suppose $a, b \in X$.
	Then $L=\setof{a^nb^n}{n\in\N} \in \CC X^*$ is represented in $\CR X^*\xR
        C_2'$ by the value of the regular expression $r_L := \p0 (a\p1)^*(\q1b)^*
        \q0$ over $X \dotcup \Delta_2$. Writing elements of $X$ and $\Delta_2$ for
        their values in $\CR X^*\xR C_2'$, we have
        \[
	\renewcommand{\arraystretch}{1.3}
        \begin{array}{rcl@{\qquad}l}\qquad
          r_L	&=&	\Sum{\p0(a\p1)^n(\q1b)^m\q0}{n,m\in\N}	& \textrm{(\Star-continuity)} \\
	  &=&	\Sum{a^n\p0\p1^n\q1^m\q0b^m}{n,m\in\N}	& \textrm{(relative commutativity)} \\
	  &=&	\Sum{a^nb^n}{n\in\N}			& (\textrm{bracket match } \p{i}\q{j} = \delta_{i,j}).\qquad\triangleleft
	\end{array}
	\]
\end{example}

In the cases $K\xR C_m'$ of our main interest, where $K=\CR X^*$ and the polycyclic
$\CR$-dioid $C_m'\simeq \CR P_m'/\nu$ is a suitable quotient of $\CR P_m'$, the
tensor product construction can be replaced by a quotient construction. This is a
consequence of the following extension of Proposition \ref{prop-R(MxN)}.

\begin{theorem}\label{thm-CA(MxN)/tnu}
Let $M$ be a monoid and $N$ a monoid with annihilating element $0$. Then
\[ \CR M\xR (\CR N/\nu) \simeq \CR(M\times N)/{\tilde\nu}, \]
where $\nu$ is the least $\CR$-congruence on $\CR N$ containing
$(\{0\},\emptyset)$ and ${\tilde\nu}$ is the least $\CR$-congruence on
$\CR(M\times N)$ containing $(\{(1,0)\},\emptyset)$.

Putting $R_\nu := \setof{(A,B/\nu)}{(A,B)\in R}$ for $R\in \CR(\CR
M\times \CR N)$, the isomorphism is given by 
\begin{eqnarray*}
    [R_\nu] &\mapsto& (S_R)/{\tilde\nu}, \quad\textrm{where } S_R :=
    \bigcup\setof{A\times B}{(A,B)\in R} \textrm{ for }R\in
    \CR(\CR M\times \CR N), 
\\ S/{\tilde\nu} &\mapsto& [(R_S)_\nu], 
\quad\textrm{where }R_S := \setof{(\{m\},\{n\})}{(m,n)\in S} \textrm{ for }S\in \CR(M\times N).
\end{eqnarray*}
\end{theorem}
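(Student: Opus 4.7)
The plan is to prove the isomorphism by showing that $\CR(M\times N)/\tilde\nu$, equipped with appropriately induced $\CR$-morphisms $\top_1,\top_2$ from $\CR M$ and $\CR N/\nu$, satisfies the universal property of the tensor product $\CR M\xR (\CR N/\nu)$ in $\D\CR$. Uniqueness of tensor products then delivers the canonical isomorphism, and the explicit formulas in the statement fall out by tracing the induced maps on generators.

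First I would construct the two embeddings. Set $\top_1:\CR M\to \CR(M\times N)/\tilde\nu$ by $A\mapsto (A\times\{1\})/\tilde\nu$, which is an $\CR$-morphism as the composite of the obvious embedding $A\mapsto A\times\{1\}$ (an instance of the construction in Proposition \ref{prop-R(MxN)}) with the quotient. For $\top_2$, start from the $\CR$-morphism $\iota:\CR N\to\CR(M\times N)$, $B\mapsto\{1\}\times B$. Since $\iota(\{0\})=\{(1,0)\}$ is $\tilde\nu$-equivalent to $\emptyset=\iota(\emptyset)$, the kernel of the composite $\CR N\to\CR(M\times N)/\tilde\nu$ is an $\CR$-congruence containing $(\{0\},\emptyset)$, hence contains the least such $\CR$-congruence $\nu$; the composite therefore factors through $\CR N/\nu$ as $\top_2:B/\nu\mapsto(\{1\}\times B)/\tilde\nu$. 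Relative commutativity of $\top_1$ and $\top_2$ is immediate from $(A\times\{1\})(\{1\}\times B) = A\times B = (\{1\}\times B)(A\times\{1\})$ in $\CR(M\times N)$.

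Next I would verify the universal property. Given any $\CR$-dioid $K$ with commuting $\CR$-morphisms $f:\CR M\to K$ and $g:\CR N/\nu\to K$, form $g':=g\circ({\cdot}/\nu):\CR N\to K$; the pair $(f,g')$ then commutes, so by Proposition \ref{prop-R(MxN)} it induces a unique $\CR$-morphism $h':\CR(M\times N)\to K$ satisfying $h'(\{(m,n)\}) = f(\{m\})\,g'(\{n\})$. Since $h'(\{(1,0)\}) = f(\{1\})\,g(\{0\}/\nu) = 1\cdot g(0) = 0 = h'(\emptyset)$, the kernel of $h'$ contains $(\{(1,0)\},\emptyset)$ and, being an $\CR$-congruence, contains the least such $\CR$-congruence $\tilde\nu$; hence $h'$ factors uniquely through $\CR(M\times N)/\tilde\nu$ as the required $h$ with $h\circ\top_1=f$ and $h\circ\top_2=g$. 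Uniqueness of $h$ follows because every element of $\CR(M\times N)/\tilde\nu$ is a Kleene-algebraic combination of the generators $\{(m,n)\}/\tilde\nu = \top_1(\{m\})\cdot\top_2(\{n\}/\nu)$, and $\CR$-morphisms preserve these operations.

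Finally, applying the canonical isomorphism on representatives yields the formulas in the statement: $[R_\nu] = \sum\{A\otimes(B/\nu):(A,B)\in R\}$ is sent to $\sum\{(A\times B)/\tilde\nu:(A,B)\in R\} = S_R/\tilde\nu$, and conversely $S/\tilde\nu = \sum\{\{(m,n)\}/\tilde\nu:(m,n)\in S\}$ maps back to $\sum\{\{m\}\otimes(\{n\}/\nu):(m,n)\in S\} = [(R_S)_\nu]$. The main obstacle is the factorization step of the third paragraph: showing that the kernel of $h'$ contains $\tilde\nu$, which rests on the characterization of $\tilde\nu$ as the \emph{least} $\CR$-congruence containing the single pair $(\{(1,0)\},\emptyset)$ together with the fact that kernels of $\CR$-morphisms are always $\CR$-congruences; the analogous argument for $\top_2$ being well-defined requires the corresponding property of $\nu$.
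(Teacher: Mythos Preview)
Your proof is correct. The paper's own proof is a single sentence: ``This is an instance of Theorem 12 of \cite{Leiss22}.'' You instead give a direct argument via the universal property of tensor products, reducing to Proposition \ref{prop-R(MxN)} (that $\CR(M\times N)$ is already the tensor product $\CR M\xR\CR N$) and then using that quotienting by the least $\CR$-congruence generated by $(\{(1,0)\},\emptyset)$ is precisely what is needed to make the induced map from any commuting pair $(f,g)$ factor. This is the natural self-contained route and is presumably close to how the cited Theorem 12 is proved in the general case; your version has the advantage of being readable without chasing the reference.

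Two minor remarks. First, your uniqueness argument (``every element is a Kleene-algebraic combination of generators'') is a bit informal; the cleaner statement is that uniqueness of $h$ follows from uniqueness of $h'$ in the universal property of $\CR(M\times N)$ together with surjectivity of the quotient map $\CR(M\times N)\to\CR(M\times N)/\tilde\nu$. Second, you rely implicitly on two facts the paper states but does not prove in detail: that kernels of $\CR$-morphisms are $\CR$-congruences (stated just before Proposition 1) and that an $\CR$-morphism whose kernel contains an $\CR$-congruence $\rho$ factors through the quotient by $\rho$. Both are routine, but it would be worth flagging them explicitly since they carry the weight of the factorization steps you correctly identify as the main obstacle.
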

\begin{shortproof}This is an instance of Theorem 12 of \cite{Leiss22}.
\end{shortproof}

For $A\in\CR M$ and $B\in \CR N$, the isomorphism maps $A\otimes B/\nu$ to $(A\times
B)/\tilde\nu$, where $B/\nu$ is uniquely represented by $B\setminus\{0\}$ and
$(A\times B)/\tilde\nu$ by $(A\times B)\setminus (A\times\{0\})$. As $C_m'\simeq
P_m'/\nu$ by Proposition \ref{prop-C2'}, an application of the theorem is
\[ \CR X^*\xR C_m'\simeq \CR(X^*\times P_m')/{\tilde\nu}. \]
Moreover, since elements of $X$ and $P_m'$ commute in the monoid $P_m'[X]$ of
(\ref{eqn-Pm'[X]}),
\[ \CR(X^*\times P_m')/{\tilde\nu}\simeq \CR(P_m'[X])/\nu. \]
It follows that an element of $\CR X^*\xR C_m'$ has a unique representation by a subset
of $Q_m^*X^*P_m^*$.

However, to state our results for arbitrary $\CR$-dioids
$K$, we do need the tensor product $K\xR C_m'$.

\subsection{The centralizer $\zc{K}{C_2'}$ of $C_2'$ in $K\xR C_2'$}\label{sec-centralizer}

In a monoid $M$, the \blue{centralizer $Z_C(M)$} of a set $C \subseteq M$ in $M$ consists of those elements that commute with every element of $C$, i.e.~the submonoid
\[
	Z_C(M) :=\setof{m\in M}{mc = cm \textrm{ for all }c \in C}.
\]
For example, the centralizer of $\Delta_m$ in $P_m'[X]$ is $X^* \cup \{0\}$.

In Section \ref{sec-Normalforms}, we will, for non-trivial $\CR$-dioids $K$, consider
the representation of elements of $K\xR C_m'$ by automata.  As
$\top_1,\top_2$ in $\embeddings{\top_1}{K}{K\xR C_2'}{C_2'}{\top_2}$ are
relatively commuting, for all $k\in K$ and $c\in C_2'$ we have
\[ kc = \top_1(k)\cdot \top_2(c) = k\otimes c = \top_2(c)\cdot\top_1(k) = ck, \]
in $K\xR C_2'$, so $K\subseteq \zc{K}{C_2'}$ (modulo $\top_1$).
Moreover, $\zc{K}{C_2'}$ clearly is a semiring and, by \Star-continuity of $K\xR
C_2'$, it is closed under \Star: if $a$ commutes with $c\in C_2'$, then
\[ c\cdot a^* = \Sum{c\cdot a^n}{n\in\N} = \Sum{a^n\cdot c}{n\in\N} = a^*\cdot c.\]
In fact, $\zc{K}{C_2'}$ is an $\CR$-dioid, by Proposition 24 of \cite{Leiss22}.
It has even stronger closure properties, see Theorem \ref{thm-centralizer},
\ref{item-4} below.

A \blue{Chomsky algebra} (Grathwohl e.a.~\cite{Kozen2013}) is an idempotent semiring $D$ which is \blue{algebraically closed}, i.e.~every finite inequation system
\[
	x_1 \geq p_1(x_1, \ldots, x_k), \ldots, x_k \geq p_k(x_1, \ldots, x_k)
\]
with polynomials $p_1, \ldots, p_k \in D[x_1, \ldots, x_k]$ has a least solution in $D$, where $\leq$ is the partial order on $D$ defined by $a\leq b \iff a+b=b$.
Semiring terms over an infinite set $X$ of variables can be extended by a least-fixed-point operator $\mu$, such that if $t$ is a term and $x\in X$, $\mu x.t$ is a term.
In a Chomsky algebra $D$ with an assignment $h:X\to D$, the value of $\mu x.t$ is the least solution of $x\geq t$ with respect to $h$, i.e.~the least $a\in D$ such
that $x \geq t$ is true with respect to $h[x/a]$.
A Chomsky algebra $D$ is \blue{$\mu$-continuous}, if for all $a,b\in D$ and $\mu$-terms $t$,
\[ a\cdot \mu x.t \cdot b = \Sum{a\cdot t^n\cdot b}{n\in \N} \]
is true for all assignments $h:X\to D$, where $t^0 = 0$, $t^{n+1} = t[x/t^n]$.
The \Star-continuity condition of $\CR$-dioids is a special instance of the $\mu$-continuity condition, where $c^* = \mu x.(cx+1)$.
The semiring $\CC X^*$ of context-free languages over $X$ is a $\mu$-continuous Chomsky algebra.
The $\mu$-continuous Chomsky algebras, with fixed-point preserving semiring homomorphisms, form a category of dioids.

This category had been introduced as the category \blue{$\D\CC$} of $\CC$-dioids and $\CC$-morphisms in \cite{Hopkins-I-2008} as follows; for the equivalence, see \cite{LeissHopkins18b}.
For monoids $M$, let \blue{$\CC M$} be the semiring $(\CC M,\cup,\cdot,\emptyset,\{1\})$ of context-free subsets of $M$.
A \blue{$\CC$-dioid} $(M,\cdot,1,\leq,\sum)$ is a partially ordered monoid
$(M,\cdot,1,\leq)$ with an operation $\sum:\CC M\to M$ that is \blue{$\CC$-complete}
and \blue{$\CC$-distributive}, i.e.~
\begin{enumerate}
  \item[(i)] for each $A\in \CC M$, $\sum A$ is the least upper bound of $A$ in $M$ with respect to $\leq$,
 \item[(ii)] for all $A,B\in\CC M$, $\sum(AB) = (\sum A)\cdot(\sum B)$.
\end{enumerate}
A \blue{$\CC$-morphism} is a monotone homomorphism between $\CC$-dioids that preserves least upper bounds of context-free subsets.
The above mentioned strong closure property of the centralizer of $C_2'$ in $K\xR C_2'$ is that it is algebraically closed, which follows from \ref{item-4} of the following facts:

\begin{theorem}[Theorem 27, Lemma 30, Lemma 31 of \cite{Leiss22}]\label{thm-centralizer}
Let $M$ be a monoid and $K$ an $\CR$-dioid.
  \begin{enumerate}
  \item $\zc{K}{C_2'} = \setof{[R]}{R\in \CR(K\times C_2'), R\subseteq K\times \{0,1\}}$.
\label{item-3}
  \item $\zc{K}{C_2'}$ is a $\CC$-dioid.\label{item-4}
  \item The least-upper-bound operator $\sum:\CC K \to \zc{K}{C_2'}$ is a surjective homomorphism.\label{item-2}
  \item The least-upper-bound operator $\sum:\CC M \to \zc{\CR M}{C_2'}$ is a $\CC$-isomorphism.\label{item-1}
  \end{enumerate}
\end{theorem}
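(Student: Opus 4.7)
The strategy is to prove the structural characterization (1) first and deduce (2)--(4) from it. For the easy inclusion $\supseteq$ in (1), if $R \subseteq K \times \{0,1\}$, the summands $(k,0)$ contribute $k \otimes 0 = 0$, while the rest give $[R] = \bigl(\Sum{k}{(k,1) \in R}\bigr) \otimes 1 \in \top_1(K)$, which commutes with every element of $\top_2(C_2')$ by the relative commutativity built into the tensor product.

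For the nontrivial inclusion $\subseteq$ in (1), let $\phi = [R] \in \zc{K}{C_2'}$. By Proposition~\ref{prop-C2'}, every element of $C_2'$ is uniquely represented by a (regular) subset of $Q_2^*P_2^* \cup \{1\}$, so after expanding $\phi$ and substituting normal-form representatives for the second coordinates, I may write $\phi = \Sum{k \otimes w}{(k,w) \in T}$ for some $T \subseteq K \times (Q_2^*P_2^* \cup \{1\})$. It then suffices to show that summands with $w \neq 1$ contribute nothing. Since commutation with the four generators $\p0,\p1,\q0,\q1$ already characterizes the centralizer, one compares $\p{i} \cdot \phi$ with $\phi \cdot \p{i}$ (and similarly for $\q{i}$): on one side the matching equations $\p{i}\q{j} = \delta_{i,j}$ shorten or annihilate leading $q$'s in $w$, while on the other side the bracket is merely prepended; commutation forces agreement, and iterating over all four generators strips every $q$-prefix and $p$-suffix from each $w$, leaving only summands with $w = 1$. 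Carrying out this bracket-balancing bookkeeping rigorously in the presence of a possibly infinite, non-canonical $T$, and extracting a regular $R \subseteq K \times \{0,1\}$ representing $\phi$ at the end, is the main technical obstacle.

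Part (3) follows quickly: by (1), each $\phi \in \zc{K}{C_2'}$ equals $(\sum A) \otimes 1$ for the regular set $A = \setof{k}{(k,1) \in R} \in \CR K \subseteq \CC K$, giving surjectivity of $\sum : \CC K \to \zc{K}{C_2'}$; the homomorphism properties (preservation of $+$, $\cdot$, and of least solutions of polynomial inequation systems) transfer because every context-free operation in $\CC K$ is computed, via $\mu$-continuity in $K \xR C_2'$, as a supremum of approximants whose images under $\sum$ match. Part (2) is then immediate: $\zc{K}{C_2'}$ is a subsemiring of $K \xR C_2'$ closed under \Star\ (as noted just before the theorem), and it inherits algebraic closure, $\CC$-completeness and $\CC$-distributivity as the image of the surjective $\CC$-morphism $\sum$ from the $\mu$-continuous Chomsky algebra $\CC K$.

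For (4), specialize to $K = \CR M$. Surjectivity of $\sum : \CC M \to \zc{\CR M}{C_2'}$ follows from (3). For injectivity, I would apply Theorem~\ref{thm-CA(MxN)/tnu} to identify $\CR M \xR C_2'$ with $\CR(M \times P_2')/\tilde\nu$; then $\sum L = \sum L'$ for $L, L' \in \CC M$ translates to equality of the corresponding normal-form subsets of $M \times P_2'$ modulo $\tilde\nu$, and the uniqueness of the normal-form representation (together with the $\{1\}$-component characterization from (1) restricted to the second factor) forces $L = L'$ as subsets of $M$.
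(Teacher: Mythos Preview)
The paper does not prove this theorem; it merely cites \cite{Leiss22} and remarks that ``the proof of Theorem~\ref{thm-centralizer} is lengthy''. So there is no in-paper proof to compare against, and I can only assess your proposal on its own merits.

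There is a genuine error that propagates through your argument. In the $\supseteq$ direction of (1) and again in (3) you claim that for $R\in\CR(K\times C_2')$ with $R\subseteq K\times\{0,1\}$ one has $[R]=(\sum A)\otimes 1\in\top_1(K)$ with $A=\setof{k}{(k,1)\in R}\in\CR K$. This cannot be right: it would force $\zc{K}{C_2'}=\top_1(K)$, contradicting (4), since for $K=\CR X^*$ the centralizer is isomorphic to $\CC X^*$, which properly contains $\CR X^*$. Concretely, take $R=\{(1,\p0)\}\{(a,\p1)\}^*\{(b,\q1)\}^*\{(1,\q0)\}\in\CR(K\times C_2')$; in $C_2'$ the second coordinates reduce to $\delta_{n,m}$, so $R\subseteq K\times\{0,1\}$, but $A=\setof{a^nb^n}{n\in\N}$ is not regular. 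The set $A$ is \emph{context-free}, and establishing that in general is precisely the substance of the Chomsky--Sch\"utzenberger style argument underlying (3); it is not a bookkeeping remark but the heart of the matter. Your $\supseteq$ direction is salvageable (just argue directly that $[R]$ commutes with each $c\in C_2'$ because $c'\in\{0,1\}$ implies $cc'=c'c$), but your derivation of (3) from (1) collapses.

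For the $\subseteq$ direction of (1) you correctly identify the difficulty and then stop. The sketch of ``stripping $q$-prefixes and $p$-suffixes by commuting with generators'' does not go through as stated: equality in $K\xR C_2'$ is equality of $\equiv$-classes of regular sets, not of the index sets $T$, so comparing $\p{i}\phi$ with $\phi\p{i}$ does not let you cancel individual summands. The actual proof in \cite{Leiss22} proceeds via an automata-theoretic representation together with a Dyck-language fixed-point construction (essentially the machinery developed in Section~\ref{sec-Normalforms} of this paper), not by elementwise bracket manipulation. As it stands, the hard direction of (1), and hence (2)--(4), remain unproved in your proposal.
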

While \ref{item-3} gives a characterization of the elements in the centralizer of $C_2'$ in $K\xR C_2'$, in Section \ref{sec-Normalforms} we provide descriptions of \emph{all} elements of $K\xR C_2'$ via normal forms.
As the proof of Theorem \ref{thm-centralizer} is lengthy, we try to avoid using \ref{item-3} - \ref{item-1} as far as possible.
However, we need \ref{item-3} in the following corollary, which in turn is used to give a simplified normal form for elements of the centralizer in
Corollary \ref{cor-nf-C'}, and we use \ref{item-4} in Example \ref{expl-3ff} and for
the product case of Theorem \ref{thm-combining-nfs}.

A subset $X$ of a partial order $(P, \leq)$ is \blue{downward closed}, if for all $a, b \in P$, if $b \in X$ and $a \leq b$, then $a \in X$.

\begin{corollary}\label{cor-zc-bounded}
	If $K$ is a non-trivial $\CR$-dioid and has no zero divisors, then
        $\zc{K}{C_2'}$ is a downward-closed subset of $K\xR C_2'$.
\end{corollary}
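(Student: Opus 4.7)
The plan is to invoke Theorem~\ref{thm-centralizer}(\ref{item-3}) in both directions. By that characterization, write $\phi=[R]$ with $R\in\CR(K\times C_2')$ and $R\subseteq K\times\{0,1\}$, and write $\psi=[T]$ for some $T\in\CR(K\times C_2')$. Expanding $\psi=\sum\{c\otimes d:(c,d)\in T\}$ in the $\CR$-dioid $K\xR C_2'$, each summand satisfies $c\otimes d\leq\psi\leq\phi$. The goal, again by (\ref{item-3}), is to produce $S\in\CR(K\times C_2')$ with $S\subseteq K\times\{0,1\}$ and $\psi=[S]$.

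The crux is the claim
\[
  (\star)\qquad 0\neq c\otimes d\leq\phi\in\zc{K}{C_2'}\ \Longrightarrow\ d=1\ \text{in}\ C_2'.
\]
Once $(\star)$ is established, every nonzero tensor summand $c\otimes d$ of $\psi$ already has $d=1$; the remaining summands vanish, and by Corollary~\ref{cor-tensor-ordering}(i) and non-triviality of $K$ these are precisely those with $c=0$ or $d=0$. A mild regularity-preserving modification of $T$---replacing each such summand by $(0,0)$, and each surviving one by $(c,1)$---produces the required $S\subseteq K\times\{0,1\}$ with $[S]=[T]=\psi$, whence $\psi\in\zc{K}{C_2'}$ by Theorem~\ref{thm-centralizer}(\ref{item-3}).

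To prove $(\star)$: by Proposition~\ref{prop-C2'}, the only $d\in C_2'$ with $d\leq 1$ are $d=0,1$, so it suffices to show $d\leq 1$. Suppose not; pick a nonempty normal form $w=q_{i_1}\cdots q_{i_k}p_{j_1}\cdots p_{j_l}\in Q_2^*P_2^*$, $k+l\geq 1$, with $w\leq d$. Set $u:=p_{i_1}\cdots p_{i_k}$ and $v:=q_{j_l}\cdots q_{j_1}$ in $C_2'$ (empty products $=1$). The matching rules give $uwv=1$, hence $udv\geq 1$. Multiplying $c\otimes d\leq\phi$ by $1\otimes u$ on the left and $1\otimes v$ on the right, and commuting $u,v$ past $\phi$ (using $\phi\in\zc{K}{C_2'}$), yields
\[
  c\otimes 1\ \leq\ c\otimes(udv)\ =\ u(c\otimes d)v\ \leq\ u\phi v\ =\ \phi\cdot uv.
\]
The main obstacle is completing this to a contradiction. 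My plan is a case analysis on $uv\in C_2'$, whose normal form, by the matching rules, is $0$, $1$, or a nonempty word in $Q_2^*\cup P_2^*$. If $uv=0$, then $c\otimes 1\leq 0$ forces $c=0$ by Corollary~\ref{cor-tensor-ordering}(i) and non-triviality, contradicting $c\neq 0$. In the remaining cases, a further multiplication by a suitably chosen bracket (so that the second coordinate on the left becomes a normal form not dominated by that on the right) reduces to a sum-wise inequation $c\otimes b_1\leq\sum\{a\otimes b_2:(a,1)\in R\}$ with $b_1\not\leq b_2$ in $C_2'$; the no-zero-divisor hypothesis on $K$, together with Corollary~\ref{cor-tensor-ordering}(ii), then refines this to a summand-wise comparison $b_1\leq b_2$, the needed contradiction. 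The subtle point is precisely this sum-to-summand refinement, which is where the no-zero-divisor hypothesis on $K$ is essential.
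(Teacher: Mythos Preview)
Your argument has a genuine gap at exactly the place you flag as ``the subtle point.'' The step you call \emph{sum-to-summand refinement}---from $c\otimes b_1\leq\sum\{a\otimes b_2:(a,1)\in R\}$ to a comparison $b_1\leq b_2$---has no justification: Corollary~\ref{cor-tensor-ordering}(ii) compares a nonzero tensor with a \emph{single} tensor, not with a supremum of tensors, and there is no general principle that lets you pass to a summand here (the no-zero-divisor hypothesis on $K$ does not help with this). Your entire case analysis on $uv$ and the preceding multiplication-by-brackets manoeuvre are thus stranded: they produce inequalities whose right-hand side is still a sum, and you never escape that.

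The paper's proof avoids this detour entirely by bounding $\phi=[S]$ above by a \emph{single} tensor before comparing. Namely, with $S\subseteq K\times\{0,1\}$ and $A:=\pi(S)\in\CR K$ the first-coordinate projection, one has $S\subseteq A\times\{0,1\}$, hence $\phi\leq [A\times\{0,1\}]=(\sum A)\otimes 1$. Now for any $(k,c)\in T$ with $k\otimes c\neq 0$, Corollary~\ref{cor-tensor-ordering}(ii) gives $c\leq 1$ directly, and Proposition~\ref{prop-C2'} forces $c\in\{0,1\}$. This is precisely your $(\star)$, obtained in one line and with no case analysis. You also mislocate the role of the no-zero-divisor hypothesis: it is \emph{not} used in establishing $(\star)$, but in the second half of the proof, to show that stripping the pairs $(0,c)$ from $T$ preserves regularity (the product case $R=R_1R_2$ of the induction needs $k_1k_2\neq 0$ whenever $k_1,k_2\neq 0$). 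Your ``mild regularity-preserving modification'' is exactly this step, and it needs the inductive argument the paper gives; it is not automatic.
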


\begin{proof}
  Suppose $[R]\leq [S]\in \zc K{C_2'}$ for $R,S\in\CR(K\times C_2')$. By Theorem
  \ref{thm-centralizer} \ref{item-3}, we can assume $S \subseteq K \times \{0, 1\}$
  and must show that there is $R'\in\CR(K\times C_2')$ with $[R]=[R']$ and
  $R'\subseteq K\times \{0,1\}$. The projection from $K\times C_2'$ to $K$ lifts to a
  homomorphism $\pi:\CR(K \times C_2') \to \CR K$, so $A :=\pi(S) \in \CR K$. Then
  \[
  S \subseteq A\times \{0, 1\} \in \CR(K\times C_2'),
  \]
  and for each $(k,c)\in R$,
  \[
  k\otimes c \leq [R] \leq [S] \leq [A\times\{0, 1\}] = [\{(\sum A,1)\}] = (\sum
  A)\otimes 1.
  \]
  If $0\not= k\otimes c$, then $c \leq 1$ in $C_2'$ by Corollary \ref{cor-tensor-ordering};
  by Proposition \ref{prop-C2'}, $c\in\{0,1\}$, so $(k,c)\in K\times\{0,1\}$.  If
  $0=k\otimes c$, then by Corollary \ref{cor-tensor-ordering} again, either $c=0$ and $(k,c)\in
  K\times \{0,1\}$, or else $k=0$.  Let $R' = R\setminus \setof{(0,c)\in R}{c\in
    C_2'}$. Then $R'\subseteq K\times\{0,1\}$ and $[R]=\Sum{k\otimes c}{(k,c)\in R}$ is the
  least upper bound of $\setof{k\otimes c}{(k,c)\in R'}$. We show by induction on the
  construction of $R\in\CR(K\times C_2')$ that $R'\in \CR(K\times C_2')$. This also
  gives $[R]=[R']$.

  If $R$ is finite, so is $R'$, therefore $R'\in \CR(K\times C_2')$. Suppose for
  $R_i\in\CR(K\times C_2')$, $i=1,2$, we have $R_i' = R_i\setminus\setof{(0,c)}{c\in
    C_2'}\in\CR(K\times C_2')$. If $R=R_1\cup R_2$, then $R' = R_1'\cup R_2'\in
  \CR(K\times C_2')$. If $R=R_1R_2$, then $R'\subseteq R_1'R_2'$, and since $K$ has
  no zero divisors, $R_1'R_2'\subseteq R'$, so $R'=R_1'R_2'\in \CR(K\times C_2')$. If
  $R=R_1^*$, then $R'=(\bigcup\setof{R_1^n}{n\in\N})' =
  \bigcup\setof{(R_1')^n}{n\in\N} = (R_1')^*\in \CR(K\times C_2')$.
\end{proof}

\subsection{Automata over a Kleene algebra}\label{sec-automata}

A \emph{finite automaton $\CA = \Z{S, A, F}$ with $n$ states} over a Kleene algebra $K$ consists of a transition matrix $A \in \mat nnK$
and two vectors $S \in \mat 1n\B$ and $F \in \mat n1\B$, coding the initial and final states.
The 1-step transitions from state $i < n$ to state $j < n$ are represented by $A_{i, j}$, and paths from $i$ to $j$ of finite length by $A^*_{i, j}$, where $A^*$ is the iteration of $A$.
The sum of paths leading from initial to final states defines an element of $K$,
\[
	L(\CA) = S\cdot A^*\cdot F\ \in K.
\]
The iteration $M^*$ of $M \in \mat nnK$ is defined by induction on $n$: for $n=1$ and $M=(k)$, $M^* = (k^*)$, and for $n > 1$,
\begin{equation}\label{mat-star}
	M^* = \begin{pmatrix}
		A & B \\
		C & D
	\end{pmatrix}^* = \begin{pmatrix}
		F^*		& F^* BD^* \\
		D^* CF^*\	& D^* CF^* BD^* + D^*
	\end{pmatrix},
\end{equation}
where $F = A+BD^*C$ and $M = \begin{pmatrix} A & B \\ C & D \end{pmatrix}$ is any splitting of $M$ in which $A$ and $D$ are square matrices of dimensions $n_1, n_2< n$ with $n=n_1+n_2$.

By Kleene's representation theorem, the set $\CR X^*$ of regular subsets of $X^*$ consists of the languages
\[
	L(\CA) = S\cdot A^*\cdot F\ \subseteq X^*
\]
of finite automata $\CA = \Z{S, A, F}$, where for some $n \in \N$, $A \in \mat nn{(\CF
  X^*)}$, $S \in \B^{1\times n}$, $F \in \mat n1\B$ and $\CF X^*$ is the set of finite subsets of $X^*$.

For various notions of Kleene algebra, Conway showed that the set $\mat nnK$ of $n\times
n$-matrices over $K$ with matrix addition, multiplication
and iteration as defined above and zero and unit matrices $0_n,1_n\in\mat nnK$ form a Kleene algebra
\[ \Mat nnK = (\mat nnK, +, \cdot, {}^*, 0_n,1_n) \]
and used this to prove
Kleene's representation theorem, see \cite{Conway71}. For the notion of Kleene
algebra used here, the same has been done by Kozen in \cite{Kozen94}. We are mostly
working with $\CR$-dioids, i.e.~\Star-continuous Kleene algebras, and will often make
use of \Star-continuity on the matrix level in Section \ref{sec-Normalforms}. In
fact, the $n\times n$-matrices over a \Star-continuous Kleene algebra form a
\Star-continuous Kleene algebra:

\begin{theorem}[Kozen \cite{Kozen-Algorithms91}, Chapter 7.1.]\label{thm-mat-starKA}
	If $K$ is a \Star-continuous Kleene algebra, so is $\Mat nnK$, for $n \geq 1$.
\end{theorem}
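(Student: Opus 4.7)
The plan is to verify the \Star-continuity identity $A C^* B = \Sum{A C^m B}{m \in \N}$ for all $A, B, C \in \mat nnK$ by reducing it to the \Star-continuity of $K$ entry by entry. Since matrix addition in $\Mat nnK$ is componentwise, the natural partial order on $\Mat nnK$ is componentwise, and a supremum of a family of matrices, when it exists, is taken componentwise. The identity therefore reduces to showing $(A C^* B)_{ij} = \Sum{(A C^m B)_{ij}}{m \in \N}$ in $K$ for every pair of indices $i, j < n$.

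The crux is the entrywise identity $(C^*)_{ij} = \Sum{(C^m)_{ij}}{m \in \N}$ in $K$, which I prove by induction on $n$. For $n = 1$ with $C = (c)$, this reads $c^* = \Sum{c^m}{m \in \N}$, the \Star-continuity equation of $K$ applied with $a = b = 1$. For the inductive step, decompose $C = \begin{pmatrix} A & B \\ D & E \end{pmatrix}$ with $A, E$ square of dimensions $n_1, n_2 < n$. The block formula (\ref{mat-star}) expresses each entry of $C^*$ as a polynomial in $F^*$, $E^*$, $B$ and $D$, where $F = A + B E^* D$. Applying the inductive hypothesis to the smaller matrices $F$ and $E$ turns their stars into countable suprema of powers; substituting these into the block formula and using $\CR$-distributivity in $K$ expands each $(C^*)_{ij}$ into a countable sum of products of entries of $C$ indexed by finite paths through the $n$-state graph of $C$, each decomposed into alternating excursions inside the upper-left and lower-right blocks.

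The main obstacle is then the combinatorial identity equating this sum of products with $\Sum{(C^m)_{ij}}{m \in \N}$. One must check that the bijection between arbitrary finite paths from $i$ to $j$ and their unique decompositions into maximal segments contained in one block preserves the product of edge weights, which is essentially the correctness of Kleene's state-elimination construction. The required interchanges of $\sum$ with $\cdot$ (and with lower-dimensional iteration) are available by repeated use of \Star-continuity and $\CR$-distributivity in $K$, but the bookkeeping to match both presentations of the same infinite sum is the only non-routine part of the argument.

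Once the entrywise identity $(C^*)_{ij} = \Sum{(C^m)_{ij}}{m\in\N}$ is established, the full \Star-continuity equation follows by expanding $(A C^* B)_{ij} = \sum_{k,l} A_{ik} (C^*)_{kl} B_{lj}$, substituting, and pulling the finite sums over $k, l$ and the factors $A_{ik}, B_{lj}$ through the supremum via $\CR$-distributivity in $K$ to obtain $\Sum{(A C^m B)_{ij}}{m \in \N}$, as required.
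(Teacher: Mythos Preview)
The paper does not supply its own proof of this theorem; it simply cites Kozen's textbook. Your sketch follows essentially the standard argument found there: reduce the matrix \Star-continuity identity to the entrywise identity $(C^*)_{ij} = \Sum{(C^m)_{ij}}{m\in\N}$, prove this by induction on $n$ via the block formula (\ref{mat-star}), and iterate \Star-continuity in $K$ to distribute through the nested stars arising from $F = A + BE^*D$. Your identification of the path-decomposition bookkeeping as the only non-routine step is accurate, and the sketch is correct.
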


We remark that $\Mat nnK$ can be reduced to the tensor product of $K$ with $\Mat
nn\B$, but we will use this only in connection with bra-ket $\CR$-dioids in Section
\ref{sec-completeness}.

\begin{proposition}\label{prop-MatK} For any $\CR$-dioid $K$ and $n\geq 1$, $\Mat
  nnK \simeq K\xR \Mat nn\B$.
\end{proposition}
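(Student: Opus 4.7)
The plan is to exhibit mutually inverse $\CR$-isomorphisms $\Psi: K\xR\Mat nn\B\to \Mat nnK$ and $\Phi: \Mat nnK\to K\xR\Mat nn\B$. The case $K=\{0\}$ is immediate since both sides are then trivial, so I may assume $K$ is non-trivial and $\B\subseteq K$.

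Consider the natural embeddings $\iota_1: K\to \Mat nnK$, $k\mapsto k\cdot 1_n$, and $\iota_2: \Mat nn\B \to \Mat nnK$ (entrywise inclusion). Both are $\CR$-morphisms: the scalar matrices form an isomorphic copy of $K$ inside the $\CR$-dioid $\Mat nnK$ (Theorem~\ref{thm-mat-starKA}), and the inclusion preserves all operations entrywise. Their images commute relatively, since for $k\in K$ and $B\in \Mat nn\B$, both $(k\cdot 1_n)B$ and $B(k\cdot 1_n)$ yield the matrix whose $(i,j)$-entry is $k$ when $B_{ij}=1$ and $0$ otherwise. By the universal property of tensor products (Theorem~\ref{thm-tensorproducts}), $\iota_1$ and $\iota_2$ induce a unique $\CR$-morphism $\Psi: K\xR\Mat nn\B\to \Mat nnK$ with $\Psi(a\otimes B) = aB$.

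I would then define the candidate inverse $\Phi: \Mat nnK\to K\xR\Mat nn\B$ by
\[ \Phi(M) := \Sum{M_{ij}\otimes E_{ij}}{i,j<n}, \]
where $E_{ij}\in \Mat nn\B$ is the standard matrix unit with a single $1$ at position $(i,j)$. Preservation of $0, 1, +$ is immediate, and preservation of $\cdot$ follows from the matrix-unit identity $E_{ij}E_{kl}=\delta_{jk}E_{il}$ together with the relative commutativity of entries in $K$ and Boolean matrices in the tensor product. Once $\Phi$ is known to be an $\CR$-morphism, a direct computation gives $\Psi\circ\Phi(M) = \sum_{ij}M_{ij}E_{ij} = M$, while $\Phi\circ\Psi$ is an $\CR$-morphism agreeing with $\top_1$ on $K$ (via $\sum_i E_{ii}=1_n$) and with $\top_2$ on $\Mat nn\B$, and hence equals the identity on $K\xR\Mat nn\B$ by the uniqueness clause of the universal property of tensor products.

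The main obstacle is showing that $\Phi$ preserves \Star. Both $\Mat nnK$ (by Theorem~\ref{thm-mat-starKA}) and $K\xR\Mat nn\B$ are \Star-continuous, so $M^* = \sum_{\ell\in\N}M^\ell$ and $\Phi(M)^* = \sum_{\ell\in\N}\Phi(M)^\ell$. Using the multiplicative preservation just established, $\Phi(M)^\ell = \Phi(M^\ell)$, and so
\[
  \Phi(M)^*\ =\ \sum_{\ell\in\N}\sum_{i,j<n}(M^\ell)_{ij}\otimes E_{ij}\ =\ \sum_{i,j<n}\Bigl(\sum_{\ell\in\N}(M^\ell)_{ij}\Bigr)\otimes E_{ij},
\]
where the exchange of the finite sum $\sum_{i,j<n}$ with the regular supremum $\sum_\ell$ follows from $\CR$-distributivity of $\otimes$ over regular suprema in $K$. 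Since in a \Star-continuous matrix algebra the supremum is entrywise, $(M^*)_{ij} = \sum_\ell (M^\ell)_{ij}$, so the right-hand side equals $\sum_{ij}(M^*)_{ij}\otimes E_{ij} = \Phi(M^*)$. This confirms that $\Phi$ is an $\CR$-morphism, so $\Psi$ and $\Phi$ are mutually inverse $\CR$-isomorphisms and the proposition follows.
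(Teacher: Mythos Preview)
Your proof is correct and follows essentially the same strategy as the paper's sketch: the paper's embeddings $I_K$ and $\Id$ are exactly your $\iota_1,\iota_2$, and your inverse $\Phi$ is precisely the paper's $h_{f,g}$ specialised to $f=\top_1$, $g=\top_2$. The only difference in packaging is that the paper verifies that $(\Mat nnK, I_K, \Id)$ itself satisfies the universal property of the tensor product and then invokes uniqueness, whereas you use the universal property of $K\xR\Mat nn\B$ to produce $\Psi$ and then check by hand that $\Phi$ is its inverse; the computations involved are the same.

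One point deserves a firmer justification: when you exchange $\sum_{\ell\in\N}$ with $\sum_{i,j<n}$, you invoke ``$\CR$-distributivity of $\otimes$ over regular suprema in $K$'', which presupposes that $\{(M^\ell)_{ij}:\ell\in\N\}$ is a regular subset of $K$. This is not obvious as stated. A clean way around it is to work instead with the regular set $P_{ij}\subseteq K$ of path-products $M_{ik_1}M_{k_1k_2}\cdots M_{k_{r-1}j}$ (the image in $K$ of the regular language accepted by the obvious $n$-state automaton), so that $(M^*)_{ij}=\sum P_{ij}$ and $\top_1$ preserves this supremum; alternatively, prove $\Phi(M^*)=\Phi(M)^*$ by induction on $n$ via the block formula~(\ref{mat-star}). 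Either route closes the gap without changing your overall argument.
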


\begin{proofsketch}
One shows that  $\embeddings{I_K}{K}{\Mat nnK}{\Mat nn\B}{\Id}$ has the properties of
a tensor product, where $I_K(a) := a1_n$ for $a\in K$ and $\Id(B) = B$ for $B\in\mat nn\B$. 
For relatively commuting $\CR$-morphisms $f:K\to D\leftarrow \Mat nn\B:g$ to an
$\CR$-dioid $D$, the unique $\CR$-morphism with $f=h_{f,g}\circ I_K$ and
$g=h_{f,g}\circ\Id$ is defined by 
\[ h_{f,g}(A) := \sum\setof{f(A_{i,j})g(E_{(i,j)})}{i,j<n}, \quad
\textrm{for }A\in\mat nnK, \]
where $E_{(i,j)}\in\mat nn\B$ is the matrix with 1 only in
line $i$, row $j$. The claim then follows by the uniqueness of tensor products.
\end{proofsketch}

For any $\CR$-dioid $K$, we next prove Kleene's representation theorem for $K\xR
C_2'$: any element of $K\xR C_2'$ is the ``language'' $L(\CA) = SA^*F$ of a finite
automaton $\CA=\Z{S, A, F}$ over $K\xR C_2'$. This follows the proofs by Conway and
Kozen; the point here is how transitions by elements of $C_2'$ in the transition
matrix $A$ can be reduced to transitions by generators $c\in\Delta_2$ of $C_2'$.

For $a \in K$ and $c \in C_2'$, we write $a$ and $c$ also for their images in $K\xR
C_2'$, likewise $ac$ for their product in $K\xR C_2'$. 
From now on, for $\Delta_2=P_2\dotcup Q_2$ we use $P_2=\{b,p\}$ instead of
$\{\p0,\p1\}$ and $Q_2=\{d,q\}$ instead of $\{\q0,\q1\}$, unless stated otherwise.

\begin{theorem}\label{thm-aut}
Let $K$ be an $\CR$-dioid, i.e.~a \Star-continuous Kleene-algebra, and $C_2'$ the polycyclic Kleene algebra over $\Delta_2$.
For each $\phi \in K\xR C_2'$ there are $n \in \N$, $S \in \B^{1\times n}, F \in \B^{n\times 1}$, $U \in \{0, b, p\}^{n\times n}$, $V \in \{0, d, q\}^{n\times n}$ and $X \in K^{n\times n}$ such that
\[
	\phi = S(U+X+V)^*F.
\]
\end{theorem}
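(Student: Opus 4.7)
The plan is to reduce the claim to the standard Kleene--Conway--Kozen representation theorem for $\Star$-continuous Kleene algebras, and then simply split the resulting transition matrix into its three components.

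The first step is to show that every $\phi \in K\xR C_2'$ is the value of a regular expression over some finite subset $G \subseteq K \cup \Delta_2$. Since $K\xR C_2' \simeq \CR(K\times C_2')/{\equiv}$ by Theorem \ref{thm-tensorproducts}, we have $\phi = [R]$ for some $R\in\CR(K\times C_2')$, and I induct on the regular construction of $R$. The base case $R = \{(k,c)\}$ gives $\phi = k\otimes c = k\cdot c$ (using that $K$ and $C_2'$ commute in the tensor product), and by Proposition \ref{prop-C2'} the element $c\in C_2'$ is the $\rho$-class of some regular $B\subseteq\Delta_2^*$; Kleene's theorem for $\CR\Delta_2^*$ then presents $c$ as the value of a regular expression in the four bracket letters $\{b,p,d,q\}$, and prepending the symbol $k$ expresses $\phi$ as a regular expression over the finite set $\{k\}\cup\Delta_2$. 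The inductive cases $R_1\cup R_2$, $R_1 R_2$, $R_1^*$ follow at once from the identities $[R_1\cup R_2]=[R_1]+[R_2]$, $[R_1 R_2]=[R_1][R_2]$, $[R_1^*]=[R_1]^*$ recorded after Proposition \ref{prop-R(MxN)}, and from the fact that unions of finite subsets of $K\cup\Delta_2$ are again finite.

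The second step invokes the Kleene--Conway--Kozen representation theorem (Kozen \cite{Kozen94}): in any $\Star$-continuous Kleene algebra, an element given as the value of a regular expression over a finite generating set $G$ admits a finite automaton $\CA=\Z{S,A,F}$ with $S\in\B^{1\times n}$, $F\in\B^{n\times 1}$ and $A\in(G\cup\{0\})^{n\times n}$ such that the element equals $SA^*F$. Applied to the $G\subseteq K\cup\Delta_2$ produced above, this yields $\phi = SA^*F$ where every entry of $A$ lies in $K\cup\Delta_2\cup\{0\}$. For the third and final step, define $U\in\{0,b,p\}^{n\times n}$ by keeping only those entries of $A$ that are opening brackets (and setting the others to $0$), $V\in\{0,d,q\}^{n\times n}$ similarly for closing brackets, and $X\in K^{n\times n}$ by keeping only the $K$-entries (with the Kleene-algebra unit $1$, which may appear, absorbed into $X$ since $1\in K$). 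Then $A = U+X+V$ and $\phi = S(U+X+V)^*F$ as required.

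The only genuine obstacle is the inductive base case, namely the unfolding of the $C_2'$-component $c$ of a singleton pair $(k,c)$ into a regular expression over $\{b,p,d,q\}$. This relies on Proposition \ref{prop-C2'} together with Kleene's representation theorem for $\CR\Delta_2^*$, both already established. Everything else --- combining regular expressions under the regular operations in the induction, invoking Kleene--Conway--Kozen at the matrix level, and partitioning the entries --- is routine and essentially bookkeeping.
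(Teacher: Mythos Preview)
Your proposal is correct and follows essentially the same route as the paper: both argue by induction on $R\in\CR(K\times C_2')$, both handle the only nontrivial base case $\{(1,c)\}$ by unfolding $c\in C_2'$ as (the class of) a regular expression over $\Delta_2$, and both then rely on the standard Conway--Kozen automaton constructions for $\cup$, $\cdot$, ${}^*$. The only organizational difference is that the paper carries the splitting $A=U+X+V$ explicitly through the induction, whereas you first produce a regular expression over a finite $G\subseteq K\cup\Delta_2$, invoke Kleene's theorem as a black box to get $A$ with entries in $G\cup\{0,1\}$, and split afterwards --- a cleaner packaging of the same argument.
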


\begin{proof}
Since $\phi=[R]$ for some $R\in\CR(K\times C_2')$, by induction on the
construction of $R$ we build an automaton $\CA_R = \langle S,A,F\rangle$ over
$K\xR C_2'$ such that $L(\CA_R) = [R]$ and $A$ splits as $U+X+V$ as in the claim.

\begin{itemize}
\item $R=\emptyset$: Let $\CA_R = \langle S,A,F\rangle$ be the automaton of dimension 1
  with $S = (0)$, $A= (0)$, $F=(0)$. Then $L(\CA_R) = 0 = [\emptyset]$. We
  have $A= U+X+V$ with $1\times 1$ zero matrices $U,X,V$.

\item $R=\{(k,c)\}$ with $k\in K, c\in C_2'$: Since $\{(k,c)\} =
  \{(k,1)\}\cdot \{(1,c)\}$, by the product case below we may assume $k=1$
  or $c=1$. In the case $R=\{(k,1)\}$, let $\CA_R = \langle S,A,F\rangle$ consist of
\[ S = \begin{pmatrix} 1 & 0 \end{pmatrix}, \quad 
A = \begin{pmatrix} 1 & k \\ 0   & 1 \end{pmatrix}, \quad
F = \begin{pmatrix}0 \\ 1 \end{pmatrix}.
\]
Then $A^* = A$, since $A^0\leq A = A^2$, hence $L(\CA_R) = A_{1,2} = k1 = [\{(k,1)\}]$. The splitting is
\[ A = \begin{pmatrix} 1 & k \\ 0   & 1 \end{pmatrix}
     = \begin{pmatrix} 0 & 0 \\ 0   &  0 \end{pmatrix}
     + \begin{pmatrix} 1 & k \\ 0   & 1 \end{pmatrix}
     + \begin{pmatrix} 0 & 0 \\ 0   & 0 \end{pmatrix}
     = U+X+V.
\]

For the case $R=\{(1,c)\}$, the element $c\in C_2'$ is the congruence class of a set
$C\in\CR\Delta_2^*$ under the $\CR$-congruence $\rho_2'$ generated by the match
relations, so we can view $c$ as a regular expression in the letters of
$\Delta_2$. By the tensor product equations of $K\xR C_2'$, 
\[ \{(1,c_1+c_2)\} \equiv \{1\}\times \{c_1,c_2\} = \{(1,c_1)\}\cup\{(1,c_2)\}, \]
and since $\{(1,c_1c_2)\}=\{(1,c_1)\}\{(1,c_2)\}$ and $\{(1,c_1^*)\}=\{(1,c_1)\}^*$, we can construct $\CA_R$ by induction on the cases $R=R_1\cup R_2$, $R=R_1R_2$, and $R=R_1^*$ below.
In the remaining cases, $c$ is $0$, $1$ or a letter from $\Delta_2$. 
Let $\CA_R = \langle S,A,F\rangle$ consist of
\[ S = \begin{pmatrix} 1 & 0 \end{pmatrix}, \quad 
A = \begin{pmatrix} 1 & c \\ 0   & 1 \end{pmatrix}, \quad
F = \begin{pmatrix}0 \\ 1 \end{pmatrix}.
\]
Then $A^* = A$ and $L(\CA_R) = A_{1,1} = c = [\{(1,c)\}]$. 
If $c\in Q_2=\{d,q\}$, the splitting of $A$ is
\[ A = \begin{pmatrix} 1 & c \\ 0 & 1 \end{pmatrix}
     = \begin{pmatrix} 0 & 0 \\ 0 & 0 \end{pmatrix}
     + \begin{pmatrix} 1 & 0 \\ 0 & 1 \end{pmatrix}
     + \begin{pmatrix} 0 & c \\ 0 & 0 \end{pmatrix}
     = U+X+V.
\]
If $c\in P_2=\{b,p\}$, we switch the roles of $U$ and $V$. If $c$ is 0 or 1, let
\[ A = \begin{pmatrix} 1 & c \\ 0 & 1 \end{pmatrix}
     = \begin{pmatrix} 0 & 0 \\ 0 & 0 \end{pmatrix}
     + \begin{pmatrix} 1 & c \\ 0 & 1 \end{pmatrix}
     + \begin{pmatrix} 0 & 0 \\ 0 & 0 \end{pmatrix}
     = U+X+V.
\]

\item $R=R_1 \cup R_2$: For $i=1,2$, let $\CA_{R_i} = \langle S_i,A_i,F_i\rangle$ be an automaton of dimension $n_i$ such that
\[ L(\CA_{R_i})= S_iA_i^*F_i = [R_i] . \]
Construct $\CA_R = \langle S,A,F\rangle$ of dimension $n_1+n_2$ by
\[ S = \begin{pmatrix} S_1 & S_2 \end{pmatrix}, \quad 
A = \begin{pmatrix}A_1 & 0 \\ 0 & A_2 \end{pmatrix}, \quad
F = \begin{pmatrix}F_1 \\ F_2 \end{pmatrix}.
\]
By the recursion formula for iteration matrices,
\begin{eqnarray*}
  L(\CA_R) &=& SA^*F
  = \begin{pmatrix} S_1 & S_2 \end{pmatrix}
  \cdot\begin{pmatrix}A_1^* & 0 \\ 0 & A_2^*\end{pmatrix}
  \cdot \begin{pmatrix} F_1 \\ F_2 \end{pmatrix}
\\ &=& S_1A_1^*F_1 + S_2A_2^*F_2 
\\ &=& [R_1] + [R_2] = [R_1\cup R_2] = [R].
\end{eqnarray*}
The given splittings $A_1=U_1+X_1+V_1$ and $A_2=U_2+X_2+V_2$ combine to a suitable splitting of $A$ by
\[ A = \begin{pmatrix}U_1 & 0 \\ 0 & U_2 \end{pmatrix}
+ \begin{pmatrix}X_1 & 0 \\ 0 & X_2 \end{pmatrix}
+ \begin{pmatrix}V_1 & 0 \\ 0 & V_2 \end{pmatrix}
= U+X+V .\]

\item $R=R_1R_2$:
For $i=1,2$, let $\CA_{R_i} = \langle S_i,A_i,F_i\rangle$ be an automaton of dimension $n_i$ such that 
\[ L(\CA_{R_i})= S_iA_i^*F_i = [R_i]. \]
  
Construct $\CA_R = \langle S,A,F\rangle$ of dimension $n_1+n_2$ by
\[ S = \begin{pmatrix} S_1 & 0  \end{pmatrix}, \quad 
A = \begin{pmatrix}A_1 & F_1S_2 \\ 0   & A_2 \end{pmatrix}, \quad
F = \begin{pmatrix}0 \\ F_2 \end{pmatrix}.
\]
By the recursion formula for iteration matrices,
\begin{eqnarray*}
  L(\CA_R) &=& SA^*F 
\\ &=& \begin{pmatrix} S_1 & 0 \end{pmatrix}\cdot 
               \begin{pmatrix} A_1^* & A_1^*F_1S_2A_2^* 
                               \\ 0 & A_2^* \end{pmatrix}\cdot
               \begin{pmatrix} 0 \\ F_2 \end{pmatrix}
\\ &=& S_1A_1^*F_1S_2A_2^*F_2 
\\ &=& [R_1][R_2] = [R_1R_2] = [R].
\end{eqnarray*}
The given splittings $A_1=U_1+X_1+V_1$ and $A_2=U_2+X_2+V_2$ combine to the splitting
\[ A = \begin{pmatrix}U_1 & 0 \\ 0 & U_2 \end{pmatrix}
+ \begin{pmatrix}X_1 & F_1S_2 \\ 0 & X_2 \end{pmatrix}
+ \begin{pmatrix}V_1 & 0 \\ 0 & V_2 \end{pmatrix}
= U+X+V .\]

\item $R=R_1^*$: Suppose $\CA_{R_1} = (S_1,A_1,F_1)$, is an automaton such that 
\[ L(\CA_{R_1}) = S_1A_1^*F_1 = [R_1]. \]
Let $\CA_{R^+} = \langle S,A,F\rangle$ be $\langle
S_1,A_1+F_1S_1,F_1\rangle$. By equalities in Kleene algebras,
\begin{eqnarray*}
  L(\CA_{R^+}) &=& S_1(A_1+F_1S_1)^* F_1
\\ &=& S_1A_1^*(F_1S_1A_1^*)^*F_1
\\ &=& S_1A_1^*F_1(S_1A_1^*F_1)^*
\\ &=& [R_1][R_1]^*
\\ &=& [R_1][R_1^*] = [R_1^+],
\end{eqnarray*}
The splitting $A=U+X+V$ is obtained from the splitting $A_1=U_1+X_1+V_1$ by $U=U_1$, $X= X_1+F_1S_2$ and $V=V_1$. Finally, put $\CA_{R^*} = \CA_{\{(1,1)\}\cup R^+}$ and split its transition matrix as shown for the case $\CA_{R_1\cup R_2}$.
\QED
\end{itemize}
\def\QED{}
\end{proof}

\section{Normal form theorems for $K\xR C_2'$ with $\CR$-dioid $K$}
\label{sec-Normalforms}

In the representation of elements $\phi$ of $K\xR C_2'$ as $\phi=L(\CA)=SA^*F$ by automata $\CA = \Z{S, A, F}$ with $A = U+X+V$ in Theorem \ref{thm-aut}, $A^*=(U + X + V)^*$
admits arbitrary sequences of opening brackets $U$ with closing brackets $V$.
We aim at a normal form for $(U + X + V)^*$ where brackets are mainly occurring in a balanced way.
To this end, we now look at ways to express a Dyck-language with a single bracket pair $u, v$ in a Kleene algebra.

\subsection{Least solutions of some polynomial inequations in Kleene algebras}

We first show that in any Kleene algebra $K$, if they exist, least solutions of two fixed-point inequations that might be used to define Dyck's language $D_1(X)$ with $X=\{x_1, \ldots, x_n\} \subseteq K$, namely
\[
	y \geq (x_1+\ldots+x_n + uyv)^* \quad\textrm{and}\quad y \geq 1 + x_1+\ldots+x_n + uyv + yy,
\]
are related, where $u, v \in K \setminus X$ represent a pair of brackets.
It is then shown that $(u+ X + v)^* = (Nv)^*N (uN)^*$, where $N \in K$ is the least solution of $y \geq (X+uyv)^*$ corresponding to $D_1(X)$.
Except for the balanced bracket occurrences in $N$, in $(Nv)^*N (uN)^*$ all occurrences of the closing bracket $v$ are to the left of all occurrences of the opening bracket $u$.
This is similar to the normal form $\nf w \in Q_1^*P_1^* \cup \{0\}$ in the polycyclic monoid $P_1'$
of Section \ref{sec-polycyclic} with $P_1=\{u\}$ and $Q_1=\{v\}$, i.e.~the normal
forms on $\{u, v\}^*$ modulo the congruence generated by $uv = 1$, and its extension
to $\nf{w}\in Q_1^*X^*V_1^*\cup\{0\}$ for $w\in P_1'[X]$ where elements of $X$
commute with those of $P_1\cup Q_1$.

\begin{proposition}\label{prop-least-solutions}
	Let $K$ be a Kleene algebra and $u, x, v \in K$.
	If $y \geq (x+uyv)^*$ has a least solution $N$, then $N = (x+uNv)^*$ and $N$
        is the least solution of $y \geq 1+x+uyv+yy$.
	If $y \geq 1+x+uyv+yy$ has a least solution $D$, then $D=1+x+uDv+DD$ and $D$ is the least solution of $y \geq (x+uyv)^*$.
\end{proposition}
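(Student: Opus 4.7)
The plan is to prove both directions by short Kleene-algebra manipulations resting on two standard facts: the star induction axiom $ay+b\leq y \Rightarrow a^*b\leq y$, and the identity $a^*a^* = a^*$. In each direction I first check the claimed fixed-point equation, then establish leastness by converting a solution of one inequation into a solution of the other.

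For the first implication, assume $N$ is the least solution of $y\geq(x+uyv)^*$. To show $N=(x+uNv)^*$, let $M:=(x+uNv)^*$. Then $M\leq N$ since $N$ is a solution, hence by monotonicity $(x+uMv)^*\leq(x+uNv)^*=M$, so $M$ is itself a solution; minimality of $N$ forces $N\leq M$, and equality follows. That $N$ also solves $y\geq 1+x+uyv+yy$ is then immediate from $N=(x+uNv)^*\geq 1+x+uNv$ together with $NN=(x+uNv)^*(x+uNv)^*=N$. For leastness among solutions of the second inequation, let $D$ be any such solution. From $D\geq DD$ and $D\geq x+uDv$ we get $(x+uDv)D\leq DD\leq D$, and combined with $1\leq D$ the star induction axiom yields $(x+uDv)^*\leq D$; so $D$ also solves the first inequation, and $N\leq D$ by minimality of $N$.

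For the second implication, assume $D$ is the least solution of $y\geq 1+x+uyv+yy$. To show $D=1+x+uDv+DD$, let $D':=1+x+uDv+DD\leq D$. Then $uD'v\leq uDv\leq D'$, $D'D'\leq DD\leq D'$, and trivially $1,x\leq D'$, so $D'$ is itself a solution; minimality of $D$ forces $D\leq D'$, and equality follows. The same star induction argument as above shows $(x+uDv)^*\leq D$, so $D$ solves the first inequation. For leastness, let $M$ be any solution of $y\geq(x+uyv)^*$; the key step is to pass to the auxiliary element $N':=(x+uMv)^*\leq M$. Using $M\geq N'$ one obtains $uN'v\leq uMv\leq N'$, and together with $1,x\leq N'$ and $N'N'=N'$ this shows that $N'$ is a solution of the second inequation. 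Minimality of $D$ then gives $D\leq N'\leq M$, as required.

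The main obstacle is the very last step: a solution $M$ of the first inequation need not satisfy $MM\leq M$, so one cannot simply feed $M$ itself into the second inequation to compare it with $D$. The trick of replacing $M$ by the star expression $N'=(x+uMv)^*$, which is still bounded by $M$ but is automatically idempotent under multiplication, resolves this asymmetry and explains why the two fixed-point presentations of Dyck's language coincide whenever either admits a least solution.
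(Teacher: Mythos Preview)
Your proof is correct and follows essentially the same approach as the paper. The paper organizes the argument around two observations --- (i) any $y\geq 1+x+uyv+yy$ satisfies $y\geq y^*\geq(x+uyv)^*$, and (ii) whenever $y\geq(x+uyv)^*$ the element $(x+uyv)^*$ itself solves the second inequation --- which are exactly your star-induction step and your $N'=(x+uMv)^*$ trick; the only cosmetic difference is that you establish the fixed-point equalities $N=(x+uNv)^*$ and $D=1+x+uDv+DD$ up front by a one-step monotonicity argument, whereas the paper derives them as by-products of the least-solution reasoning.
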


\begin{proof}
  Let $f$ and $h$ be defined by $f(y) = x + uyv$ and $h(y) = 1 + x + uyv + yy$.  (i)
  If $y\geq h(y)$, then $y\geq f(y)$ and $y\geq 1+yy$, hence $y\geq y^*$ by axioms of
  Kleene algebra, and so $y\geq y^* \geq f(y)^*$ by monotonicity of ${}^*$.  (ii)
  Conversely, if $y\geq f(y)^*$, then $f(y)^*\geq h(f(y)^*)$, because
  \[ h(f(y)^*) \leq 1 + x + uyv + f(y)^*f(y)^* \leq f(y) + f(y)^* \leq f(y)^*. \]
  It follows that if $N$ is the least solution of $y \geq f(y)^*$, then by (i), any
  solution of $y \geq h(y)$ satisfies $y \geq N$, and by (ii), $f(N)^*$ is a
  solution of $y \geq h(y)$, so $f(N)^*= N$ is the least solution of $y \geq h(y)$.

  If $D$ is the least solution of $y \geq h(y)$, then by (ii), any solution of $y\geq
  f(y)^*$ satisfies $y \geq f(y)^* \geq D$, and by (i), $D \geq f(D)^*$. Hence $D$ is
  the least solution of $y\geq f(y)^*$. Then $D=f(D)^*$ and hence $D = DD = 1 + f(D) + DD = h(D)$.
\end{proof}

\begin{theorem} \label{thm-Nxuv}
	Let $K$ be a Kleene algebra and $x, u, v \in K$.
	If $y \geq (x+uyv)^*$ has a least solution $N$ in $K$, then $(u+x+v)^* = (Nv)^*N (uN)^*$.
\end{theorem}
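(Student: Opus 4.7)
My plan is to prove the equality in both directions, with $M := (Nv)^*N(uN)^*$. I first record what I need about $N$: by Proposition \ref{prop-least-solutions}, $N = (x+uNv)^*$, so $N$ is a ${}^*$-closure, giving $1 \leq N$, $NN = N$, $N \geq x$, and $N \geq uNv$. These facts, together with the Kleene-algebra axioms — in particular the right induction rule $ya + b \leq y \Rightarrow ba^* \leq y$ — are the only tools I will use.

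For the easier direction $M \leq (u+x+v)^*$, I observe that $(u+x+v)^*$ is itself a solution of $y \geq (x+uyv)^*$. Since $x, u, v \leq (u+x+v)^*$ and the latter is closed under product, $(u+x+v)^* \geq x + u(u+x+v)^*v$; applying ${}^*$ and using $(a^*)^* = a^*$ gives $(u+x+v)^* \geq (x + u(u+x+v)^*v)^*$. By minimality, $N \leq (u+x+v)^*$, so $(Nv)^*, (uN)^* \leq (u+x+v)^*$, and hence $M \leq (u+x+v)^*$.

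For the reverse direction I verify $1 + M(u+x+v) \leq M$; right induction with $a = u+x+v$ and $b = 1$ then yields $(u+x+v)^* \leq M$. Three of the four resulting bounds are routine. The inequality $1 \leq M$ is immediate. For $Mu \leq M$, since $N \geq 1$ gives $u \leq uN$, one has $(uN)^* u \leq (uN)^*(uN) = (uN)^+ \leq (uN)^*$. For $Mx \leq M$, since $x \leq N$ it suffices to show $MN \leq M$; this reduces, via the ``sliding'' inequality $(uN)^* N \leq N(uN)^*$ (a one-line application of the induction rule to $y = N(uN)^*$, using $NN = N$) and $NN = N$, to $MN = (Nv)^*N(uN)^*N \leq (Nv)^*N \cdot N(uN)^* = (Nv)^*N(uN)^* = M$.

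The main obstacle — the step where the fixed-point property $uNv \leq N$ is essentially used — is $Mv \leq M$. Here the closing bracket $v$ on the right must travel leftward through the opening brackets in $(uN)^*$, absorbing pairs $uNv$ into $N$. I first bound
\[
  (uN)^* v \;=\; v + (uN)^*(uNv) \;\leq\; v + (uN)^* N \;\leq\; v + N(uN)^* \;\leq\; M,
\]
using $a^* = 1 + a^* a$, then $uNv \leq N$, then $(uN)^*N \leq N(uN)^*$, and finally $v \leq Nv \leq (Nv)^* \leq M$ together with $N(uN)^* \leq M$. A dual computation gives $NM = M$ (using the sliding identity $(Nv)^*N = N(vN)^*$ and $NN = N$), so
\[
  Mv \;=\; (Nv)^*N \cdot (uN)^* v \;\leq\; (Nv)^*N \cdot M \;=\; (Nv)^* M \;=\; M,
\]
which completes the verification of $1 + M(u+x+v) \leq M$ and hence the proof.
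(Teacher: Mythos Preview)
Your proof is correct and follows essentially the same strategy as the paper's: the inequality $(Nv)^*N(uN)^* \leq (u+x+v)^*$ is obtained in both by showing $(u+x+v)^*$ solves $y \geq (x+uyv)^*$ and invoking minimality of $N$, and the reverse inequality by verifying that $(Nv)^*N(uN)^*$ is a pre-fixed point for multiplication by $(u+x+v)$ and applying a Kleene-algebra induction rule. The only differences are cosmetic --- you use the right induction rule (checking $1 + M(u+x+v)\leq M$) and break the verification into separate cases with auxiliary sliding lemmas, whereas the paper uses the left rule (checking $(u+x+v)z+1\leq z$) and carries out the computation as a single chain of equalities.
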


\begin{proof}
	Let $N = \mu y.(x+uyv)^*$ and $n=(u+x+v)^*$.
	We first show $N \leq n$, by showing that $n$ solves $(x+uyv)^* \leq y$.
	By monotonicity of $+, \cdot$, and \Star,
	\[
		x+unv \leq x + un^*v \leq n+nn^*n = (1+ nn^*)n = n^*n \leq n^*=n,
	\]
	hence $(x + unv)^* \leq n^* = n$.
	So $N \leq n$, from which
	\begin{eqnarray*}
		(Nv)^*N (uN)^*  &\leq& (u+x+v)^*
	\end{eqnarray*}
	follows using $u, v, N \leq n$ and $(nn)^* = n^* = n = nnn$. 

	Now consider the reverse inequality, $(u+x+v)^* \leq (Nv)^*N (uN)^*$:
	As  $(x+uNv)^* = N$ by Proposition \ref{prop-least-solutions}, we have $(x+uNv)N+1 \leq N$.
	Using this and Kleene algebra identities like $(ab)^*a = a(ba)^*$, we show that $(Nv)^*N (uN)^*$ solves $(u+x+v)z + 1 \leq z$ in $z$:
	\begin{eqnarray*}
		\lefteqn{(u+x+v)(Nv)^*N (uN)^* + 1 }
	\\	&=&	(u+x+v)N (vN)^*(uN)^* + 1
	\\	&=&	uN (vN)^*(uN)^*+xN (vN)^*(uN)^*+vN (vN)^*(uN)^* + 1
	\\	&=&	uN (1+vN (vN)^*)(uN)^*+xN (vN)^*(uN)^*+vN (vN)^*(uN)^* + 1
	\\	&=&	uN (uN)^*+uNvN (vN)^*(uN)^*+xN (vN)^*(uN)^*+vN (vN)^*(uN)^* + 1
	\\	&=&	(x+uNv)N (vN)^*(uN)^*+ uN (uN)^*+vN (vN)^*(uN)^* + 1
	\\	&=&	(x+uNv)N (vN)^*(uN)^*+ (1+vN (vN)^*)(uN)^*
	\\	&=&	(x+uNv)N (vN)^*(uN)^*+ (vN)^*(uN)^*
	\\	&=&	((x+uNv)N+1)(vN)^*(uN)^*
	\\	&\leq&	N (vN)^*(uN)^*
	\\	&=&	(Nv)^*N (uN)^*.
	\end{eqnarray*}
	Since $(u+x+v)^*$ is the least solution of $(u+x+v)z + 1 \leq z$, the claim $(u+x+v)^* \leq (Nv)^*N (uN)^*$ is shown.
\end{proof}

It is worth noticing that these results are generic to Kleene algebras and do not require the \Star-continuity property.
They are all conditioned on the \emph{existence} of the relevant least-fixed-points,
and it is for existence that \Star-continuity will come into play.

\subsection{Normal form theorems}
\label{sec-normal-forms}

Let $\CA = \Z{S, A, F}$ be an automaton with $A = U+X+V$ as in Theorem~\ref{thm-aut},
representing an element $\phi=L(\CA) = SA^*F$ of $K\xR C_2'$.  We first show that
there is a least solution of $y \geq (UyV+X)^*$ in $\Mat nn{K\xR C_2'}$, which is
related to Dyck's context-free language $D \subseteq \{U, X, V\}^*$ of balanced
strings of matrices, with $U$ as ``opening bracket'' and $V$ as ``closing bracket''.
Namely, if concatenation is interpreted by matrix multiplication and the empty
sequence as unit matrix, $D$ becomes a context-free subset of $\mat nn{(K\xR C_2')}$
and the least solution of $y\geq (UyV+X)^*$ its least upper bound.

\begin{lemma} \label{lem-N-matrix-C2'}
	Let $K$ be an $\CR$-dioid, $n \in \N$,  $X \in \mat nn{(\zc{K}{C_2'})}$, $U \in \{0, b, p\}^{n\times n}$ and $V \in \{0, d, q\}^{n\times n}$.
	In $\Mat nn{K\xR C_2'}$,
	\begin{equation} \label{eqn-N-C2'}
		y \geq (UyV+X)^*
	\end{equation}
	has a least solution, namely $N := b(Up+X+qV)^*d$, and $N \in \mat nn{(\zc K{C_2'})}$.
\end{lemma}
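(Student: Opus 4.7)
My plan is to verify four claims in sequence:
(a) $N := b(Up+X+qV)^*d$ is well-defined in $\mat nn{K\xR C_2'}$;
(b) $N$ satisfies $N = (X+UNV)^*$;
(c) $N$ is the least solution of $y \geq (X+UyV)^*$; and
(d) $N$ has entries in $\zc K{C_2'}$.
Throughout I would rely on the polycyclic identities $bd = pq = 1$ and $bq = pd = 0$ in $C_2'$, and on the fact that entries of $X$, being in $\zc K{C_2'}$, commute with all elements of $C_2'$ inside $K\xR C_2'$. Claim~(a) is immediate from Theorem~\ref{thm-mat-starKA}: $M := (Up+X+qV)^*$ exists in $\mat nn{K\xR C_2'}$, so $N := bMd$ makes sense.

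For (b), by Proposition~\ref{prop-least-solutions} it is enough to check the four inequalities $1, X, UNV, NN \leq N$. The first two are easy: $N \geq bd = 1$ (from $M \geq 1$), and $N \geq bXd = X(bd) = X$ (from $M \geq X$ and $X$-entries commuting with $b, d$). For the remaining two I would expand $M = \sum_k (Up+X+qV)^k$ by \Star-continuity and argue combinatorially. For $UNV = (Ub) M (dV) \leq bMd$: each summand $(Ub) \cdot w \cdot (dV)$ with $w \in \{Up, X, qV\}^*$ either vanishes --- when $bq = 0$ or $pd = 0$ hits a mismatch at the sandwich boundary --- or simplifies via $pq = bd = 1$ to a term already present in $bMd$. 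For $NN = bM(db)Md \leq bMd$: exploiting the entrywise identity $(qV)(Up) = q(VU)p$, every interior ``loose'' $db$ in $MdbM$ can be folded against a $q(VU)p$ factor coming from $M^2 \leq M$, bringing the expression inside the outer $b\ldots d$ scope back into $bMd$.

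For (c), let $M'$ satisfy $M' \geq (X+UM'V)^*$. By \Star-continuity in $\Mat nn{K\xR C_2'}$, showing $N \leq M'$ reduces to showing $b(Up+X+qV)^k d \leq M'$ for every $k \in \N$, which I would prove by induction on $k$. A length-$k$ product $w$ of factors from $\{Up, X, qV\}$ yields $bwd = 0$ unless its internal brackets are balanced with respect to the outer $b, d$; in the balanced case the polycyclic identities collapse $bwd$ into a finite product whose factors are entries of $X$, Boolean scalars $(UV)_{ij}$, and wrapped terms $U(bw'd)V$ for strictly shorter balanced $w'$. Each such factor is $\leq M'$ (using $M' \geq X$, $M' \geq UV$ from $M' \geq UM'V \geq U \cdot 1 \cdot V$, and $U(bw'd)V \leq UM'V \leq M'$ by induction), so the whole product is $\leq (M')^m \leq M'$ since $M' \geq 1$ and $M' \geq M'M'$.

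The same case analysis furnishes (d): every nonzero surviving $bwd$ decomposes into factors (entries of $X$, Boolean scalars, and terms $U(bw'd)V$ for shorter balanced $w'$) which, by induction, all have entries in $\zc K{C_2'}$; since the centralizer is closed under matrix products and under $\CR$-sums (Theorem~\ref{thm-centralizer}\ref{item-4}), $N = \sum_k b(Up+X+qV)^k d$ itself has entries in $\zc K{C_2'}$. The main obstacle I anticipate is the combinatorial bookkeeping for $NN \leq N$ in (b) and for the inductive step of (c) --- in particular the absorption of the ``loose'' $db$ in $bM(db)Md$ via the identity $(qV)(Up) = q(VU)p$, which has to be tracked carefully across arbitrary bracket interleavings inside $M$.
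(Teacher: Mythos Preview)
Your plan (a)--(d) is close to the paper's, but the argument for $NN \leq N$ in step (b) has a genuine gap. The identity $(qV)(Up) = q(VU)p$ you invoke is just associativity plus the fact that $p,q$ are scalars; it does nothing to eliminate the stray $db$ in $bM\,db\,Md$, and in $C_2'$ the element $db$ is neither $0$ nor $1$, so no bracket reduction alone removes it. The paper's fix is to establish \emph{first} two facts about the approximants $bT^md$ (with $T := Up+X+qV$): that each has all entries in the centralizer, and that $bT^md \leq T^m$. Both follow from the identification $bT^md = \sum(\{U,X,V\}^m \cap D)$ with the Dyck language $D$ over $\{U,X,V\}$ --- essentially the induction you already sketch in (c) and (d). With these in hand, $NN \leq N$ is a two-line computation: since $bT^kd$ commutes with the scalar $b$,
\[
(bT^kd)(bT^ld) \;=\; b\,(bT^kd)\,T^ld \;\leq\; b\,T^k\,T^ld \;\leq\; N,
\]
and summing over $k,l$ by \Star-continuity finishes. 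So the remedy is to reorder: prove the centralizer property and the bound $bT^md\leq T^m$ at the approximant level before attacking $NN\leq N$.

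Two smaller points. In (b) you announce the equality $N = (X+UNV)^*$, but the four inequalities $1,X,UNV,NN \leq N$ only yield $N \geq (X+UNV)^*$; the equality comes only after (c), via Proposition~\ref{prop-least-solutions}. And your $UNV \leq N$ argument is also vague as written; it too becomes clean via the same Dyck identification, since $U\cdot A\cdot V$ is balanced of length $m+2$ whenever $A$ is balanced of length $m$, hence $U(bT^md)V \leq bT^{m+2}d \leq N$.
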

When multiplying $b, d, p, q$ with $n\times n$-matrices, we identify them with
corresponding diagonal matrices.\footnote{~The proof will show that $N$ is the least
  upper bound of a context-free set $D$ of $n\times n$-matrices over $Z=\zc K{C_2'}$
  (with $DD\subseteq D$) and the least solution of the matrix inequation $y\geq
  1+X+UyV+yy$. Alternatively, by Theorem \ref{thm-centralizer}, \ref{item-4}, $Z$ is a $\CC$-dioid,
  and by \cite{Leiss2016}, its $n\times n$ matrix semiring also is. Hence $D$ has a
  least upper bound $\sum D$ and $(\sum D)(\sum D)= \sum (DD)\leq \sum D$. Since
  $U,V$ are not matrices over $Z$, one needs additional arguments to show that $\sum
  D$ is the least solution of the matrix inequation in $\Mat nnZ$ and least in $\Mat
  nn{K\xR C_2'}$. Our proof here is more elementary and uses properties of $\CR$-dioids
  only.}

\begin{proof}\setcounter{claim}{0}
  Let $D$ and $D'$ be the Dyck languages over $\{U, X, V\}$ and $\{Up, X, qV\}$ with
  brackets $U, V$ and $Up, qV$, respectively.  By interpreting concatenation as matrix
  multiplication and the empty sequence as unit matrix, elements of $D$ and $D'$
  belong to $\Mat nn{K\xR C_2'}$.  To simplify the notation, we write $T$ for
  $Up+X+qV$ and $Z$ for $\zc{K}{C_2'}$.

  \begin{Claim} \label{claim-Lemma-1}
    Every $A \in D$ evaluates in $\Mat nn{K\xR C_2'}$ to an element of $\mat nnZ$.
  \end{Claim}

	\begin{subproof}
		This is clear for $A=1$ and $A=X$, and if $A,B\in D$ evaluate to $A, B \in \mat nnZ$, then $AB \in \mat nnZ$, because $Z$ is a semiring.
		Finally, consider $A=UBV$ with $B \in \mat nnZ$.
		Since elements of $Z$ and $C_2'$ commute with each other in $K\xR C_2'$, we have
		\[
		(UBV)_{ij} = \sum_{k,l=1}^n U_{ik}(B_{kl}V_{lj}) = \sum_{k,l=1}^n B_{kl}(U_{ik}V_{lj}),
		\]
		and since $U_{ik} \in \{0, b, p\}$ and $V_{lj} \in \{0, d, q\}$, we obtain $U_{ik}V_{lj} \in \{0, 1\}$, hence $(UBV)_{ij} \in Z$, and so $A \in \mat nnZ$.
        \end{subproof}
		It follows that $bAd=A=pAq$ for each $A\in D$ and $\sum(\{U, X, V\}^m
                \cap D) \in \mat nnZ$ for each $m\in
                \N$.
	\begin{Claim} \label{claim-Lemma-1b}
		$bT^md = \sum(\{U, X, V\}^m \cap D)$ and
                $bT^md\leq T^m$, for each $m \in \N$.
	\end{Claim}
        \begin{subproof}
		Let $A' \in D'$ be obtained from $A \in D$ by replacing factors $U$
                by $Up$ and factors $V$ by $qV$. Then as matrices, $A'=A$: clearly
                $1'=1$ and $X'=X$, and by induction, for $A,B\in D$, $(AB)' = A'B' =
                AB$ and $(UAV)' = UpA'qV = UpAqV = UAV$, as $A$ belongs to $\mat nnZ$
                by claim~\ref{claim-Lemma-1}.  Moreover, if $A \in D \cap \{U, X,
                V\}^m$, then the matrix value of $A' \in \{Up, X, qV\}^m \cap D'$ is
                a summand of $T^m$ and thus $A=A' \leq T^m$. By monotonicity,
                  $A=bAd\leq bT^md$. It follows that
                \[ \sum(\{U, X, V\}^m \cap D)\leq T^m \quad\textrm{and}\quad
                   \sum(\{U, X, V\}^m \cap D)\leq bT^md.
                   \]
		To show the reverse of the second inequation, let $A' \in \{Up, X,
                qV\}^m$ be a summand of $T^m=(Up + X + qV)^m$ that is not obtained
                from any $A \in \{U, X, V\}^m \cap D$ by this substitution.  Then
                $bA'd = 0$, because $A' \in (D'qV)^*D'(UpD')^* \setminus D'$ and $b,
                d$ commute with factors from $D'$ (with values in $\mat nnZ$), so in
                $bA'd$, $b$ can be moved over factors to the right, until it meets
                $q$ and gives $bq=0$, or $d$ can be moved over factors to the left
                until it meets $p$ and gives $pd=0$.  It follows that $bT^md \leq
                \sum(\{U, X, V\}^m \cap D) \leq T^m$.
	\end{subproof}

	By ${}^*$-continuity, claim~\ref{claim-Lemma-1b} implies that the set $D$ of
        matrices obtained from the context-free language $D\subseteq\{U,X,V\}^*$ has
        a least upper bound in $\Mat nn{K\xR C_2'}$:
	\begin{eqnarray*}
		N = bT^*d 
			&=& \Sum{bT^md}{m\in\N} \\
			&=& \Sum{D \cap \{U, X, V\}^m}{m\in\N} = \sum D.
	\end{eqnarray*}

	\begin{Claim}
		$N \in (\zc K{C_2'})^{n\times n}$.
	\end{Claim}
	\begin{subproof}
		We have seen $bT^md\in \mat nnZ$ for each $m\in\N$. So for each $c\in
                C_2'$, $c(bT^md) = (bT^md)c$ and
                \[ cN = cbT^*d = \Sum{cbT^md}{m\in\N}) = \Sum{bT^mdc}{m\in\N}) = bT^*dc = Nc,
                \]
                since $\Mat nn{K\xR C_2'}$ is \Star-continuous. It follows that each
                entry of $N$ commutes with $c$. 
	\end{subproof}

	\begin{Claim}\label{claim-Lemma-1d}
		$N$ is the least solution of $y \geq (UyV+X)^*$ in $\Mat nn{K\xR C_2'}$.
	\end{Claim}
	\begin{subproof}
		We show that $N$ is the least solution of $y \geq 1+X + UyV + yy$ and
                apply Proposition \ref{prop-least-solutions}.  By claim~\ref{claim-Lemma-1b},
                we get $1+X \leq N$ and since $bT^md$ is a finite sum of balanced
                sequences of length $m$ over $\{U,X,V\}$, by distributivity $UbT^mdV$ is a sum of
                balanced sequences of length $m+2$, hence
		\[
			UbT^mdV \leq \sum(\{U, X, V\}^{m+2} \cap D) = bT^{m+2}d \leq N.
		\]
		Thus by ${}^*$-continuity, $UNV = UbT^*dV =
                \sum\setof{UbT^mdV}{m\in\N} \leq N$.
		It remains to show $NN \leq N$.
		By \Star-continuity,
		\begin{eqnarray*}
			NN	&=&	\sum_{k\in\N}bT^kd N = \sum_{k,l\in\N}{bT^kdbT^ld}.
		\end{eqnarray*}
		By claim~\ref{claim-Lemma-1} and claim~\ref{claim-Lemma-1b}, $bT^kd
                \in \mat nnZ$, so $(bT^kd)bT^ld = b(bT^kd)T^ld$, and $bT^kd \leq
                T^k$, whence
                \[ NN = \sum_{k,l\in\N}b(bT^kd)T^ld \leq \sum_{k,l\in\N}bT^kT^ld = N. \]

		Therefore, $N$ is a solution of $y \geq 1 + X + UyV + yy$.  To show
                that it is the least solution, suppose $y \in \Mat nn{K\xR C_2'}$
                satisfies $y \geq 1 + X + UyV + yy$.  As $N=\sum D$, it is sufficient
                to show $A \leq y$ for each $A \in D$.  This is clear for $1$ and
                $X$, and if $A, B \in D$ satisfy $A, B \leq y$, then $UAV \leq UyV
                \leq y$ and $AB \leq yy \leq y$ by monotonicity.  So $y$ is an upper
                bound of $D$.
	\end{subproof}

	By the last two claims, the Lemma is proven.
\end{proof}


\begin{example} \label{expl-2}
	In the most simple case $n=1$, with $\Mat nn{K\xR C_2'} \simeq K\xR C_2'$, suppose $U=b, V=d$ and $X = x \in K$.
	Then $N = b(bp+x+qd)^*d = \sum D$ for Dyck's language $D \subseteq \{b, x, d\}^*$.
	The proof shows $N=\sum D \in \zc{K}{C_2'}$.
\qee
\end{example}

\begin{theorem}[First Normal Form]\label{thm-nf-C'}
	Let $K$ be an $\CR$-dioid.
	For each $\phi \in K\xR C_2'$ there are $n \in \N$, $S \in \B^{1\times n}$, $F \in \B^{n\times 1}$,
	$U \in \{0, b, p\}^{n\times n}$, $V \in \{0, d, q\}^{n\times n}$ and $X \in K^{n\times n}$ such that
	\[
		\phi = S(NV)^*N (UN)^*F,
	\]
	where $N \in (\zc{K}{C_2'})^{n\times n}$ is the least solution of $y \geq (UyV+X)^*$ in $\Mat nn{K\xR{C_2'}}$.
\end{theorem}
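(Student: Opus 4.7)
The plan is to assemble the statement from three ingredients that are already available in the preceding development. First, I would invoke the Kleene representation theorem (Theorem \ref{thm-aut}) to obtain, for the given $\phi \in K \xR C_2'$, an automaton $\CA = \Z{S, A, F}$ whose transition matrix $A$ already splits as $A = U + X + V$ with $U \in \{0, b, p\}^{n\times n}$, $V \in \{0, d, q\}^{n\times n}$, $X \in K^{n\times n}$, and $S \in \B^{1\times n}$, $F \in \B^{n\times 1}$. This gives $\phi = S(U+X+V)^*F$ and fixes the matrices $S, F, U, X, V$ that appear in the statement.

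Next, I would lift the scalar identity $(u+x+v)^* = (Nv)^*N(uN)^*$ from Theorem \ref{thm-Nxuv} to the matrix Kleene algebra $\Mat nn{K\xR C_2'}$. This lift is legitimate because Theorem \ref{thm-mat-starKA} tells us that $\Mat nn{K\xR C_2'}$ is itself a \Star-continuous Kleene algebra, and because the proof of Theorem \ref{thm-Nxuv} uses only Kleene-algebra identities together with the hypothesis that the fixed-point inequation has a least solution. Lemma \ref{lem-N-matrix-C2'} supplies exactly such a least solution $N$ of $y \geq (UyV + X)^*$ in $\Mat nn{K\xR C_2'}$ and, crucially, certifies that $N \in (\zc K{C_2'})^{n\times n}$, which is the extra centralizer information demanded by the normal form statement.

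Applying Theorem \ref{thm-Nxuv} in the ambient Kleene algebra $\Mat nn{K\xR C_2'}$, with the matrices $U, X, V$ playing the roles of the scalars $u, x, v$, I conclude
\[
  (U+X+V)^* \;=\; (NV)^* N (UN)^*.
\]
Substituting this into the expression $\phi = S(U+X+V)^*F$ produced in the first step yields $\phi = S(NV)^* N (UN)^* F$, which is the desired first normal form.

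I do not expect any real obstacle here: the heavy lifting has already been done, in Lemma \ref{lem-N-matrix-C2'} for the existence of $N$ inside the centralizer, and in Theorem \ref{thm-Nxuv} for the algebraic rewriting. The only point to verify carefully is that Theorem \ref{thm-Nxuv} was formulated generically enough (over an arbitrary Kleene algebra, conditioned only on the existence of the least solution) that its proof transposes verbatim to $\Mat nn{K\xR C_2'}$ — and inspection of the statement shows that it was. So the proof is essentially a one-line citation of the three prior results, composed in order.
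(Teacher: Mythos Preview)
Your proposal is correct and follows essentially the same route as the paper's own proof: invoke Theorem~\ref{thm-aut} to get $\phi = S(U+X+V)^*F$, use Lemma~\ref{lem-N-matrix-C2'} to obtain the least solution $N \in (\zc K{C_2'})^{n\times n}$ of $y \geq (UyV+X)^*$, and then apply Theorem~\ref{thm-Nxuv} in the Kleene algebra $\Mat nn{K\xR C_2'}$ to rewrite $(U+X+V)^* = (NV)^*N(UN)^*$. Your explicit remark that Theorem~\ref{thm-Nxuv} was stated over an arbitrary Kleene algebra (so that it transposes verbatim to the matrix algebra via Theorem~\ref{thm-mat-starKA}) is exactly the justification the paper relies on.
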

For $n=1$, $N$ commutes with $U$ and $V$, so $(NV)^kN (UN)^l = V^kNU^l$, and by \Star-continuity, $(NV)^*N (UN)^* = V^*NU^*$.
This is related to the normal form for the extension $P_m'[X]$ of the polycyclic monoid $P_m'$ in Section \ref{sec-polycyclic}.

\begin{proof}
	By definition of $K\xR{C_2'}$, there is $R \in \CR(K\times C_2')$ such that $\phi = [R]$.
	As in Theorem \ref{thm-aut}, by induction on $R$ one constructs an automaton $\Z{S, A, F}$ with
	\[
		\phi = [R] = L(\Z{S, A, F}) = SA^*F
	\]
	and a transition matrix $A \in (K\xR C_2')^{n\times n}$ of the form $A = U+X+V$ where $U \in \{0, b, d\}^{n\times n}$, $X \in K^{n\times n}$
	and $V \in \{0, d, q\}^{n\times n}$, for some $n$.
	By Lemma \ref{lem-N-matrix-C2'}, $y \geq (UyV + X)^*$ has a least solution $N$ in $\Mat nn{K\xR C_2'}$, and
	\[
		N \in (\zc K{C_2'})^{n\times n}.
	\]
	By Theorem \ref{thm-Nxuv}, this $N$ allows us to write $A^*$ as
	\[
		A^*=(U+X+V)^* = (NV)^*N (UN)^*
	\]
	and obtain the normal form $\phi = [R] = S A^* F = S (NV)^* N (UN)^* F$.
\end{proof}

While Theorem \ref{thm-nf-C'} gives a generic normal form for an element $\phi$ of $K\xR
C_2'$, it is not straightforward to compute the matrix $N$ occurring in the normal
form of $\phi$. The following example demonstrates how
 $N$ is obtained from an automaton for $\phi$ through the construction of Lemma \ref{lem-N-matrix-C2'}.
In Section \ref{sec-combining-nfs} we will show how to compute a normal form
inductively from a regular expression $\phi$. 

\begin{example}\label{expl-3}
	Let $P_2=\{\p0, \p1\}$, $Q_2=\{\q0, \q1\}$, and $K = \CR\{a, b\}^*\xR C_2'$.
	The element $\phi=(a\p1)^*(\q1b)^* \in K$ is represented as $\phi = L(\CA) = SA^*F$ by the automaton $\CA=\Z{S, A, F}$ of Figure~\ref{fig-1} with initial state 1 and accepting state 3.
	\begin{figure}[ht]
	\qquad\quad\begin{minipage}{7.1cm}
	\(
		\left\langle
		\begin{pmatrix}
			1	& 0	& 0	& 0
		\end{pmatrix}, \begin{pmatrix}
			0	& a	& 1	& 0 \\
			\p1	& 0	& 0	& 0 \\
			0	& 0	& 0	& \q1 \\
			0	& 0	& b	& 0
		\end{pmatrix}, \begin{pmatrix}
			0 \\
			0 \\
			1 \\
			0
		\end{pmatrix}\right\rangle
	\)
	\vskip-1mm
	\caption{$\CA = \Z{S, A, F}$ \label{fig-1}}
	\end{minipage}
	\hskip1.5cm
	\begin{minipage}{3.6cm}
	\vskip-2mm\quad\begin{tikzcd}[row sep=1.2cm, column sep=1.2cm]
		  1 \arrow[d, "a" left, shift right=0.6ex]\arrow[r,"1" above]
                  & 3 \arrow[d, "\q1" left, shift right=.6ex]
               \\ 2\arrow[u, "\p1" right, shift right=0.6ex]
                  & 4\arrow[u, "b" right, shift right=0.6ex]
		\end{tikzcd}
	\vskip-1mm
	\caption{Graph of $A$\label{fig-2}}
	\end{minipage}
	\end{figure}

	The iteration $A^*$ of $A$ calculated
	using the formula (\ref{mat-star}) can be read off from the graph: the entry $(A^*)_{i,j}$ describes the labellings on paths from node $i$ to node $j$.
	Hence, with $\bar{a} = a\p1$ and $\bar{b} = \q1 b$, we have
	\[
		A^* = \begin{pmatrix}
			\bar{a}^*	& \bar{a}^*a		& \bar{a}^*\bar{b}^*		& \bar{a}^*\bar{b}^*\q1 \\
			\p1 \bar{a}^*\	& 1 + \p1\bar{a}^*a\	& \p1\bar{a}^* \bar{b}^*\	& \p1\bar{a}^* \bar{b}^*\q1\ \\
			0		& 0			& \bar{b}^*			& \bar{b}^*\q1 \\
			0		& 0			& b\bar{b}^*			& 1 + b\bar{b}^*\q1
		\end{pmatrix}.
	\]

	To obtain the normal form $(NV)^* N (UN)^*$ of $A^*$, split $A$ as $U+X+V$ with
	\[
		U = \begin{pmatrix}
			0	& 0	& 0	& 0 \\
			\p1	& 0	& 0	& 0 \\
			0	& 0	& 0	& 0 \\
			0	& 0	& 0	& 0
		\end{pmatrix}, \quad X = \begin{pmatrix}
			0	& a	& 1	& 0 \\
			0	& 0	& 0	& 0 \\
			0	& 0	& 0	& 0 \\
			0	& 0	& b	& 0
		\end{pmatrix}, \quad V = \begin{pmatrix}
			0	& 0	& 0	& 0 \\
			0	& 0	& 0	& 0 \\
			0	& 0	& 0	& \q1 \\
			0	& 0	& 0	& 0
		\end{pmatrix}.
	\]
	To determine $N = \p0(U\p1+X+\q1V)^*\q0$, let $\widetilde A = (U\p1+X+\q1V)$ and read off ${\widetilde A}^*$ from the graph of $\widetilde A$,
	obtaining a copy of $A^*$ with $\tilde{a} = a\p1^2, \tilde{b} = \q1^2b, \p1^2, \q1^2$ instead of $\bar{a}, \bar{b}, \p1, \q1$, respectively.
	The entries of $N$ are then $$N_{i,j} = \p0 (\widetilde A^*)_{i,j}\q0.$$
	The resulting matrix is as follows, writing $\widehat L$ for $\sum L$ with $L = \setof{a^nb^n}{n\in\N}$,
	\[
	\renewcommand{\arraycolsep}{4pt}
		N = \p0 \begin{pmatrix}
			\tilde{a}^*		& \tilde{a}^*a		& \tilde{a}^*\tilde{b}^*	& \tilde{a}^*\tilde{b}^*\q1^2 \\
			\p1^2\tilde{a}^*	& 1 + \p1^2\tilde{a}^*a	& \p1^2\tilde{a}^*\tilde{b}^*	& \p1^2\tilde{a}^* \tilde{b}^*\q1^2 \\
			0			& 0			& \tilde{b}^*			& \tilde{b}^*\q1^2 \\
			0			& 0			& b\tilde{b}^*			& 1 + b\tilde{b}^*\q1^2
		\end{pmatrix} \q0 = \begin{pmatrix}
			1	& a	& \widehat L	& a\widehat L \\
			0	& 1	& \widehat Lb	& \widehat L \\
			0	& 0	& 1		& 0 \\
			0	& 0	& b		& 1
		\end{pmatrix}.
	\]
	For example, $N_{1,3} = \p0\tilde{a}^*\tilde{b}^*\q0 = \p0(a\p1^2)^*(\q1^2b)^*\q0 = \widehat L$ is calculated as in Example \ref{expl-1}.
	It follows that
	\[
		NV = \begin{pmatrix}
			0	& 0	& 0	& \widehat L \q1 \\
			0	& 0	& 0	& \widehat Lb \q1 \\
			0	& 0	& 0	& \q1 \\
			0	& 0	& 0	& b \q1
		\end{pmatrix}, \qquad UN = \begin{pmatrix}
			0	& 0	& 0		& 0 \\
			\p1	& \p1 a	& \p1\widehat L	& \p1 a\widehat L \\
			0	& 0	& 0		& 0 \\
			0	& 0	& 0		& 0
		\end{pmatrix},
	\]
	which imply $(NV)^* = 1 + NV(b\q1)^*$ and $(UN)^* = 1 + (\p1 a)^*UN$.
	By matrix multiplication, one obtains the normal form $(NV)^* N (UN)^* = A^*$.

	To determine $N$, one can also use that $N$ is the least solution of $y \geq (UyV+X)^*$ in $\Mat 44K$, hence $N = (UNV+X)^*$.
	Let $e_i$ be the unit column vector with 1 in the $i$-th row, 0 else, $e_i'$ its transpose row vector.
	Then $e_ie_j'$ is the $4\times 4$-matrix with 1 at $(i, j)$, 0 else, and $e_i'e_j$ the $1\times 1$-matrix with entry $\delta_{i,j}$.
	Since
	\[
		UNV = (e_2\p1e_1')(\sum_{1\leq i,j\leq 4} e_iN_{i,j} e_j')(e_3\q1e_4') = e_2\p1 N_{1,3}\q1e_4' = e_2N_{1,3}e_4',
	\]
	the graph of $X+UNV$ is that of $X$ with additional edge $2\edge{N_{1,3}}4$, from which one can read off $(X+UNV)^*$ as
	\[
	\renewcommand{\arraycolsep}{2ex}
		(X+UNV)^* = \begin{pmatrix}
			1	& a	& 1+aN_{1,3}b	& aN_{1,3} \\
			0	& 1	& N_{1,3}b	& N_{1,3} \\
			0	& 0	& 1		& 0 \\
			0	& 0	& b		& 1
		\end{pmatrix} = N.
	\]
	Since $N$ is the least solution of $y \geq (UyV+X)^*$, $N_{1,3}$ is the least solution of $y_{1,3} \geq 1 + a y_{1,3}b$,
	i.e.~$\mu x(1+axb) = \sum L$ for $L = \setof{a^nb^n}{n\in\N} \in \CC K$, leading to the matrix $N$ shown above.
\qee
\end{example}

\subsection{Reduced normal form}\label{sec-reduced-nf}

We conjectured in \cite{HopkinsLeiss23} that the normal form
$S(NV)^*N(UN)^*F$ for $\phi \in K\xR C_2'$ given in Theorem~\ref{thm-nf-C'} can be
simplified to $SNF$ for elements $\phi \in \zc{K}{C_2'}$. We can now prove this under
the additional assumption that $K$ is non-trivial and has no zero divisors. 

\begin{lemma} \label{lem-embedding-KxC}
	Let $m \geq 2$, $g:C_m' \to C_2'$ the $\CR$-embedding of Lemma \ref{lem-recoding-C'}, and $K$ an $\CR$-dioid.
	There is an $\CR$-embedding $\ol{\, \cdot\,}:K \xR C_m' \to K \xR C_2'$, given by
	\[
		\ol{[R]} = \Sum{a \cdot g(b)}{(a,b) \in R} \quad \textrm{for }R \in \CR(K\times C_m),
	\]
	which maps $\zc{K}{C_m'}$ to $\zc{K}{C_2'}$; for $m=2$, it is the identity on
        $\zc{K}{C_2'}$.
\end{lemma}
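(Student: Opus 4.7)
My plan is to construct $\overline{\,\cdot\,}$ by applying Corollary \ref{cor-injectivity} to the pair of injective $\CR$-morphisms $\mathrm{id}_K : K \to K$ and $g : C_m' \to C_2'$ from Lemma \ref{lem-recoding-C'}. If $K$ is trivial the statement is vacuous, since by Lemma \ref{lem-non-triv} both $K\xR C_m'$ and $K\xR C_2'$ are then trivial; so I would assume $K$ non-trivial, in which case all four $\CR$-dioids appearing in the square of Corollary \ref{cor-injectivity} are non-trivial and the corollary delivers, without further work, an injective $\CR$-morphism $h_{\mathrm{id}_K\times g} : K\xR C_m' \to K\xR C_2'$. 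I would take this as the definition of $\overline{\,\cdot\,}$.

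To match the explicit formula in the statement, I would specialize the general formula of Corollary \ref{cor-injectivity}, namely $h([R]) = \sum{}'\{fa\otimes' gb : (a,b)\in R\}$, with $f=\mathrm{id}_K$. Under the paper's convention of writing $a\cdot g(b)$ for $a\otimes' g(b)$ in $K\xR C_2'$ (since $K$ and $C_2'$ are disjoint factors), this specialization is precisely the formula given for $\overline{[R]}$. The centralizer-preservation claim then follows from Theorem \ref{thm-centralizer}\ref{item-3}: any $\phi\in\zc{K}{C_m'}$ has a representative $R\in\CR(K\times C_m')$ with $R\subseteq K\times\{0,1\}$, and because $g$ is a semiring morphism it fixes $0$ and $1$ in $C_2'$, so the formula for $\overline{[R]}$ collapses to $\sum\{a : (a,1)\in R\}\in \top_1(K)$; every element of $\top_1(K)$ commutes with every element of $C_2'$ in $K\xR C_2'$ by the relative commutativity of $\top_1$ and $\top_2$, so $\overline{[R]}\in \zc{K}{C_2'}$.

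The same calculation also gives the final assertion for $m=2$: although the coding $g : C_2'\to C_2'$ of Lemma \ref{lem-recoding-C'} is genuinely not the identity on all of $C_2'$ (it sends $p_0\mapsto bp$, $p_1\mapsto bp^2$, etc.), any $\phi\in\zc{K}{C_2'}$ admits a representative $R\subseteq K\times\{0,1\}$, and on such an $R$ the values of $g$ that enter the formula are only $g(0)=0$ and $g(1)=1$, so $\overline{[R]}=[R]$. The only mildly delicate step I anticipate is bookkeeping: verifying that the formula produced by Corollary \ref{cor-injectivity} genuinely coincides with the formula in the lemma's statement under the disjoint-factor conventions. Everything else reduces to invocations of results already in hand.
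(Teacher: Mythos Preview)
Your proposal is correct and follows essentially the same route as the paper: invoke Corollary~\ref{cor-injectivity} with $f=\mathrm{id}_K$ and $g$ from Lemma~\ref{lem-recoding-C'} to obtain the injective $\CR$-morphism, then use Theorem~\ref{thm-centralizer}\ref{item-3} together with $g(0)=0$, $g(1)=1$ to handle the centralizer claims. Your explicit treatment of the trivial-$K$ case is a nice addition that the paper glosses over.

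One small wrinkle: your sentence ``collapses to $\sum\{a : (a,1)\in R\}\in \top_1(K)$'' is not quite right as stated, because the set $\{a : (a,1)\in R\}$ need not be regular in $K$, so its supremum in $K$ need not exist. The cleaner (and paper's) phrasing is simply that $(\mathrm{id}_K\times g)$ fixes $R$ pointwise when $R\subseteq K\times\{0,1\}$, so $\overline{[R]}=[R]$, and then Theorem~\ref{thm-centralizer}\ref{item-3} applied in $K\xR C_2'$ gives $[R]\in\zc{K}{C_2'}$ directly. This is a cosmetic fix; your argument already contains all the needed ingredients.
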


\begin{proof}
Let $\ol{\,\cdot\,}$ be the induced injective $\CR$-morphism $h_\fg$ for the embeddings
$f=\Id_K$ and $g:C_m'\to C_2'$ in
\[
\begin{tikzcd}[row sep=1.6cm, column sep=1.1cm]
  K_1 \arrow[r,"\top_1'"] \arrow[d,"f"]
  & K_1\xR C_m' \arrow[d,dotted,"{h_{\fg}}"]
  & \arrow[l,"\top_2'" above]\arrow[d,"g"] C_m'
\\    K \arrow[r,"\top_1"] & K\xR C_2' & \arrow[l,"\top_2" above] C_2'.
\end{tikzcd}
\]
according to Corollary \ref{cor-injectivity}.
For $R\in \CR(K\times C_m')$, the element $[R]'\in K\xR C_m'$ is mapped to
\[ \ol{[R]'} = [\fg(R)] = \sum\setof{f(a)\otimes g(b)}{(a,b)\in R} . \]
	By Theorem \ref{thm-centralizer} \ref{item-3}, each element of $\zc{K}{C_m'}$ is the congruence class $[R]'$ of some relation $R \in \CR(K\times C_m')$ with $R \subseteq K \times \{0, 1\}$.
	Since $g(0)=0$ and $g(1)=1$, we have $[\fg(R)] = [R]$.
	Hence $\ol{\,\cdot\,}$ restricts to an $\CR$-morphism $\ol{\,\cdot\,}:\zc{K}{C_m'} \to \zc{K}{C_2'}$.
	For $m=2$, this is the identity on $\zc{K}{C_2'}$, since $\top_1=\top_1'$, $\top_2=\top_2'$ and
        $\Id_K\!\times g$ leaves $R$ fixed.
\end{proof}

\begin{corollary}[Reduced Normal Form] \label{cor-nf-C'}
	Let $K$ be a non-trivial $\CR$-dioid without zero divisors. Let $\phi = SA^*F \in K\xR C_2'$ with $A=U+X+V$ and $n, S, F, U, X, V$ and $N$ as in Theorem \ref{thm-nf-C'}.
	If $\phi \in \zc K{C_2'}$, then $\phi=SNF$.
\end{corollary}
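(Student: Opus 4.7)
The plan is to use \Star-continuity of $K\xR C_2'$ to expand
\[ \phi = S(NV)^*N(UN)^*F \;=\; \sum_{k,l\geq 0}\phi_{k,l}, \qquad \phi_{k,l} := S(NV)^kN(UN)^lF, \]
with $\phi_{0,0}=SNF$, and to show that every summand with $(k,l)\neq(0,0)$ vanishes. Each $\phi_{k,l}\leq \phi \in \zc{K}{C_2'}$, and Corollary~\ref{cor-zc-bounded} makes $\zc{K}{C_2'}$ downward closed (this is where the hypotheses that $K$ is non-trivial and has no zero divisors enter), so $\phi_{k,l}\in \zc{K}{C_2'}$. Consequently, by Theorem~\ref{thm-centralizer}\,\ref{item-3}, each $\phi_{k,l}$ can be written as $a_{k,l}\otimes 1$ for some $a_{k,l}\in K$.

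Fix $(k,l)\neq (0,0)$ and expand $\phi_{k,l}$ directly from the matrix product. A non-zero scalar contribution along a path in the automaton has the shape
\[ \alpha_0\, v_1\, \alpha_1\, v_2 \cdots v_k\, \alpha_k\, u_1\, \alpha_{k+1}\cdots u_l\,\alpha_{k+l}, \]
with $\alpha_i$ entries of $N$ in $\zc{K}{C_2'}$, each $v_j\in\{d,q\}$, and each $u_j\in\{b,p\}$. Since every $\alpha_i$ commutes with every element of $C_2'$, this product rearranges to $(\alpha_0\cdots \alpha_{k+l})\cdot w$ with $w = v_1\cdots v_k u_1\cdots u_l \in Q_2^k P_2^l$. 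Applying Theorem~\ref{thm-centralizer}\,\ref{item-3} once more to write the centralizer product $\alpha_0\cdots\alpha_{k+l}$ as $b' \otimes 1$ with $b'\in K$, and then grouping path contributions by $w$, I obtain
\[ \phi_{k,l} \;=\; \sum_{w\in Q_2^k P_2^l} b_w \otimes w, \qquad b_w \in K. \]

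Equating the two descriptions yields $b_w\otimes w \leq \phi_{k,l} = a_{k,l}\otimes 1$ for every $w$. If some $b_w \otimes w$ were non-zero, then Corollary~\ref{cor-tensor-ordering}(ii) would force $w\leq 1$ in $C_2'$; but every such $w$ lies in $Q_2^k P_2^l$ with length $k+l\geq 1$, and by the uniqueness of the $Q_2^*P_2^*$-representation of $C_2'$-elements in Proposition~\ref{prop-C2'}, $w\neq 1$ in $C_2'$ and hence $w\not\leq 1$. This contradiction forces each $b_w\otimes w=0$, so $\phi_{k,l}=0$, and the expansion collapses to $\phi = \phi_{0,0}=SNF$. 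The step I expect to require most care is the middle one: carrying out the commutation rearrangement rigorously and verifying that, after grouping, the scalar coefficient of each bracket word $w$ really lands in $K$, so that $\sum_{w}b_w\otimes w$ is a legitimate finite sum of pure tensors to which Corollary~\ref{cor-tensor-ordering} can be applied.
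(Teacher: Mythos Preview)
Your overall strategy is sound and genuinely different from the paper's, but there is a real gap in the step where you invoke Theorem~\ref{thm-centralizer}\,\ref{item-3} to write a centralizer element as $b'\otimes 1$ with $b'\in K$. That theorem only says such an element is $[R]$ for some $R\in\CR(K\times C_2')$ with $R\subseteq K\times\{0,1\}$, which is \emph{not} the same thing: in $\CR X^*\xR C_2'$, for example, $\sum\{a^nb^n:n\in\N\}$ lies in the centralizer but equals no $L\otimes 1$ with $L\in\CR X^*$, because that set is not regular. So neither your equality $\phi_{k,l}=a_{k,l}\otimes 1$ nor your claim $b_w\in K$ is justified, and Corollary~\ref{cor-tensor-ordering}(ii) cannot be applied as written.

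The repair is short. From the proof of Corollary~\ref{cor-zc-bounded}, any centralizer element $[S]$ with $S\subseteq K\times\{0,1\}$ satisfies $[S]\leq(\sum\pi(S))\otimes 1$ with $\pi(S)\in\CR K$; hence $\phi_{k,l}\leq a_{k,l}\otimes 1$ for some $a_{k,l}\in K$, and an inequality is all you need on that side. On the other side, keep $c_w:=\alpha_0\cdots\alpha_{k+l}\in\zc{K}{C_2'}$, write $c_w=[R_w]$ with $R_w\subseteq K\times\{0,1\}$, and note $c_w\cdot w=[R_w\{(1,w)\}]=\sum\{a\otimes w:(a,1)\in R_w\}$. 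Each honest pure tensor $a\otimes w\leq\phi_{k,l}\leq a_{k,l}\otimes 1$ now falls to Corollary~\ref{cor-tensor-ordering}(ii): if $a\otimes w\neq 0$ then $w\leq 1$ in $C_2'$, contradicting $w\in Q_2^kP_2^l$ with $k+l\geq 1$ via Proposition~\ref{prop-C2'}. Thus every summand vanishes, $c_w\cdot w=0$, and $\phi_{k,l}=0$.

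For comparison, the paper argues quite differently. It uses the recoding $\CR$-embedding $\ol{\,\cdot\,}:K\xR C_2'\to K\xR C_2'$ of Lemma~\ref{lem-embedding-KxC}, which is the identity on the centralizer. Applying it entrywise turns $A^*=(NV)^*N(UN)^*$ into $(N\ol V)^*N(\ol UN)^*$ with $\ol U\in\{0,bp,bp^2\}^{n\times n}$ and $\ol V\in\{0,qd,q^2d\}^{n\times n}$; the extra inner brackets force $b\ol V=0=\ol Ud$, so $b\,\ol{A^*}\,d=bNd=N$. Since each relevant entry $(A^*)_{i,j}$ lies in the centralizer (again by Corollary~\ref{cor-zc-bounded}), $(A^*)_{i,j}=(\ol{A^*})_{i,j}=(b\ol{A^*}d)_{i,j}=N_{i,j}$, giving $\phi=SNF$. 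The paper's route sidesteps all pure-tensor bookkeeping; your route is more elementary and, once patched as above, also works under the same hypotheses.
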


\begin{proof}
	Suppose $\phi = SA^*F \in \zc{K}{C_2'}$.
	Since $\phi$ is a finite sum of entries of ${A^*}$, by Corollary \ref{cor-zc-bounded}, all summands belong to $\zc{K}{C_2'}$.
	Therefore, $\phi=SA^*F = SNF$ is shown if for all $i, j < n$
	\begin{eqnarray} \label{eqn-SA*F}
		{A^*}_{i,j} \in \zc{K}{C_2'}	&\Longrightarrow&	{A^*}_{i,j} = N_{i,j}.
	\end{eqnarray}

	Let $\ol{\,\cdot\,}:K \xR C_2' \to K \xR C_2'$ be the
        $\CR$-morphism of Lemma \ref{lem-embedding-KxC} 
	On $\zc{K}{C_2'}$, it is the identity.
	Applying $\ol{\,\cdot\,}$ entrywise to matrices we get
	\begin{eqnarray*}
		\ol{A^*}	&=& \ol{(NV)^* N (UN)^*}
	\\			&=& (\ol{N}\ol{V})^*\ol{N}(\ol{U}\ol{N})^*
	\\			&=& (N\ol{V})^* N (\ol{U}N)^*.
	\end{eqnarray*}
	Notice $\ol{U} \in \{\ol{0}, \ol{b}, \ol{p}\}^{n\times n} = \{0, bp, bp^2\}^{n\times n}$ and $\ol{V} \in \{\ol{0}, \ol{d}, \ol{q}\}^{n\times n} = \{0, qd, q^2d\}^{n\times n}$.
	By Lemma~\ref{lem-recoding-C'} and $b,d$ as diagonal matrices, $b\ol{V} = 0 = \ol{U}d$, so $b(N\ol{V}) = Nb\ol{V} = 0 = \ol{U}dN= (\ol{U}N)d$, hence $b(N\ol{V})^* = b$ and $(\ol{U}N)^*d = d$.
	For $(A^*)_{i,j} \in \zc{K}{C_2'}$ this gives
	\begin{eqnarray*}
		(A^*)_{i,j}	&=& (\ol{A^*})_{i,j} = b(\ol{A^*})_{i,j}d
	\\			&=& (b{\ol{A^*}}d)_{i,j} = (b(N\ol{V})^* N (\ol{U}N)^*d)_{i,j}
	\\			&=& (b N d)_{i,j}
	\\			&=& N_{i,j}.
	\end{eqnarray*}
	We thus have shown (\ref{eqn-SA*F}).
\end{proof}

Notice that in the useful cases where $K = \CR M$ for a monoid $M$, indeed $K$
is non-trivial and has no zero divisors.

In the special case of $\zc{\CR X^*}{C_m'}$, the elements of the centralizer of $C_m'$ have previously been chararcterized as follows:
\begin{theorem}[Corollary 28 of \cite{Leiss22}]
	For $m > 2$ and $\phi \in \CR{X^*}\xR C_m'$, we have $\phi \in \zcr{X^*}{C_m'}$
	iff there is a regular expression $r$ over $X \dotcup (\Delta_m \setminus \{\p0, \q0\})$ such that $\phi = \p0r\q0$.
\end{theorem}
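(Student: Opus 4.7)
My plan is to work in the representation $\CR X^*\xR C_m'\simeq \CR(P_m'[X])/\nu$ from Theorem~\ref{thm-CA(MxN)/tnu}, under which every element of $\CR X^*\xR C_m'$ is uniquely represented by a subset of the normal form set $Q_m^*X^*P_m^*$ in $P_m'[X]$, and to prove the two implications separately.

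For $(\Leftarrow)$, let $\Sigma = X\dotcup(\Delta_m\setminus\{\p0,\q0\})$ and let $r$ be a regular expression over $\Sigma$. Its value in $\CR X^*\xR C_m'$ is represented by a subset $R\subseteq (Q_m\setminus\{\q0\})^*X^*(P_m\setminus\{\p0\})^*$ of $P_m'[X]$, since $r$ contains no occurrences of $\p0,\q0$. For any normal form $w = q_{i_1}\cdots q_{i_k}\,u\,p_{j_1}\cdots p_{j_\ell}\in R$ with $u\in X^*$ and all $i_s,j_t>0$, the mismatch relations $\p0 q_i=0$ for $i>0$ and $p_j\q0=0$ for $j>0$ force $\nf{\p0 w\q0}=0$ unless $k=\ell=0$, in which case $\nf{\p0 u\q0}=u$ by commutativity of $X$ with brackets in $P_m'[X]$ together with the match $\p0\q0=1$. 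Hence $\p0 r\q0$ is represented by $R\cap X^*\subseteq X^*$, and since each element of $X^*$ commutes with every element of $C_m'$ in $\CR X^*\xR C_m'$ (and sums of commuting elements commute by \Star-continuity), we conclude $\p0 r\q0\in\zcr{X^*}{C_m'}$.

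For $(\Rightarrow)$, assume $\phi\in\zcr{X^*}{C_m'}$. Applying Theorem~\ref{thm-centralizer}~\ref{item-1} via the embedding $\ol{\,\cdot\,}:\CR X^*\xR C_m'\hookrightarrow \CR X^*\xR C_2'$ of Lemma~\ref{lem-embedding-KxC}, there is a unique $L\in\CC X^*$ with $\phi=\sum L$. It remains to construct a regular expression $r$ over $\Sigma$ with $\p0 r\q0=\sum L$. Given a context-free grammar $G=(V,X,P,S_0)$ for $L$, I would assign to each variable $V_j\in V$ a distinct bracket code $\beta_j\in (P_m\setminus\{\p0\})^+$ with dual $\tilde\beta_j\in (Q_m\setminus\{\q0\})^+$, which is possible in a Chomsky--Sch\"utzenberger-style encoding because $m>2$ provides $m-1\geq 2$ bracket pairs distinct from $\p0,\q0$. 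Each production $V_j\to\alpha_1\cdots\alpha_n$ translates to a clause that keeps terminals in place and replaces every variable occurrence $\alpha_k\in V$ by $\beta_{\alpha_k}\tilde\beta_{\alpha_k}$; these clauses, wrapped with $\tilde\beta_j\cdots\beta_j$ on the left/right and summed and iterated per the grammar recursion, yield a regular expression $r$ over $\Sigma$. By the computation in $(\Leftarrow)$, $\p0 r\q0$ retains exactly the pure-$X^*$ summands of $r$, and the polycyclic matching relations ensure these are precisely the terminal yields of $G$-derivations from $S_0$, whence $\p0 r\q0 = \sum L = \phi$.

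The main technical obstacle lies in $(\Rightarrow)$: verifying that the encoding produces exactly $\sum L$ and neither more nor less. The proof rests entirely on the match/mismatch equations $p_iq_j=\delta_{i,j}$ of $C_m'$: paired codes $\beta_j\tilde\beta_j$ collapse to $1$ so that completed derivation steps contribute their terminal yields, while mismatched codes collapse to $0$ so that spurious (non-derivable) bracket patterns are annihilated. The outer $\p0\cdots\q0$ anchors the derivation at the start symbol and, by the mechanism of $(\Leftarrow)$, strips away any residual bracket structure not absorbed inside $r$.
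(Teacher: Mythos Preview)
Your $(\Leftarrow)$ direction is correct and is essentially the content of Corollary~\ref{cor-1} in the paper.

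For $(\Rightarrow)$ the paper takes a much shorter and genuinely different route. Since every element of $\CR X^*\xR C_m'$ is named by some regular expression $s$ over $X\dotcup\Delta_m$, one simply recodes the brackets occurring in $s$ via $\ol{p_i}=p_1p_2^{i+1}$ and $\ol{q_j}=q_2^{j+1}q_1$, obtaining an expression $r:=\ol s$ over $X\dotcup\{p_1,p_2,q_1,q_2\}$ that avoids $p_0,q_0$ (this is where $m>2$ is used: two bracket pairs other than $p_0,q_0$ are available). The induced $\CR$-endomorphism of $\CR X^*\xR C_m'$ (as in Lemma~\ref{lem-embedding-KxC}) is the identity on the centralizer, so $r$ still names $\phi$; and since $\phi$ commutes with $p_0,q_0$ one gets $p_0 r q_0 = p_0\phi q_0 = \phi\, p_0q_0 = \phi$. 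No grammar, no Chomsky--Sch\"utzenberger construction, no appeal to $\CC X^*$.

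Your route --- extracting $L\in\CC X^*$ via Theorem~\ref{thm-centralizer} and then rebuilding $r$ from a grammar for $L$ --- can be made to work, but as written it has a real gap. The phrase ``summed and iterated per the grammar recursion'' does not define a regular expression: the grammar recursion is a simultaneous least fixed point, and you have not explained how to flatten it into a single \Star-term over $\Sigma$ whose evaluation under $p_0\cdots q_0$ is exactly $\sum L$. Producing such an $r$ and verifying the equality is precisely the algebraic Chomsky--Sch\"utzenberger theorem (cf.\ Theorem~15 and Example~6 of \cite{Leiss22}), so your argument either leans on that result or has to reprove it. The paper's recoding trick bypasses all of this by never unpacking $\phi$ into a language or grammar at all.
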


To prove this, one codes the $m > 2$ bracket pairs by the two pairs $p_1, q_1$ and $p_2, q_2$ to get a regular expression $r$ in $\ol{p_i}=p_1p_2^{i+1}$
and $\ol{q_j}=q_2^{j+1}q_1$, and then has $p_0, q_0$ as a fresh bracket pair to eliminate the unbalanced strings using $p_0 r q_0$.
One can do the same for $m=2$:

\begin{quote}
	We have $\phi \in \zc{\CR X^*}{C_2'}$ iff there is a regular expression $r$ over $X \dotcup \Delta_2$
	with $\p0$ only as part of $\p0\p1$ and $\q0$ only as part of $\q1\q0$, such that $\phi = \p0 r\q0$.
\end{quote}

For any $m \geq 2$, the first normal form theorem \ref{thm-nf-C'} holds as well with $C_m'$ instead of $C_2'$.
If the automaton $\Z{S, A, F}$ for $\phi$ has no transitions under $\p0$ and $\q0$,
then $\phi \mapsto \p0\phi\q0$ is a projection on the centralizer:

\begin{corollary} \label{cor-1}
	Suppose $\phi = SA^*F \in K\xR C_m'$ is represented by an automaton $\Z{S, A, F}$ not using $\p0, \q0$,
	i.e.~$U \in \{0, \p1, \ldots, \p{m-1}\}^{n \times n}$ and 
        $V \in \{0, \q1, \ldots, \q{m-1}\}^{n \times n}$ in $A = U + X + V$.
	If $S(NV)^* N (UN)^*F$ is the normal form of $\phi$, then
	\[
		\p0 \phi \q0 =SNF \in \zc{K}{C_m'}.
	\]
\end{corollary}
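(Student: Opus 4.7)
The plan is to push the outer brackets $\p0$ and $\q0$ through the factors of the normal form $S(NV)^*N(UN)^*F$, using two facts: entries of $N$ commute with elements of $C_m'$ (by Theorem \ref{thm-nf-C'}, $N\in(\zc{K}{C_m'})^{n\times n}$), and $\p0$ annihilates every entry of $V$ while $\q0$ annihilates every entry of $U$, because by hypothesis these matrices contain only $0$ and the non-zero brackets $\q1,\ldots,\q{m-1}$ resp.\ $\p1,\ldots,\p{m-1}$, and $\p0 \q_j = 0 = \p_i \q0$ for $i,j\neq 0$.

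First I would observe that $S\in\B^{1\times n}$ and $F\in\B^{n\times 1}$ have entries in $\{0,1\}$, which lie in the center of $K\xR C_m'$, so $\p0\phi\q0 = S\,\p0(NV)^*N(UN)^*\q0\, F$ (identifying $\p0,\q0$ with the corresponding diagonal matrices where needed). The key step is to show $\p0(NV)=0$ as an $n\times n$ matrix. Entrywise,
\[ (\p0 NV)_{ij} = \sum_k \p0\, N_{ik}\,V_{kj} = \sum_k N_{ik}\,\p0 V_{kj} = 0, \]
since $N_{ik}$ commutes with $\p0$ and $\p0 V_{kj}\in\{\p0\cdot 0,\p0\q1,\ldots,\p0\q{m-1}\}=\{0\}$. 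Consequently $\p0(NV)^* = \p0(1+NV(NV)^*) = \p0$. Symmetrically, $(UN)\q0 = 0$ and hence $(UN)^*\q0 = \q0$.

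Combining these reductions gives
\[ \p0\phi\q0 = S\,\p0\, N\,\q0\, F = S\,N\,\p0\q0\, F = SNF, \]
where the middle equality uses that each entry of $N$ commutes with $\p0$ and with $\q0$, and the last equality uses the match $\p0\q0 = 1$ in $C_m'$. Finally, since $N\in(\zc{K}{C_m'})^{n\times n}$ and $S,F$ are Boolean, the sum $SNF$ is a finite $\B$-linear combination of entries of $N$, so it lies in $\zc{K}{C_m'}$, as required.

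There is no real obstacle here beyond being careful with the matrix bookkeeping, in particular treating $\p0,\q0$ as scalars (identified with diagonal matrices) and using that elements of $\zc{K}{C_m'}$ commute with all of $C_m'$. Notably, this corollary needs neither the non-triviality nor the no-zero-divisor hypothesis of Corollary~\ref{cor-nf-C'}: the conclusion follows purely from the annihilation identities between $\p0,\q0$ and the non-zero entries of $U,V$, together with the centralizer property of $N$.
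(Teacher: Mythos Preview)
Your proof is correct and follows essentially the same route as the paper: both arguments identify $\p0,\q0$ with diagonal matrices, use $N\in(\zc{K}{C_m'})^{n\times n}$ to commute them past $N$, observe $\p0 V=0=U\q0$ from the hypothesis on $U,V$ to collapse $\p0(NV)^*=\p0$ and $(UN)^*\q0=\q0$, and conclude $\p0\phi\q0=SNF$. Your entrywise verification and the closing remark that no non-triviality or zero-divisor assumption is needed are welcome elaborations but do not change the strategy.
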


\begin{proof}
	By the assumption on $U$ and $V$, for the diagonal matrix versions of
        $\p0,\q0$ we have $\p0V=0=U\q0$, and since $N$ commutes with $\p0$ and $\q0$, we get $\p0 (NV)^* = \p0$ and $(UN)^*\q0 = \q0$.
	Hence
	\[
		\p0 A^*\q0 = \p0 (NV)^* N (UN)^*\q0 = \p0 N\q0 = N,
	\]
	and thus $\p0\phi\q0 = \p0 SA^*F\q0 = S\p0 A^*\q0 F = SNF \in \zc{K}{C_m'}$.
\end{proof}

\subsection{Second normal form}

Corollary~\ref{cor-1} can be extended by admitting that $\phi = SA^*F \in K\xR{C_m'}$
is given by an automaton $\Z{S, A, F}$ whose transition matrix $A$ contains transitions by $\q0\p0$ in addition to those by elements of $K$ and $\Delta_m \setminus \{\p0, \q0\}$.
This is useful to combine representations $\p0 r_i\q0 = \sum L_i$ of $L_i \in \CC X^*$, $i = 1, 2$,
in $\CR X^*\xR C_2'$ to a representation $\p0 r_1\q0\p0 r_2\q0 = (\sum L_1)(\sum L_2) = \sum(L_1L_2)$ of $L_1L_2$, as will be exemplified below.

\begin{theorem}[Second Normal Form]\label{thm-nf2}
	Let $K$ be an $\CR$-dioid, $m \geq 2$ and $\phi \in K\xR C_m'$ be given in matrix form $\phi = S(U + X + V + W\pi)^*F$, where $\pi = \q0\p0$ and for some $n \geq 0$,
	\[
		\begin{array}{rcl}
			S	&\in&	\{0, 1\}^{1\times n},
		\\	F	&\in&	\{0, 1\}^{n\times 1},
		\end{array} \quad \begin{array}{rcl}
			X	&\in&	K^{n\times n},
		\\	W	&\in&	\{0, 1\}^{n\times n},
		\end{array} \quad \begin{array}{rcl}
			U	&\in&	\{0, \p1, \ldots, \p{m-1}\}^{n\times n},
		\\	V	&\in&	\{0, \q1, \ldots, \q{m-1}\}^{n\times n}.
		\end{array}
	\]
	Then there is a least solution $N$ of $y\geq (UyV+X)^*$ in $\Mat
        nn{K\xR C_m'}$, and
        \[
		\p0\phi\q0 = S N (WN)^*F\ \in \zc{K}{C_m'}.
	\]
\end{theorem}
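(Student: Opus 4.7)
The plan is to first establish existence of the least solution $N$ of $y \geq (UyV+X)^*$ in $\Mat nn{K\xR C_m'}$ with entries in $\zc{K}{C_m'}$, and then reduce $\p0 A^*\q0$ (for $A = U+X+V+W\pi$) to $N(WN)^*$ by separating the balanced core $(U+X+V)^*$ from the $W\pi$-transitions.

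For $N$, I would adapt Lemma~\ref{lem-N-matrix-C2'} directly to the $C_m'$-setting with $\p0, \q0$ as the outer bracket pair: the same Dyck-language argument over matrices yields $N := \p0(U+X+V)^*\q0$ as the least upper bound of the set of balanced matrix products $D \subseteq \{U,X,V\}^*$. The extra disambiguating brackets $p,q$ used in the $C_2'$-version are unnecessary here, since $U$ and $V$ use only $p_i, q_i$ with $i \geq 1$, so $\p0 V = 0 = U\q0$ and every unbalanced matrix product annihilates under the wrap $\p0 \cdots \q0$. Then by Theorem~\ref{thm-Nxuv} applied in the \Star-continuous matrix algebra (Theorem~\ref{thm-mat-starKA}), $(U+X+V)^* = (NV)^*N(UN)^*$.

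As an auxiliary step I would verify that $\p0(U+X+V)^*\q0 = N$ at the matrix level: since $\p0 V = 0$ and $N$ commutes entrywise with $\p0$, one has $\p0 NV = 0$, hence $\p0(NV)^* = \p0$; symmetrically $(UN)^*\q0 = \q0$; and $\p0 N\q0 = N\p0\q0 = N$ since $\p0\q0 = 1$.

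The main step, and where I expect the real technical work to sit, is absorbing the $W\pi$-transitions. The Kleene-algebra identity $(a+b)^* = a^*(ba^*)^*$ with $a = U+X+V$ and $b = W\pi$ gives $A^* = B(W\pi B)^*$ where $B := (U+X+V)^*$, and \Star-continuity of the matrix algebra yields $\p0 A^*\q0 = \sum_{k\geq 0}\p0 B(W\pi B)^k\q0$. The crucial entrywise observation, exploiting that each $W_{ij}\in\{0,1\}$ is central in $K\xR C_m'$, is that $\p0 B W\q0 = NW$: indeed $(\p0 B W\q0)_{ij} = \sum_r \p0 B_{ir} W_{rj}\q0 = \sum_r W_{rj}\,\p0 B_{ir}\q0 = (NW)_{ij}$. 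A short induction on $k$, splitting $W\pi B = W\q0 \cdot \p0 B$ and using $\p0 B\q0 = N$, then gives $\p0 B(W\pi B)^k\q0 = N(WN)^k$. Summing and reapplying \Star-continuity yields $\p0 A^*\q0 = N(WN)^*$, so $\p0\phi\q0 = S\,\p0 A^*\q0\, F = SN(WN)^*F$. Finally, this element lies in $\zc{K}{C_m'}$ because that centralizer is closed under addition, multiplication and \Star\ (Theorem~\ref{thm-centralizer}, \ref{item-4}) and $W, S, F$ have boolean entries.
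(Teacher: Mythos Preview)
Your proposal is correct and follows essentially the same strategy as the paper: establish $N$ and $\p0(U+X+V)^*\q0 = N$ via the first normal form, then separate the $W\pi$-transitions using $(a+b)^* = a^*(ba^*)^*$.

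The one genuine difference is in the main reduction. The paper avoids the \Star-continuity expansion and the induction on $k$ entirely by a second Kleene-algebra identity: writing $W\pi = \q0 W\p0$ (since $W$ is boolean and commutes with scalars) and applying $(ab)^*a = a(ba)^*$ with $a=\q0$, one gets in one line
\[
  \p0(U+X+V)^*(W\pi (U+X+V)^*)^*\q0
  = \p0 (U+X+V)^*\q0\,(W\,\p0(U+X+V)^*\q0)^*
  = N(WN)^*.
\]
Your \Star-continuity-plus-induction argument is valid, and your entrywise verification of $\p0 B W\q0 = NW$ is correct; it is just longer and invokes more machinery than necessary. The paper's purely equational route has the advantage that the identity $\p0(A+W\pi)^*\q0 = N(WN)^*$ holds in any Kleene algebra once $\p0 A^*\q0 = N$ is known, not just in \Star-continuous ones. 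Finally, for the membership of $SN(WN)^*F$ in $\zc{K}{C_m'}$, you do not need Theorem~\ref{thm-centralizer}~\ref{item-4}: the centralizer is already a sub-Kleene-algebra of $K\xR C_m'$ by the elementary closure argument in Section~\ref{sec-centralizer}, and $S,W,F$ are boolean.
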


\begin{proof} 
	Let $A=U+X+V$.
	By Theorem \ref{thm-nf-C'}, there is $N = \mu y.(UyV+X)^* \in (\zc{K}{C_m'})^{n\times n}$ with
	\[
		A^* = (U+X+V)^* = (NV)^* N (UN)^*.
	\]
	As in the proof of Corollary~\ref{cor-1}, we obtain
	\[
		\p0 A^* \q0 = \p0 (NV)^* N (UN)^* \q0 = \p0 N \q0 = N,
	\]
	and therefore in the Kleene algebra $\Mat nn{K\xR C_m'}$, using identities $(a+b)^*=a^*(ba^*)^*$ and $(ab)^*a = a(ba)^*$ of Kleene algebra,
	\begin{eqnarray*}
		\p0 (A+W\pi)^* \q0	&=&	\p0 A^*(W\pi A^*)^* \q0
	\\				&=&	\p0 A^*(\q0 W\p0 A^*)^*\q0
	\\				&=&	\p0 A^* \q0 (W\p0 A^*\q0)^*
	\\				&=&	N (WN)^* \in (\zc{K}{C_m'})^{n\times n}.
	\end{eqnarray*}
	Because $S, N, W$ and $F$ commute with $\p0$ and $\q0$, it follows that
	\[
		\p0\phi\q0 = S\p0 (A+W\pi)^* \q0 F = S N (WN)^* F \in \zc{K}{C_m'}.
	\]
	Notice also that $\pi \phi\pi = \q0\p0\phi\q0\p0 = \pi \p0\phi\q0$.
\end{proof}

\begin{example} \label{expl-3ff}
	Consider $\phi=(a\p1)^*(\q1b)^* \in K := \CR\{a,b\}^*\xR C_2'$ of Example \ref{expl-3} and its automaton $\Z{S, A, F}$ with $A=U+X+V$ and graph as shown in Figure \ref{fig-2}.
        We have seen that $\p0\phi\q0 = \sum L$ represents $L= \setof{a^nb^n}{n\in\N}
        \in \CC\{a,b\}^*$ in $K$. By Corollary \ref{cor-1}, $\p0\phi\q0$ is the
        projection of $\phi$ to the centralizer $Z_{C_2'} K$.
	Using $\pi=\q0\p0$, we claim that the projection of $\psi = \phi\pi\phi$ to the centralizer represents $LL$ in $K$, i.e.~$\p0\psi\q0 = \sum(LL)\in Z_{C_2'} K$. 
	To obtain an automaton $\Z{\tilde S,\tilde A,\tilde F}$ for $\psi$,
        connect the graph of $A$ with a copy of itself by an edge labelled by
        $\pi$, to get the graph of $\tilde A$ shown in Figure~\ref{fig-3}.

	\begin{figure}[ht]
	\begin{minipage}{14cm}
	\[
	\begin{tikzcd}[row sep=1.2cm, column sep=1.3cm]
	    1 \arrow[r,"1" above] \arrow[d,"a" left,shift right=0.6ex]
          & 3 \arrow[r,"\pi" above] \arrow[d,"\q1" left,shift right=0.6ex]
	  & 5 \arrow[r,"1" above]  \arrow[d, "a" left, shift right=0.6ex]
	  & 7 \arrow[d,"\q1" left,shift right=0.6ex]
         \\ 2 \arrow[u,"\p1" right,shift right=0.6ex]
          & 4 \arrow[u,"b" right,shift right=0.7ex]
          & 6 \arrow[u,"\p1" right,shift right=0.6ex]
	  & 8 \arrow[u,"b" right,shift right=0.6ex]
	\end{tikzcd}
	\]
	\vskip-3mm
	\caption{Graph of $\tilde{A}$\label{fig-3}}
	\end{minipage}
	\end{figure}

	The automaton of $\psi$ is $\Z{\tilde{S},\tilde{A},\tilde{F}}$ and has 8 states, with initial state 1 coded by $\tilde{S}_{1,1} = 1$,
	accepting state 7 coded by $\tilde{F}_{7,1} = 1$, and transition matrix $\tilde{A} = \tilde{U} +
        (\tilde{X} + \pi W) + \tilde{V}$ shown in Figure \ref{fig-4}.
	\begin{figure}[ht]
	\begin{minipage}{15.5cm}
	\renewcommand{\arraycolsep}{4pt}
	\[\qquad
		\begin{pmatrix}
			0	& 0	& 0	& 0	& 0	& 0	& 0	& 0 \\
			\p1	& 0	& 0	& 0	& 0	& 0	& 0	& 0 \\
			0	& 0	& 0	& 0	& 0	& 0	& 0	& 0 \\
			0	& 0	& 0	& 0	& 0	& 0	& 0	& 0 \\
			0	& 0	& 0	& 0	& 0	& 0	& 0	& 0 \\
			0	& 0	& 0	& 0	& \p1	& 0	& 0	& 0 \\
			0	& 0	& 0	& 0	& 0	& 0	& 0	& 0 \\
			0	& 0	& 0	& 0	& 0	& 0	& 0	& 0 \\
		\end{pmatrix} + \begin{pmatrix}
			0	& a	& 1	& 0	& 0	& 0	& 0	& 0 \\
			0	& 0	& 0	& 0	& 0	& 0	& 0	& 0 \\
			0	& 0	& 0	& 0	& \pi	& 0	& 0	& 0 \\
			0	& 0	& b	& 0	& 0	& 0	& 0	& 0 \\
			0	& 0	& 0	& 0	& 0	& a	& 1	& 0 \\
			0	& 0	& 0	& 0	& 0	& 0	& 0	& 0 \\
			0	& 0	& 0	& 0	& 0	& 0	& 0	& 0 \\
			0	& 0	& 0	& 0	& 0	& 0	& b	& 0 \\
		\end{pmatrix} + \begin{pmatrix}
			0	& 0	& 0	& 0	& 0	& 0	& 0	& 0 \\
			0	& 0	& 0	& 0	& 0	& 0	& 0	& 0 \\
			0	& 0	& 0	& \q1	& 0	& 0	& 0	& 0 \\
			0	& 0	& 0	& 0	& 0	& 0	& 0	& 0 \\
			0	& 0	& 0	& 0	& 0	& 0	& 0	& 0 \\
			0	& 0	& 0	& 0	& 0	& 0	& 0	& 0 \\
			0	& 0	& 0	& 0	& 0	& 0	& 0	& \q1 \\
			0	& 0	& 0	& 0	& 0	& 0	& 0	& 0 \\
		\end{pmatrix}.
	\]
	\vskip-3mm
	\caption{Transition matrix $\tilde{A} = \tilde{U}+(\tilde{X}+\pi W) + \tilde{V}$\label{fig-4}}
	\end{minipage}
	\end{figure}

	For $N = \mu y.(\tilde{U}y\tilde{V}+\tilde X)^* = \p0(\tilde{U}\p1 + \tilde{X} + \q1\tilde{V})^*\q0 \in (\zc{\CR X^*}{C_2'})^{8\times 8}$, the proof shows that
	\begin{eqnarray*}
		(\tilde U+\tilde X+\tilde V)^*		&=& (N\tilde V)^* N (\tilde U N)^*, \\
		\p0 (\tilde U+\tilde X+\tilde V)^* \q0	&=& N, \\
		\p0 (\tilde A)^* \q0			&=& N (WN)^*.
	\end{eqnarray*}
	Since the graph of $(\tilde{U}\p1+\tilde{X}+\q1\tilde{V})$ consists of two disconnected components isomorphic to that of $A$ of Figure \ref{fig-2} above,
	the matrix $N$ obtained from its transitive reflexive hull
	is, using $\wL=\p0(a\p1^2)^*(\q1^2b)^*\q0$ as in Example~\ref{expl-3},
	\begin{eqnarray*}
	  N\ =\ \p0(\tilde{U}\p1+\tilde{X}+\q1\tilde{V})^*\q0
           &=& \begin{pmatrix}
			1	& a	& \wL	& a\wL	& 0	& 0	& 0	& 0 \\
			0	& 1	& \wL b	& \wL	& 0	& 0	& 0	& 0 \\
			0	& 0	& 1	& 0	& 0	& 0	& 0	& 0 \\
			0	& 0	& b	& 1	& 0	& 0	& 0	& 0 \\
			0	& 0	& 0	& 0	& 1	& a	& \wL	& a\wL \\
			0	& 0	& 0	& 0	& 0	& 1	& \wL b	& \wL \\
			0	& 0	& 0	& 0	& 0	& 0	& 1	& 0 \\
			0	& 0	& 0	& 0	& 0	& 0	& b	& 1
		\end{pmatrix}.
	\end{eqnarray*}

	The matrix $W$ is the boolean $8\times 8$-matrix with 1 only at $W_{3,5}$, so
	\begin{eqnarray*}
		\lefteqn{\p0(\tilde{A}^*)\q0 = N (WN)^*} \\
	&=&	N \begin{pmatrix}
		  0	& 0	& 0	& 0	& 0	& 0	& 0	& 0 \\
		  0	& 0	& 0	& 0	& 0	& 0	& 0	& 0 \\
		  0	& 0	& 0	& 0	& 1	& a	& \wL	& a\wL \\
		  0	& 0	& 0	& 0	& 0	& 0	& 0	& 0 \\
		  0	& 0	& 0	& 0	& 0	& 0	& 0	& 0 \\
		  0	& 0	& 0	& 0	& 0	& 0	& 0	& 0 \\
		  0	& 0	& 0	& 0	& 0	& 0	& 0	& 0 \\
		  0	& 0	& 0	& 0	& 0	& 0	& 0	& 0 \\
		\end{pmatrix}^* = N\begin{pmatrix}
		1	& 0	& 0	& 0	& 0	& 0	& 0	& 0 \\
		0	& 1	& 0	& 0	& 0	& 0	& 0	& 0 \\
		0	& 0	& 1	& 0	& 1	& a	& \wL	& a\wL \\
		0	& 0	& 0	& 1	& 0	& 0	& 0	& 0 \\
		0	& 0	& 0	& 0	& 1	& 0	& 0	& 0 \\
		0	& 0	& 0	& 0	& 0	& 1	& 0	& 0 \\
		0	& 0	& 0	& 0	& 0	& 0	& 1	& 0 \\
		0	& 0	& 0	& 0	& 0	& 0	& 0	& 1 \\
		\end{pmatrix} 
        \end{eqnarray*}
        \begin{eqnarray*}
		&=&	\begin{pmatrix}
		  1	& a	& \wL	& a\wL	& \wL	& a\wL		& \wL\wL	& \wL a\wL \\
		  0	& 1	& \wL b	& \wL	& \wL b	& \wL ba	& \wL b\wL	& \wL ba\wL \\
		  0	& 0	& 1	& 0	& 1	& a		& \wL		& a\wL \\
		  0	& 0	& b	& 1	& b	& ba		& b\wL		& ba\wL \\
		  0	& 0	& 0	& 0	& 1	& a		& \wL		& a\wL \\
		  0	& 0	& 0	& 0	& 0	& 1		& \wL b		& \wL \\
		  0	& 0	& 0	& 0	& 0	& 0		& 1		& 0 \\
		  0	& 0	& 0	& 0	& 0	& 0		& b		& 1 \\
		\end{pmatrix}.
	\end{eqnarray*}
	Hence $\p0\psi\q0 = \p0\tilde{S}\tilde{A}^*\tilde{F}\q0 = \p0(\tilde{A}^*)_{1,7}\q0 = (N(WN)^*)_{1,7} = \wL\wL$.
        As shown in Example \ref{expl-1},
	\[
		\wL= \p0(a\p1^2)^*(\q1^2b)^*\q0 =\p0(a\p1)^*(\q1b)^*\q0 = \p0\phi\q0 = \sum L \in Z_{C_2'} K,
	\]
	and since $Z_{C_2'} K$ is a $\CC$-dioid by Theorem \ref{thm-centralizer} \ref{item-4}, $\wL\wL = (\sum L)(\sum L) = \sum (LL)$.
	So, $\p0\psi\q0 = \p0\phi\q0\p0\phi\q0$ represents $LL$ in $K$.
	This can also be seen using \Star-continuity as in Example \ref{expl-1}.
\qee
\end{example}

\section{Combining normal forms by Kleene algebra operations}
\label{sec-combining-nfs}

We here show that normal forms for elements of $K\xR C_2'$ can be defined directly by
induction on the regular operations, using the representation by automata
only implicitly.

\begin{theorem}\label{thm-combining-nfs}
  Let $K$ be an $\CR$-dioid. For every $\phi \in K\xR C_2'$ there are $n\geq1$, $S\in
  \B^{1\times n}$, $F\in \B^{n\times 1}$, $U\in \{0,\p0,\p1\}^{n\times n}$, $V\in
  \{0,\q0,\q1\}^{n\times n}$, and $N\in (\zc{K}{C_2'})^{n\times n}$ such that
  \[ \phi = S(NV)^*N(UN)^*F. \]
Moreover, $N$ is the least solution of $y\geq (UyV + N)^*$ in $\Mat nn{\zc{K}{C_2'}}$.
\end{theorem}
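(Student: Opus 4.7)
The plan is to prove the theorem by induction on the construction of $\phi$ as a regular expression over $K\cup\Delta_2$, producing the quintuple $(n,S,F,U,V,N)$ directly at each step. The point, compared to combining Theorem \ref{thm-aut} with Lemma \ref{lem-N-matrix-C2'}, is to give explicit combinators for the regular operations, so the construction doubles as a calculus of context-free expressions. Throughout, I maintain the invariant that $N\in(\zc{K}{C_2'})^{n\times n}$ satisfies the identity $N=(UNV+N)^*$, which by Proposition \ref{prop-least-solutions} (applied with $N$ in the role of $x$) is equivalent to the clause that $N$ is the least solution of $y\geq(UyV+N)^*$.

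For the base cases I write $(n,S,F,U,V,N)$ down explicitly. If $\phi\in K$, take $n=2$, $S$ and $F$ the initial and final unit vectors, $U=V=0$, and $N$ the upper-triangular matrix with $1$ on the diagonal and $\phi$ in the top-right corner; then $UNV=0$ and $N=N^*$, so $N=(UNV+N)^*$ trivially and $S(NV)^*N(UN)^*F=SNF=\phi$. If $\phi=p\in P_2$, place $p$ in $U$ at the off-diagonal position, set $V=0$, and let $N$ be the identity, and symmetrically for $\phi=q\in Q_2$. For the sum $\phi=\phi_1+\phi_2$, I block-diagonalise $U,V,N$ and concatenate $S$ and $F$ as in the sum step of the proof of Theorem \ref{thm-aut}; both the normal-form equation and the fixed-point identity decouple across blocks.

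For the product $\phi=\phi_1\phi_2$, I use a block-triangular $N$ whose off-diagonal block is $N_1F_1S_2N_2$, which lies in $(\zc{K}{C_2'})^{n_1\times n_2}$ because $F_1S_2$ is Boolean and hence commutes with $C_2'$; the matrices $U$ and $V$ remain block-diagonal, so $UNV$ is block-diagonal too and the fixed-point identity splits into the two block identities provided by the inductive hypothesis. For the star $\phi=\phi_1^*$, I mimic the $R_1^+$ step of Theorem \ref{thm-aut} by adding the feedback term $F_1S_1$ into the block; the new $N$ takes the form $N_1(F_1S_1N_1)^*$, which stays in the centralizer matrices (again because $F_1S_1$ is Boolean) and satisfies $N=(UNV+N)^*$ by standard Kleene-algebra identities, while the empty-iterate summand $1$ is added via the sum case.

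The main obstacle I anticipate is verifying the fixed-point clause after the product and star cases. For solutionhood I would use the Kleene-algebra identities already invoked in the proof of Theorem \ref{thm-Nxuv}, together with the block structure of $U$ and $V$. For minimality in $\Mat nn{\zc{K}{C_2'}}$ I would identify the constructed $N$ with $\sum D$ for the context-free matrix language $D$ generated by the inductive combination, appealing to Lemma \ref{lem-N-matrix-C2'} and to the $\CC$-closure of $\zc{K}{C_2'}$ from Theorem \ref{thm-centralizer}\,\ref{item-4}. The normal-form equality $\phi=S(NV)^*N(UN)^*F$ then follows by the same computation as in Theorem \ref{thm-nf-C'} applied to the effective transition matrix $A=U+N+V$.
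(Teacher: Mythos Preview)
Your overall inductive strategy matches the paper's, and the base cases and sum case are fine. The product step, however, contains a genuine error.

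With block-diagonal $U,V$ and your upper-triangular $N$ with off-diagonal block $N_1F_1S_2N_2$, the product $UNV$ is \emph{not} block-diagonal: its off-diagonal block is $U_1(N_1F_1S_2N_2)V_2$, which need not vanish. So the fixed-point identity does not decouple as you claim. Worse, the normal-form equation itself fails. Take $\phi_1=\p0$ and $\phi_2=\q0$ with the $2\times2$ data from your bracket base case: $N_1=N_2=1_2$, $U_1$ has a single $\p0$ above the diagonal, $V_2$ a single $\q0$ above the diagonal, and $F_1S_2$ has a single $1$ below the diagonal. A direct computation gives $S(NV)^*N(UN)^*F=0$, whereas $\phi_1\phi_2=\p0\q0=1$. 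What is missing is precisely the contribution of paths that open a bracket in $U_1$ \emph{after} entering the $\phi_1$-part and close it with $V_2$ \emph{after} crossing into the $\phi_2$-part; your off-diagonal block records only the bare crossing $N_1F_1S_2N_2$.

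The paper fixes this by taking the off-diagonal block to be the least solution $\alpha$ of
\[
z \ \geq\ N_1U_1\,z\,V_2N_2 \;+\; N_1F_1S_2N_2
\]
in $(\zc{K}{C_2'})^{n_1\times n_2}$, whose existence uses that $\zc{K}{C_2'}$ is a $\CC$-dioid (Theorem~\ref{thm-centralizer}\,\ref{item-4}); the subsequent verification of $\phi_1\phi_2=S(NV)^*N(UN)^*F$ then unfolds $\alpha$ via $\mu$-continuity. Your star step has the analogous defect: $N_1(F_1S_1N_1)^*$ ignores $U$-$V$ matchings that straddle the feedback edge, and the paper instead takes $N$ to be the least solution of $y\geq(UyV+N_1+FS)^*$ supplied by Lemma~\ref{lem-N-matrix-C2'}. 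In short, the inductive combinators for $\cdot$ and ${}^+$ must themselves introduce a least-fixed-point; a purely regular combination of $N_1,N_2$ will not do.
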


\begin{proof}
  For the second claim, we only show $N\geq (UNV + N)^*$, since any $y\geq (UyV+N)^*$
  is above $N$. The first claim is shown by induction on $\phi$, choosing $S,U,N,V,F$
  from an implicit automaton $\Z{S,A,F}$ for $\phi=SA^*F$ (cf.~Theorem \ref{thm-aut})
  with $A=U+X+V$ and $N=\mu y.(UyV+X)^*$.
  \begin{description}
    \item[$\phi \in \{0,1\}$:] Put $n=1$, $U=V=(0)$ and $F=N=(1)$. Then $(UNV+N)^*
      \leq N$ and
      \[ (NV)^*N(UN)^* = (0)^*N(0)^* = N = (1). \]
      We have $S(NV)^*N(UN)^*F = \phi$ if we take $S=(0)$ for $\phi=0$ and $S=(1)$
      for $\phi=1$.
  \end{description}

In the remaining cases, this instance of the recursion formula
(\ref{mat-star}) for matrix iteration is used often:
\begin{equation}\label{mat-star-instance}
	\begin{pmatrix}
		A	& B \\
		0	& D
	\end{pmatrix}^* = \begin{pmatrix}
		A^*	& A^*B D^*& \\
		0	& D^*
	\end{pmatrix}.
\end{equation}

For the remaining generators, i.e.~the images in $K\xR C_2'$ of $k\in K$ or
$\p{0},\p{1},\q{0},\q{1}\in C_2'$, let $n=2$ and $S,U,N,V,F$ as
shown below.
\begin{description}
  \item[$\phi = k\in K$:] Here,
\begin{eqnarray*}
  \lefteqn{S(NV)^*N(UN)^*F} \\
  &=& \Sk \left(\Nk\nullmatrix\right)^*\Nk \left(\nullmatrix\Nk\right)^* \Fk
  \\ &=& \Sk \Nk \Fk = k = \phi.
\end{eqnarray*}
By (\ref{mat-star-instance}), $(UNV+N)^* = N^* = N$.

\item[$\phi \in\{\p{i},\q{i}\}$:] If $\phi$ is an opening bracket $\p{i}$,
  or, respectively, a closing bracket $\q{i}$, let $U,N,V$ be
	\[
		\begin{pmatrix}
			0	& \p{i} \\
			0	& 0
		\end{pmatrix}, \begin{pmatrix}
			1	& 0 \\
			0	& 1
		\end{pmatrix}, \begin{pmatrix}
			0	& 0 \\
			0	& 0
		\end{pmatrix},
	\quad\textrm{respectively}\quad
		\begin{pmatrix}
			0	& 0 \\
			0	& 0
		\end{pmatrix}, \begin{pmatrix}
			1	& 0 \\
			0	& 1
		\end{pmatrix}, \begin{pmatrix}
			0	& \q{i} \\
			0	& 0
		\end{pmatrix}.
	        \]
Then, using $S$ and $F$ as for $\phi=k$ above, $S(NV)^*N(UN)^*F = SU^*F =
SUF = \p{i}$ and, respectively, $S(NV)^*N(UN)^*F = SV^*F = SVF = \q{i}$.
\end{description}

For $\phi$ among $\phi_1+\phi_2$, $\phi_1\cdot \phi_2$, and $\phi_1^+$,
suppose that for $i = 1, 2$, by induction we have $n_i\geq 1$ and
$S_i,U_i,V_i,F_i$ and $N_i$ such that 
$\phi_i = S_i(N_iV_i)^* N_i (U_iN_i)^*F_i$ and $N_i = \mu y.(U_iyV_i+N_i)^*$.

\begin{description}
\item[$\phi = \phi_1+\phi_2$:] Let $n=n_1+n_2$ and $S,U,N,V,F$ as shown in
\begin{eqnarray*}
  \lefteqn{S(NV)^*N(UN)^*F}
  \\ &=& \Sadd\left(\Nadd\Vadd\right)^*\Nadd\left(\Uadd\Nadd\right)^*\Fadd
  \\ &=& \Sadd
         \begin{pmatrix} (N_1V_1)^* & 0 \\ 0 & (N_2V_2)^* \end{pmatrix}
         \Nadd
         \begin{pmatrix} (U_1N_1)^* & 0 \\ 0 & (U_2N_2)^* \end{pmatrix}
         \Fadd
  \\ &=& \Sadd
         \begin{pmatrix} (N_1V_1)^*N_1(U_1N_1)^* & 0 \\
                              0 & (N_2V_2)^*N_2(U_2N_2)^* \end{pmatrix}
         \Fadd
 \\ &=& S_1(N_1V_1)^*N_1(U_1N_1)^*F_1 + S_2 (N_2V_2)^*N_2(U_2N_2)^*F_2
 \\ &=& \phi_1+\phi_2.
\end{eqnarray*}
For the second claim, from $(U_iN_iV_i+N_i)^*\leq N_i$ we obtain
\[ (UNV + N)^* =
  \begin{pmatrix}U_1N_1V_1 + N_1 & 0 \\ 0 & U_2N_2V_2 + N_2\end{pmatrix}^*
  \leq \begin{pmatrix}N_1 & 0 \\ 0 & N_2\end{pmatrix} = N.
\]

\item[$\phi = \phi_1\cdot\phi_2$:] Notice that entries of
  $N_1F_1S_2N_2$ belong to the centralizer, and if $z$ is an $n_1\times n_2$ matrix
  of elements $x$ of the centralizer, so is $U_1zV_2$, because its entries are 0 or
  sums of elements $\p{i} x \q{j} = x\cdot\delta_{i,j}$, which belong to the
  centralizer.  Hence, $f(z) = N_1U_1zV_2N_2 + N_1F_1S_2N_2$ defines a monotone map
\[ f:\mat{n_1}{n_2}{(\zc{K}{C_2'})} \to \mat{n_1}{n_2}{(\zc{K}{C_2'})}, \]
By Theorem \ref{thm-centralizer} \ref{item-4}, $\zc{K}{C_2'}$ is a $\CC$-dioid, hence
a Chomsky algebra, so $f$ has a least pre-fixpoint $\alpha$, i.e.~the system
of $n_1 n_2$ polynomial inequations
\begin{equation}\label{eqn-alpha}
  z \geq 
  N_1U_1zV_2N_2 + N_1F_1S_2N_2
\end{equation}
has $\alpha$ as least solution\footnote{~The entries of $\alpha$ can be given
  as regular $\mu$-terms from the polynomials of (\ref{eqn-alpha}), as shown in
  \cite{Leiss2016}, or as regular expressions in the parameters of (\ref{eqn-alpha})
  and the brackets of $C_2'$, by a method presented in Theorem 15 and Example 6 of
  \cite{Leiss22}.  Alternatively, by Lemma \ref{lem-N-matrix-C2'}, using $A =
  U_1p+X_1+qV_1$ and $D = U_2p+X_2+qV_2$ with automaton $\Z{S_i,U_i+X_i+V_i,F_i}$ for
  $\phi_i$,
  \[
    N = b(Up+X+qV)^*d
    =  b\begin{pmatrix} A & F_1S_2 \\ 0 & D \end{pmatrix}^*d
    =  b\begin{pmatrix} A^* & A^*F_1S_2D^* \\ 0 & D^* \end{pmatrix}d
    =  \begin{pmatrix}N_1 & bA^*F_1S_2D^*d \\ 0 & N_2 \end{pmatrix},
  \]
so we also have $\alpha = bA^*F_1S_2D^*d$, but it is not obvious that this is
a matrix of elements from the centralizer.}.
Let $n=n_1+n_2$ and $S,U,N,V,F$ as in
\begin{eqnarray*}
  \lefteqn{S(NV)^*N(UN)^*F} \\
  &=& \Smult\left(\Nmult\Vadd\right)^*\Nmult\left(\Uadd\Nmult\right)^*\Fmult\\
\end{eqnarray*}
\begin{eqnarray*}
&=& \Smult
      \begin{pmatrix} N_1V_1 & \alpha V_2 \\ 0 & N_2V_2 \end{pmatrix}^*
      \Nmult
      \begin{pmatrix} U_1N_1 & U_1\alpha \\ 0 & U_2N_2 \end{pmatrix}^*
      \Fmult \\
  &=& \Smult
      \begin{pmatrix} (N_1V_1)^* & (N_1V_1)^*\alpha V_2 (N_2V_2)^*\\
                               0 & (N_2V_2)^* \end{pmatrix}
      \Nmult \\ &&
      \hskip37mm\begin{pmatrix} (U_1N_1)^* & (U_1N_1)^*U_1\alpha(U_2N_2)^* \\
                               0 & (U_2N_2)^* \end{pmatrix}
      \Fmult \\
  &=& \begin{pmatrix} S_1(N_1V_1)^* & S_1(N_1V_1)^*\alpha V_2 (N_2V_2)^*\end{pmatrix}
      \Nmult
      \begin{pmatrix} (U_1N_1)^*U_1\alpha(U_2N_2)^*F_2 \\
                          (U_2N_2)^*F_2 \end{pmatrix} \\
  &=& S_1(N_1V_1)^*[N_1(U_1N_1)^*U_1\alpha + \alpha + \alpha V_2 (N_2V_2)^*N_2](U_2N_2)^*F_2 .
\end{eqnarray*}
By \cite{Leiss2016}, the $\mu$-continuity of $\zc{K}{C_2'}$ lifts to the
matrix level, so
\begin{eqnarray*}
  \alpha &=& \Sum{(N_1U_1)^k(N_1F_1S_2N_2)(V_2N_2)^k}{k\in\N}
  \\ &=& \Sum{N_1(U_1N_1)^kF_1S_2(N_2V_2)^kN_2}{k\in\N}
\end{eqnarray*}
and
\begin{eqnarray*}
  \lefteqn{S(NV)^*N(UN)^*F}
  \\ &=& S_1(N_1V_1)^*[N_1(U_1N_1)^*U_1\alpha + \alpha + \alpha V_2(N_2V_2)^*N_2](U_2N_2)^*F_2
  \\ &=& 
  \Sum{S_1(N_1V_1)^*N_1(U_1N_1)^*U_1N_1(U_1N_1)^kF_1S_2(N_2V_2)^kN_2(U_2N_2)^*F_2}{k\in\N}
  \\ & & + \Sum{S_1(N_1V_1)^*N_1(U_1N_1)^kF_1S_2(N_2V_2)^kN_2(U_2N_2)^*F_2}{k\in\N}
  \\ & & + \Sum{S_1(N_1V_1)^*N_1(U_1N_1)^kF_1S_2(N_2V_2)^kN_2V_2(N_2V_2)^*N_2(U_2N_2)^*F_2}{k\in\N}
  \\ &=& \Sum{S_1(N_1V_1)^*N_1(U_1N_1)^kF_1S_2(N_2V_2)^lN_2(U_2N_2)^*F_2}{k,l\in\N}
  \\ &=& S_1(N_1V_1)^*N_1(U_1N_1)^*F_1\cdot S_2(N_2V_2)^*N_2(U_2N_2)^*F_2
  \\ &=& \phi_1\cdot \phi_2.
\end{eqnarray*}

For the second claim,
\[ (UNV + N)^* =
\begin{pmatrix}U_1N_1V_1 + N_1 & U_1\alpha V_2 + \alpha \\ 0 & U_2N_2V_2 +
  N_2\end{pmatrix}^*
  =
\begin{pmatrix}N_1 & N_1(U_1\alpha V_2 + \alpha)N_2 \\ 0 & N_2\end{pmatrix}.
\]
  Since $N_iN_i\leq N_i^*\leq N_i$, we have $N_1\alpha N_2 \leq \alpha =
  N_1U_1\alpha V_2N_2$, hence $(UNV+N)^*\leq N$.

\item[$\phi = \phi_1^+$:] Let $n,S,U,V,F$ be $n_1,S_1,U_1,V_1,F_1$. Since $N_1$ is
  the least solution of $y\geq (UyV+N_1)^*$, by Theorem \ref{thm-Nxuv}
  $(N_1V)^*N_1(UN_1)^*=A_1^*$ for $A_1 = U+N_1+V$, so $\phi_1=SA_1^*F$.  Then
\[ S(A_1+FS)^*F = SA_1^*(FSA_1^*)^*F = SA_1^*F(SA_1^*F)^* = \phi_1\phi_1^* = \phi_1^+.
	\]
By Lemma \ref{lem-N-matrix-C2'}, $y\geq (UyV + N_1+FS)^*$ has a least solution $N$, and
$N\in\mat nn{(\zc{K}{C_2'})}$. Using Theorem \ref{thm-Nxuv} again,
\[ S(NV)^*N(UN)^*F = S(U+N_1+FS+V)^*F = S(A_1+FS)^*F = \phi^+. \]
For the second claim, by definition of $N$ we have $N = (UNV + N_1+FS)^*$, so
\begin{eqnarray*}
  (UNV+N)^* &\leq& (UNV + (UNV+N_1+FS)^*)^*
  \\ &=& (UNV + UNV+N_1+FS)^* \leq N.
\end{eqnarray*}
\end{description}
  The case $\phi_1^*$ is treated via $\phi_1^* = 1 + \phi_1^+$.
\end{proof}

\section{Bra-ket $\CR$-dioids $C_m$ and the completeness property} \label{sec-bra-ket}

The \blue{bra-ket $\CR$-dioid $C_m$} is the quotient $\CR\Delta_m^*/\rho_m$ of $\CR\Delta_m^*$ by the $\CR$-congruence $\rho_m$ generated by the relations
\[
	\setof{p_iq_j=\delta_{i,j}}{i,j<m} \cup \{q_0p_0+ \ldots + q_{m-1}p_{m-1} = 1\}.
\]
While the match equations can be interpreted in monoids with an annihilating element
$0$, such as the polycyclic monoid $P_m'$, the completeness equation $1=\sum_{i <
  m}q_ip_i$ is a semiring equation. 

The name \emph{bra-ket} $\CR$-dioid comes by analogy to a notation in quantum
mechanics, where a quantum state is represented by a vector $\psi$ of a Hilbert space
$\CH$, written $\br\psi$ and called a \emph{ket}. Elements $f$ of the dual space $\CH^*$
of (continuous) linear functions on $\CH$ can uniquely be represented by elements
$\phi\in \CH$, via $f(\psi) = \Z{\phi,\psi}$ for all $\psi\in \CH$, where
$\Z{\cdot,\cdot}:\CH\times \CH \to F$ is the inner product on $\CH$ to the underlying
field $F$. The element of $\CH^*$ represented by $\phi$ is written $\bl\phi$ and called
a \emph{bra}.

Suppose $\CH$ has finite dimenson $m$ and let $\br0,\ldots,\br{m-1}$ be a basis of $\CH$ of
unit column vectors and $\bl0,\ldots,\bl{m-1}$ a basis of $\CH^*$ of
unit row vectors. If the application of $\bl{i}=(i_0,\ldots, i_{m-1})\in \CH^*$
to the vector $\br{j}$ with row values $j_0=\delta_{0,j}\ldots,j_{m-1}=\delta_{m-1,j}$
is written as juxtaposition, we get the bracket match- and mismatch equation for the inner product,
\[ \bl{i} \br{j} = \Z{i,j} = \Sum{i_kj_k}{k<m} = \delta_{i,j}. \]
The outer product $\br j\bl i$ of
$\q j$ and $\p i$ is a linear operator on $\CH$ and represented by the $m\times m$ matrix
\[ \br j \bl i = \big(j_ki_l\big).
\]
In particular, $\br i\bl i$ is a projection to the subspace spanned by $\br i$ and
represented by the $m\times m$-matrix with $1$ on the $i$-th position on the diagonal
and 0 otherwise. The combination of the projections gives the identity operator
\[ \br0\bl0 + \ldots + \br{m-1}\bl{m-1} = 1, \]
represented by the unit matrix of dimension $m$, corresponding to the completeness
equation.  This interpretation of $\bl{i}\br{j}$ and $\br j\bl i$ has to be combined
with an interpretation of $\br i\br j$ as a tensor in the 2-particle space
$\CH\otimes\CH$. Here, opening and closing brackets are interpreted by different
kinds of objects and strings of brackets are interpreted in several ways.  A uniform
interpretation of brackets and bracket concatenation is given below.

\subsection{The bra-ket $\CR$-dioid $C_m$ and matrix algebras}\label{sec-mat-Cm}

For applications to context-free languages $L\subseteq X^*$, the $\CR$-dioids $C_m$
and $C_m'$ as factors $C$ of $\CR X^*\xR C$ arise by an interpretation of brackets
$p_i$ as pushing and $q_i$ as popping symbol $i$ from a stack. Then $p_iq_i$ leaves
the stack unchanged, $p_iq_j$ for $j\not=i$ aborts the computation, and $q_ip_i$
succeeds iff $i$ is on top of the stack. The completeness equation $\sum_{i<m}
q_ip_i = 1$ of $C_m$ says that one of the symbols $i<m$ is always on top of the
stack (including 0 as end marker). This originally seemed necessary to have
$\zc{\CR X^*}{C}$ be isomorphic to the $\CC$-dioid of context-free languages over
$X^*$, but as shown in \cite{Leiss22}, $C=C_m'$ is sufficient.

More precisely, a uniform interpretation of brackets as binary relations
on a countably infinite set and bracket concatenation as relation product, i.e. an
interpretation of $C_m$ in $\Mat\omega\omega\B$, is as follows:

\begin{example}\label{expl-Mat-omega}
  \newcommand\e[1]{e_{#1}} Let $\e{0}, \e{1}, \ldots$ be the unit vectors of size
  $\omega\times 1$ and $\e{0}^t, \e{1}^t,\ldots$ their transposed vectors of size
  $1\times \omega$.  Each $\e{k}\e{l}^t$ is a boolean square matrix of dimension
  $\omega$, representing the relation $\{(k,l)\}\subseteq \omega\times\omega$, and
  so, for $i,j<m$, we can interpret brackets $\p{i}$ and $\q{j}$ in $\mat\omega\omega\B$ by
\begin{eqnarray*}
\p{i} = \sum_{k<\omega} \e{k} \e{mk + i}^t, &\quad&
\q{j} = \sum_{k<\omega} \e{mk + j}\e{k}^t,
\end{eqnarray*}
representing the relations $\setof{(k,mk+i)}{k\in\N}$ and $\setof{(mk+j,k)}{k\in\N}$, respectively.
Concatenation of brackets is boolean matrix multiplication, corresponding to relation
composition, so 
\[ \p{i}\q{j} = \delta_{i,j} \]
holds, with 0 and 1 for the zero and unit square matrices of dimension $\omega$. The matrix
$\q{i}\p{i}$ represents the subrelation $\setof{(mk+i,mk+i)}{k\in\N}$ of the
identity, so the completeness equation 
\[ \sum_{i<m} \q{i}\p{i} = 1 \]
also holds.  In a similar spirit, one can think of $\Gamma = \{0,\ldots,m-1\}$ as a
stack alphabet, $\Gamma^*$ as the set of possible stack contents $w$ (with top of the
stack on the left), and let $\p{i}$ be
the graph of the operation ``push symbol $i$'', $\q{i}$ the graph of ``pop symbol
$i$'', $\cdot$ the relation product, $+$ the union of relations, 0 the empty relation
and 1 the identity relation on $\Gamma^*$. Then clearly $\p{i}\q{j} = \delta_{i,j}$
holds, but, since one cannot pop from the empty stack $\epsilon$, $e:=
\sum_{i<m}\q{i}\p{i}$ is the identity relation on the non-empty stack $\Gamma^+$
only, so $e < 1 = e + \{(\epsilon,\epsilon)\}$. To obtain $e=1$, one can treat 0 as a
special symbol, pad all stack contents $w\in \Gamma^*$ by an $\omega$-sequence of 0's
to $w0^\omega$, and interpret the operations as binary relations on the new stack
$\Gamma^*0^\omega$. 
\qee\end{example}

A remarkable consequence of the completeness equation is the following:

\begin{theorem}\label{thm-matrix-C}
  $C_m$ is isomorphic to its own matrix Kleene algebra $\Mat mm{C_m}$.
\end{theorem}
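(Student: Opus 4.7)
The plan is to exhibit explicit mutually inverse Kleene algebra morphisms $\Phi : \Mat mm{C_m}\to C_m$ and $\Psi : C_m\to\Mat mm{C_m}$, built directly from the brackets of $C_m$. Set
\[ \Phi(A) := \Sum{q_i A_{i,j} p_j}{i,j<m}, \qquad
   \Psi(c)_{i,j} := p_i c q_j. \]
Every property claimed below reduces to the two defining equations of $C_m$: the match equation $p_i q_j = \delta_{i,j}$ and the completeness equation $\sum_{i<m} q_i p_i = 1$.

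First I would verify that $\Phi$ and $\Psi$ are dioid morphisms. Additivity and preservation of $0$ are immediate. The units are given by $\Phi(1_m) = \sum_i q_i p_i = 1$, using completeness, and $\Psi(1)_{i,j} = p_i q_j = \delta_{i,j}$, i.e.~$\Psi(1) = 1_m$, using matching. Multiplicativity of $\Phi$ follows from matching:
\[ \Phi(A)\Phi(B) = \Sum{q_i A_{i,k}(p_k q_l) B_{l,j} p_j}{i,k,l,j<m}
                  = \Sum{q_i A_{i,k} B_{k,j} p_j}{i,k,j<m} = \Phi(AB), \]
and multiplicativity of $\Psi$ from completeness:
\[ (\Psi(c)\Psi(d))_{i,j} = \sum_{k<m} p_i c (q_k p_k) d q_j = p_i c d q_j = \Psi(cd)_{i,j}. \]
Second, $\Phi$ and $\Psi$ are mutually inverse, again by the same two equations:
\[ \Phi(\Psi(c)) = \sum_{i,j<m}(q_i p_i) c (q_j p_j) = c,\qquad
   \Psi(\Phi(A))_{i,j} = \sum_{k,l<m}(p_i q_k) A_{k,l}(p_l q_j) = A_{i,j}. \]

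Third, I would upgrade the dioid isomorphism $\Phi$ to a Kleene algebra isomorphism. By Theorem~\ref{thm-mat-starKA}, both $C_m$ and $\Mat mm{C_m}$ are ${}^*$-continuous, so $c^* = \sup_{n\in\N} c^n$ with respect to the natural order on each side. Since $\Phi$ is additive and bijective with additive inverse $\Psi$, both maps are order-preserving, and so $\Phi$ preserves existing suprema: if $t$ is an upper bound of $\Phi(S)$ in $C_m$, then $\Psi(t)$ is an upper bound of $S$, hence $\Psi(t)\geq \sup S$, and so $t\geq \Phi(\sup S)$. Applying this to $S=\{A^n : n\in\N\}$ and combining with multiplicativity already established,
\[ \Phi(A^*) = \Sum{\Phi(A^n)}{n\in\N} = \Sum{\Phi(A)^n}{n\in\N} = \Phi(A)^*, \]
the last equality using ${}^*$-continuity of $C_m$.

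The only step with any real content is the ${}^*$-preservation at the end, but it is really just the general observation that a dioid isomorphism between two ${}^*$-continuous Kleene algebras preserves $^*$ via $c^* = \sup_n c^n$. All the specifically bra-ket content of the theorem is concentrated in the semiring calculations, which collapse mechanically under the match and completeness equations.
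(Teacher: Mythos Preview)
Your proof is correct and follows essentially the same approach as the paper: the maps $\Phi,\Psi$ are exactly the paper's $\check{\,\cdot\,},\hat{\,\cdot\,}$, and the semiring calculations coincide. The only difference is in the ${}^*$-preservation step, where the paper uses the least-pre-fixpoint characterization $a^* = \mu x.(ax+1)$ and the commutation $f\circ g_a = h_{\hat a}\circ f$, while you instead invoke ${}^*$-continuity ($c^* = \sup_n c^n$) and the general fact that an order isomorphism preserves existing suprema; both arguments are valid and equally short.
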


\begin{proof}
  \setcounter{claim}0
  Define $\hat\cdot : C_m\to \Mat mm{C_m}$ and $\check\cdot:\Mat mm{C_m}\to C_m$ by
  \[ \hat a_{ij} := p_iaq_j \quad\textrm{for }a\in C_m, \qquad\textrm{and}\qquad
  \check A = \sum_{i,j<m}q_iA_{ij}p_j \quad\textrm{for }A\in\Mat mm{C_m}.
  \] 
  These maps are inverse to each other, because for $a\in C_m$ and $A\in \Mat mm{C_m}$,
  \[ \check{\hat a} = \sum_{i,j} q_i{\hat a_{ij}}p_j = \sum_{i,j}q_ip_iaq_jp_j
  = (\sum_i q_ip_i)a(\sum_j q_jp_j) = a,
  \]
  \[ ({\check A})\hat{}_{kl} = (\sum_{i,j}q_iA_{ij}p_j)\hat{}_{kl} =
  p_k(\sum_{i,j}q_iA_{ij}p_j)q_l = p_kq_kA_{kl}p_lq_l = A_{kl}.
 \]
  Let $0_m$ be the zero and $1_m$ the unit matrix of dimension $m\times m$.
  Clearly, $\hat\cdot$ is a semiring morphism, by
  \[ \hat 0 = \big(p_i0q_j\big) = 0_m, \]
  \[ \hat 1 = \big(p_i1q_j\big) = \big(\delta_{ij}\big) = 1_m,\]
\[ \hat a+\hat b = \big(p_iaq_j\big)+\big(p_ibq_j\big) = \big(p_i(a+b)q_j\big) = \widehat{a+b},\]
\[ \hat a\cdot \hat b = \big(\sum_k p_iaq_kp_kbq_j\big) = \big(p_ia(\sum_kq_kp_k)bq_j\big)
= \big(p_iabq_j\big) = \widehat{\,ab\,}.
\]
We leave it to the reader to check that the inverse $\check{\,\cdot\,}$
also is a semiring morphism.  Since they preserve $+$, these maps are
monotone and order isomorphisms. To see that they are Kleene algebra
morphisms, let $a\in C_m$ and $A\in \Mat mm{C_m}$. Then $a^*=\mu x.g_a(x)$ and $A^*=\mu
x.h_A(x)$ are the least pre-fixpoints of the monotone maps $g_a:C_m\to C_m$
and $h_A:\Mat mm{C_m}\to \Mat mm{C_m}$ defined by $g_a(x) = ax+1$ and $h_A(x) = Ax +
1_m$. For $f=\widehat\cdot:C_m\to \Mat mm{C_m}$ we have\[ (f\circ
g_a)(x) =\widehat{ax+1} = \hat a \hat x + \hat 1 = (h_{\hat a}\circ
f)(x). \] It follows that
\[ \widehat{a^*} = f(\mu x.g_a(x)) = \mu x.h_{\hat a}(f(x)) = \hat{a}^*. \]
  Likewise, for the inverse $f^{-1} = \check{\cdot}:\Mat mm{C_m}\to C_m$ we have
  \[ (f^{-1}\circ h_A)(x) = (Ax+1_m)\check{} = \check{A}\check{x} + 1 =
  (g_{\check A}\circ f^{-1})(x), \]
  which implies $(A^*)\check{} = \check{A}^*$.
\end{proof}

\begin{corollary}
The Kleene subalgebra of $C_m$ generated by $\setof{q_ip_j}{i,j<m}$ is isomorphic to $\Mat mm\B$.
Moreover, $C_m \simeq C_m\xR \Mat mm{\B}$.
\end{corollary}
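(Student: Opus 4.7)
My plan is to deduce both isomorphisms from Theorem \ref{thm-matrix-C} and, for the second one, additionally from Proposition \ref{prop-MatK}. For the first claim, I will restrict the isomorphism $\check{\,\cdot\,}:\Mat mm{C_m}\to C_m$ provided by Theorem \ref{thm-matrix-C} to the Kleene subalgebra $\Mat mm\B \subseteq \Mat mm{C_m}$. Here I use that $C_m$ is non-trivial (which follows from the faithful interpretation in $\Mat\omega\omega\B$ of Example \ref{expl-Mat-omega}, or from Lemma \ref{lem-recoding-C'} applied to $C_2 \supseteq C_2'$, where $C_2'$ is known non-trivial), so that the canonical map $\B\hookrightarrow C_m$ is a Kleene subalgebra inclusion and consequently $\Mat mm\B$ sits inside $\Mat mm{C_m}$ as a Kleene subalgebra. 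Unfolding $\check{\,\cdot\,}$ on the matrix unit $E_{ij}\in \Mat mm\B$ (with 1 at entry $(i,j)$ and 0 elsewhere), I get
\[
 \check{E_{ij}} \;=\; \sum_{k,l<m} q_k(E_{ij})_{kl}p_l \;=\; q_ip_j.
\]
Since every element of $\Mat mm\B$ is a finite sum of matrix units, its image under $\check{\,\cdot\,}$ is a finite sum of elements of $\{q_ip_j : i,j<m\}$, hence lies in the subalgebra $S\subseteq C_m$ generated by these elements. Conversely, $S$ is contained in the image of the restriction, because the image already contains all generators of $S$ and is itself a Kleene subalgebra (being the image of one under a Kleene algebra homomorphism). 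Therefore the restriction is an iso $\Mat mm\B\to S$, proving the first claim.

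For the second claim, the plan is simply to compose isomorphisms. Instantiating Proposition \ref{prop-MatK} at $K=C_m$ gives $\Mat mm{C_m}\simeq C_m\xR \Mat mm\B$, and Theorem \ref{thm-matrix-C} gives $C_m\simeq \Mat mm{C_m}$; composing yields $C_m\simeq C_m\xR\Mat mm\B$. There is essentially no obstacle here beyond the verifications already done for those results; the only delicate point in the whole corollary is making sure that $\B$ really does embed into $C_m$ as a Kleene subalgebra, i.e.\ that $0\neq 1$ in $C_m$, which the interpretation in $\Mat\omega\omega\B$ supplies. The restriction of a Kleene algebra isomorphism to a Kleene subalgebra is automatically a Kleene algebra isomorphism onto its image, so no separate verification of the preservation of ${}^*$ on $S$ is required.
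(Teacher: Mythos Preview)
Your argument is correct and follows essentially the same route as the paper: restrict the isomorphism $\check{\,\cdot\,}$ of Theorem \ref{thm-matrix-C} to $\Mat mm\B$, compute $\check{E_{ij}}=q_ip_j$, and for the second claim compose Theorem \ref{thm-matrix-C} with Proposition \ref{prop-MatK}. One small slip: your parenthetical alternative ``$C_2 \supseteq C_2'$'' is backwards---$C_2$ is a \emph{quotient} of $C_2'$ (the congruence $\rho_2$ contains $\rho_2'$), so non-triviality of $C_2'$ does not directly yield non-triviality of $C_2$; but your primary justification via the interpretation in $\Mat\omega\omega\B$ of Example \ref{expl-Mat-omega} is sound and suffices.
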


\begin{proof}
  Let $E_{(i,j)}$ be the $m\times m$ boolean matrix with 1 only at position $(i,j)$.
  The first claim holds since the isomorphism $\check{\,\cdot\,}:\Mat mm{C_m} \to
  C_m$ maps a generator $E_{(i,j)}$ of $\Mat mm\B$ to $q_ip_j$.
  The second claim follows from $C_m\simeq \Mat mm{C_m}$ and Proposition \ref{prop-MatK}.
\end{proof}

Similar to Lemma \ref{lem-recoding-C'}, we can code $C_m$ in $C_2$ for $m>2$:

\begin{proposition}\label{lem-recoding-C}
  For $m > 2$ there is an embedding $\CR$-morphism $g: C_m \to C_2$
  such that for $i, j < m$,
  \[ g(p_i)\cdot g(q_j) = \delta_{i,j} \quad\textrm{and}\quad g(\sum_{i<m} q_ip_i) = 1, \]
  writing $p_i$ and $q_j$ for the congruence classes $\{p_i\}/\rho_m$ and $\{q_j\}/\rho_m$ in $C_m$.
\end{proposition}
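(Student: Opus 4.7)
The plan is to adapt the proof of Lemma \ref{lem-recoding-C'} by choosing an encoding that validates the completeness equation of $C_m$ in $C_2$ in addition to the matching equations. Writing $b,p,d,q$ for $p_0,p_1,q_0,q_1 \in \Delta_2$, I would set
\[
\ol{p_i} := b p^i,\quad \ol{q_i} := q^i d \quad\textrm{for } 0 \leq i < m-1,
\quad\textrm{and}\quad \ol{p_{m-1}} := p^{m-1},\quad \ol{q_{m-1}} := q^{m-1},
\]
extend $\ol{\,\cdot\,}$ to a monoid homomorphism $\Delta_m^* \to \Delta_2^*$, lift via $\CR$ to an $\CR$-morphism $\CR\Delta_m^* \to \CR\Delta_2^*$, and define $g : C_m \to C_2$ by $g(A/\rho_m) := \ol{A}/\rho_2$.

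Verifying $\ol{p_i}\ol{q_j}/\rho_2 = \delta_{i,j}$ is a case analysis in $C_2$ using $bd = 1 = pq$ and $bq = 0 = pd$: for $i,j < m-1$ the product $b p^i q^j d$ collapses via $pq = 1$ to $bd = 1$ when $i = j$ and to $b q^{j-i} d = 0$ or $b p^{i-j} d = 0$ when $i \neq j$; the edge cases with exactly one of $i,j$ equal to $m-1$ yield a factor $pd$ or $bq$ and thus vanish, while $\ol{p_{m-1}}\ol{q_{m-1}} = p^{m-1}q^{m-1} = 1$. For completeness, the key step is the identity
\[
\sum_{i=0}^{n-1} q^i d b p^i + q^n p^n = 1 \quad\textrm{in } C_2,
\]
which I would prove by induction on $n \geq 1$ starting from the defining equation $db + qp = 1$; the inductive step follows from $q^n p^n = q^n(db + qp) p^n = q^n d b p^n + q^{n+1} p^{n+1}$. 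Instantiated at $n = m-1$ this yields $\sum_{i < m} \ol{q_i}\ol{p_i} = 1$, so both generating relations of $\rho_m$ are sent into $\rho_2$ and $g$ is well-defined; the computation in the last display of the proof of Lemma \ref{lem-recoding-C'} then shows that $g$ is an $\CR$-morphism.

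The hard part will be injectivity, since $C_m$ has no normal form description comparable to the $Q_m^* P_m^*$ representation of Proposition \ref{prop-C2'} that drove the injectivity argument for $C_m' \hookrightarrow C_2'$. My approach is to use the matrix isomorphism $C_m \simeq \Mat mm{C_m}$ of Theorem \ref{thm-matrix-C}, under which any element $a \in C_m$ is recovered from the $m \times m$ matrix of entries $\hat a_{ij} = p_i a q_j$, and $g$ intertwines with this decomposition because $g(p_i a q_j) = \ol{p_i}\,g(a)\,\ol{q_j}$ in $C_2$. Combining this iterated matrix decomposition with a faithful representation of $C_2$ (for instance the one in $\Mat\omega\omega\B$ sketched in Example \ref{expl-Mat-omega}), one can argue that the $\CR$-subalgebra of $C_2$ generated by $\{\ol{p_i},\ol{q_i}:i<m\}$ satisfies no relations beyond those already imposed by $\rho_m$, which is precisely the statement that $g$ is an embedding.
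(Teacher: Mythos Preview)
Your encoding and the verification that $g$ is a well-defined $\CR$-morphism coincide with the paper's proof: the same code $\ol{p_i}=bp^i$ (resp.\ $p^{m-1}$) and $\ol{q_i}=q^id$ (resp.\ $q^{m-1}$), the same case analysis for $\ol{p_i}\,\ol{q_j}=\delta_{i,j}$, and the same telescoping induction $q^np^n = q^ndbp^n + q^{n+1}p^{n+1}$ for the completeness equation.

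The injectivity argument, however, is where your proposal and the paper part ways, and your sketch has a genuine gap. The paper does \emph{not} use Theorem~\ref{thm-matrix-C} or any representation in $\Mat\omega\omega\B$. Instead it first observes that the code $\ol{\,\cdot\,}:\Delta_m^*\to\Delta_2^*$ is uniquely decodable, so the lift $\CR\Delta_m^*\to\CR\Delta_2^*$ is injective, and then proves by induction on the stages $\r2n$ of the inductive construction of $\rho_2$ that $\ol A\r2n\ol B$ implies $A/\rho_m=B/\rho_m$, checking each closure rule (symmetry, transitivity, union, product, supremum) separately.

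Your route does not close. First, Example~\ref{expl-Mat-omega} only exhibits an interpretation of the bra-ket generators in $\Mat\omega\omega\B$; the paper never claims, let alone proves, that this interpretation is \emph{faithful} on all of $C_2$, so you cannot invoke it as a ``faithful representation''. Second, the matrix decomposition $a\mapsto(p_iaq_j)_{i,j}$ sends an element of $C_m$ to a matrix over $C_m$ again; iterating it never reaches a base case, and from $g(a)=g(b)$ you only obtain $g(p_iaq_j)=g(p_ibq_j)$, which is the same problem one dimension up. The final clause ``one can argue that the $\CR$-subalgebra \ldots\ satisfies no relations beyond those imposed by $\rho_m$'' is precisely a restatement of injectivity, not an argument for it. To complete your approach you would need an independent normal-form or faithfulness result for $C_2$ analogous to Proposition~\ref{prop-C2'}, and no such result is available in the paper.
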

\begin{proof}
  Let $\rho_m$ be the $\CR$-congruence on $\CR\Delta_m^*$ generated by the match- and
  completeness equations for $\CR\Delta_m$ and $\rho_2$ the corresponding
  $\CR$-congruence on $\CR\Delta_2^*$. Writing again $\Delta_2=\{b,p,d,q\}$, we
  modify the coding $\ol{\,\cdot\,}$ of $\Delta_m$ in $\Delta_2^*$ of Lemma
  \ref{lem-recoding-C'} by putting
\[ \ol{p_i} = \begin{cases} bp^i, & i < m-1 \cr p^i, & i= m-1\end{cases}
  \quad\textrm{and}\quad
   \ol{q_i} = \begin{cases} q^id, & i < m-1 \cr q^i, & i= m-1 . \end{cases}
   \]
This extends to a homomorphism from $\Delta_m^*$ to $\Delta_2^*$ and lifts to an
$\CR$-morphism $\ol{\,\cdot\,}: \CR\Delta_m^* \to \CR\Delta_2^*$. Clearly, the match
equations $\ol{p_i}\,\ol{q_j} =\delta_{i,j}$ for $i,j<m$ hold in
$C_2=\CR\Delta_2^*/\rho_2$. In $C_2$, we have
$1 = db + qp = \ol{q_0}\ol{p_0} + q^1p^1$, and since  for $1\leq i<m-1$
\[ q^ip^i = q^i(db+qp)p^i = q^idbp^i + q^iqpp^i = \ol{q_i}\ol{p_i}+q^{i+1}p^{i+1}, \]
it follows that
\[ 1 = \ol{q_0}\ol{p_0}+q^1p^1
     = \sum_{i<m-1} \ol{q_i}\ol{p_i} + q^{m-1}p^{m-1} = \sum_{i<m} \ol{q_i}\ol{p_i}. \]
So the completeness equation of $C_m$ also holds under the coding in $C_2$.  Hence a
map $g:\CR\Delta_m^*/\rho_m\to \CR\Delta_2^*/\rho_2$ is well-defined by $g(A/\rho_m)
= \ol{A}/\rho_2$ for $A\in\CR\Delta_m^*$. As in Lemma \ref{lem-recoding-C'}, it is an
$\CR$-morphism and satisfies $g(p_i)\cdot g(q_j) = \delta_{i,j}$ and $1 =
g(\sum_{i<m} q_ip_i)$.  (But the additional property $p_0\cdot g(q_i) = 0 =
g(p_i)\cdot q_0$ of Lemma \ref{lem-recoding-C'} only holds for $i>0$, since
$p_0\ol{q_0} = bq^0d = 1 = bp^0d = \ol{p_0}q_0$ in $C_2$.)

To see that $g$ is injective, first notice that $\ol{\,\cdot\,}:\CR\Delta_m^*\to
\CR\Delta_2^*$ is injective: any $w\in\Delta_2^*$ in the image of $\ol{\,\cdot\,}$
can uniquely be parsed into a word of
$\{\ol{p_0},\ldots,\ol{p_{m-1}},\ol{q_0},\ldots,\ol{q_{m-1}}\}^*$, so there is a
unique $v\in\Delta_m^*$ with $w=\ol{v}$. It is therefore sufficient to show
  \begin{equation}\label{eqn-rho-2,n}
    \textrm{for all $A,B\in \CR\Delta_m^*$}(\ol A \r2n \ol B \Rightarrow A/\rho_m = B/\rho_m),
  \end{equation}
  where $\r2n$ is the $n$-th stage of the inductive definition of $\rho_2$. This
  is done by induction on $n$.
  If $\ol A \r20 \ol B$, either $\ol A = \ol B$, in which case $A=B$, or $\ol A
  \r20\ol B$ is a match equation or the completeness equation of $\rho_2$, in
  which case $(A,B)$ is the corresponding match or completeness equation of
  $\rho_m$, so $A/\rho_m=B/\rho_m$.
  If $\ol A \r2{n+1}\ol B$ is obtained by symmetry from $\ol B \r2n \ol A$ or
  by transitivity from $\ol A\r2n \ol C$ and $\ol C\r2n \ol B$, the claim
  follows from symmetry resp.~transitivity of $\rho_m$.

  If $\ol{A}\r2{n+1}\ol{B}$ is obtained from $\ol{A_1}\r2n \ol{B_1}$ and
  $\ol{A_2}\r2n\ol{B_2}$ by $\ol{A} = \ol{A_1}\,\ol{A_2}$ and $\ol{B}=
  \ol{B_1}\,\ol{B_2}$, then
  \[ A/\rho_m = (A_1A_2)/\rho_m = A_1/\rho_m A_2/\rho_m
              = B_1/\rho_m B_2/\rho_m = (B_1B_2)/\rho_m = B/\rho_m
  \]
  by induction. The argument is similar if $\ol{A} = \ol{A_1}\cup\ol{A_2}$ and
  $\ol{B}=\ol{B_1}\cup\ol{B_2}$.

  Suppose ${\bigcup U'}\r2{n+1}{\bigcup V'}$, where $U', V'\in
  \CR(\CR\Delta_2^*)$ contain only regular sets of words in the image of
  $\ol{\,\cdot\,}:\Delta_m^*\to \Delta_2^*$ and $(U'/\rho_2)^\da = (V'/\rho_2)^\da$
  in stage $n$. As these regular sets of words are also regular sets of words over
  $\ol{\Delta_m}$, there are $U,V\in\CR(\CR\Delta_m^*)$ such that $U' =
  \setof{\ol A}{A\in U}$, $V' =\setof{\ol B}{B\in V}$, and for each $A\in U$ there
  is $B\in V$ with $\ol A\cup\ol B\r2n \ol B$ and for each $B\in V$ there is
  $A\in U$ with $\ol B\cup \ol A \r2n \ol A$. By induction, $\ol{A\cup B} = \ol
  A\cup\ol B\r2n \ol B$ implies $A/\rho_m\leq B/\rho_m$ and $\ol B\cup \ol
  A\r2n\ol A$ implies $B/\rho_m\leq A/\rho_m$, so that $(U/\rho_m)^\da =
  (V/\rho_m)^\da$ and therefore $\bigcup U/\rho_m = \bigcup U/\rho_m$. Since $\bigcup
  U'=\ol{\bigcup U}$ and $\bigcup V'=\ol{\bigcup V}$, the claim is proven.
\end{proof}

\subsection{Relativizing the completeness property}\label{sec-completeness}
Let $m \geq 2$ and $e := \sum_{i < m}{\q{i}\p{i}}$.
For the tensor product $K\xR C_m$ of an $\CR$-dioid $K$ and $C_m$, the completeness equation $e=1$ can be used to show that every element of the centralizer of $C_m$ is the least upper bound of some context-free subset of $K$, i.e.~that 
\[ \sum : \CC K \to \zc{K}{C_m} \] is surjective.
Also, Lemma \ref{lem-N-matrix-C2'} is a bit easier to prove for $C_2$ than for
$C_2'$, as we can use $db\leq 1$ to prove $NN\leq N$ in Claim \ref{claim-Lemma-1d}.
We do not go into this here, but observe that in suitable contexts, $e=1$ in a sense holds in
the polycyclic algebras $C_m'$ as well. For example, as $p_ie = p_i$ for $p_i \in P_m$ and $eq_j = q_j$
for $q_j \in Q_m$, in $C_m'$ we have
\[
	\p0 e\q0 = \p0 \q0 = 1 = \p0 1\q0.
\]
This can be generalized to a relativized form of the completeness property.
Basically, for any regular expression $\phi(x)$ in an unknown $x$, elements of $K$ and brackets
of $C_m'$ other than $p_0,q_0$, the two elements $\phi(e), \phi(1) \in K\xR C_m'$ are
suprema of regular sets that differ only by elements from the centralizer $\zc{K}{C_m'}$ weighted by
factors from $\{q_1,\ldots,q_{m-1}\}^*\{p_1,\ldots,p_{m-1}\}^*\setminus\{1\}$, and these
reduce to 0 in the context $\p0\ldots\q0$ of a fresh pair of brackets.

\begin{theorem}[Relative Completeness]
\label{thm-completeness-under-<0-0>}
	Let $K$ be an $\CR$-dioid.
	For any $\phi(x) =\phi(\pi, \p1, \ldots, \p{m-1}, \q1, \ldots, \q{m-1}, x)
        \in (K\xR C_m')[x]$ in which $\p0$ and $\q0$ occur only in $\pi = \q0\p0$,
	\[
		\p0\phi(e)\q0 = \p0\phi(1)\q0 \quad\textrm{and}\quad 
                \p0\phi(1)\q0 \in\zc{K}{C_m'}.
	\]
\end{theorem}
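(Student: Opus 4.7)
My plan is to reduce both claims of the theorem to a combinatorial lemma about bracket-word normal forms in $C_m'$, established by a stack-machine invariant, and then extended from monomials to arbitrary regular expressions via \Star-continuity.

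First I reduce to the monomial case. Viewing $\phi(x)$ as a regular expression over the alphabet $\Sigma\cup\{x\}$ with $\Sigma = K\cup\{\pi,p_1,\ldots,p_{m-1},q_1,\ldots,q_{m-1}\}$, Kleene's representation theorem (Theorem~\ref{thm-aut}) gives $\phi(x) = SA(x)^*F$, and \Star-continuity expands $\phi(y) = \sum_n SA(y)^nF$ for $y\in\{e,1\}$ as a sum of monomial path products. Since $p_0(\cdot)q_0$ distributes over sums and $\zc{K}{C_m'}$ is closed under sums of regular sets (Theorem~\ref{thm-centralizer}~\ref{item-4}), both conclusions reduce to the monomial level. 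Using the commutativity of $K$ with $C_m'$ in the tensor product, every monomial factors as $v(x) = k\cdot w(x)$ with $k\in K$ and $w(x)\in\{\pi,p_1,\ldots,p_{m-1},q_1,\ldots,q_{m-1},x\}^*$, so it suffices to prove $p_0w(e)q_0 = p_0w(1)q_0\in\{0,1\}\subseteq C_m'$; then $k\cdot(p_0w(y)q_0)\in K\otimes 1\subseteq \zc{K}{C_m'}$.

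The combinatorial heart is a stack-invariant analysis in $C_m'$. Expanding $\pi = q_0p_0$ and, for $y=e$, distributing $e = \sum_i q_ip_i$ over each occurrence of $x$, I reduce to computing $p_0uq_0$ for words $u\in\Delta_m^*$ in which every $p_0$ is immediately preceded by a $q_0$ and every $q_0$ is immediately followed by a $p_0$. I apply the normal-form algorithm from Proposition~\ref{prop-C2'} as a left-to-right scan maintaining a result in $Q_m^*P_m^*\cup\{0\}$. The key invariant is that the initial $p_0$ remains at the bottom of the $P$-part throughout the scan of $u$: every internal $q_0$ either cancels and is immediately re-pushed by the paired $p_0$ (a no-op) or triggers a $p_iq_0$ mismatch with $i>0$ and zeros the word, so the $Q$-part stays empty and the final $P$-part has the form $[p_0,p_{j_1},\ldots,p_{j_r}]$. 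The outer $q_0$ then yields $1$ if $r=0$ and $0$ otherwise, so $p_0w(y)q_0\in\{0,1\}$. For the equality, each substituted factor $q_{i_k}p_{i_k}$ is a stack no-op exactly when $i_k$ equals the current $P$-top at the $k$-th $x$-position and aborts the run with $0$ otherwise; hence the Boolean sum $p_0w(e)q_0 = \sum_{(i_1,\ldots,i_\ell)} p_0w_{(i_1,\ldots,i_\ell)}q_0$ evaluates to $1$ iff the $x=1$-run succeeds, giving $p_0w(e)q_0 = p_0w(1)q_0$.

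The main obstacle will be making the stack invariant precise: one must carefully track how the structural constraint on $p_0,q_0$ in $\phi$ interacts with the non-local normal-form reduction of Proposition~\ref{prop-C2'} to force the $\{0,1\}$-valuedness of $p_0w(y)q_0$ and establish the bijection between the $(i_1,\ldots,i_\ell)$ whose substitutions yield $1$ and the successful trajectory of the $x=1$-evaluation. Once this monomial lemma is in hand, the extension to regular expressions by \Star-continuity and closure of $\zc{K}{C_m'}$ under regular suprema is routine.
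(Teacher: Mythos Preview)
Your approach is correct and arrives at the same combinatorial core as the paper, but via a different and somewhat more elementary reduction.

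The paper represents $\phi(x)=S(U+X+V+Yx+W\pi)^*F$ by an automaton, first peels off the $\pi$-transitions via the Kleene identity $\p0(A+W\pi)^*\q0=\p0 A^*\q0(W\p0 A^*\q0)^*$, and then invokes the first normal form $\alpha^*=(NV)^*N(UN)^*$ (Lemma~\ref{lem-N-matrix-C2'} and Theorem~\ref{thm-nf-C'}) to write $\alpha^*=\sum_{u\in P^*,v\in Q^*} vu\,N_{vu}$ with $N_{vu}$ in the centralizer.  This organizes $\p0 A^*\q0$ as a sum over bracket-words $v_0u_0ev_1u_1\cdots ev_ku_k$ (with $u_i\in(P_m\setminus\{p_0\})^*$, $v_i\in(Q_m\setminus\{q_0\})^*$) times centralizer coefficients, and the remaining combinatorial lemma is that $p_0(v_0u_0ev_1u_1\cdots ev_ku_k)=p_0(v_0u_0\cdots v_ku_k)$, proved by a three-case analysis on the normal form of $w_j=v_0u_0\cdots v_ju_j$.

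You bypass the normal-form theorem entirely: you expand $\phi(y)$ into path-monomials by \Star-continuity, separate each monomial into a $K$-factor and a $C_m'$-word via tensor commutativity, and reduce directly to showing $p_0w(e)q_0=p_0w(1)q_0\in\{0,1\}$ for words $w(x)\in\{\pi,p_1,\ldots,q_{m-1},x\}^*$.  Your stack invariant (the initial $p_0$ remains the bottom of the $P$-part, and each inserted $q_{i_k}p_{i_k}$ is a no-op iff $i_k$ matches the current top) is exactly the left-to-right execution of the paper's normal-form case analysis, now also handling $\pi$ in-line rather than having it split off beforehand.  What you gain is independence from the heavier machinery of Section~\ref{sec-normal-forms}; what the paper gains is that its reduction reuses results already proved and yields a structured form $SN(WN)^*F$ for $\p0\phi(1)\q0$ (cf.\ Theorem~\ref{thm-nf2}) rather than just membership in the centralizer.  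One small wording point: your factorization $v(x)=k\cdot w(x)$ is only valid after substituting $x\in\{e,1\}\subseteq C_m'$, not in $(K\xR C_m')[x]$ itself; this does not affect the argument, since only $\phi(e)$ and $\phi(1)$ are being compared.
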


\begin{proof}
  Let $\phi(x)= S(A+W\pi)^*F$ be given by an automaton $\Z{S, A+W\pi, F}$,
	where $A = U+X+V+Yx$,  $S, U, X, V, W, F$
        and $n$ are as in Theorem \ref{thm-nf2}, and $Y \in \{0, 1\}^{n \times n}$.
        Then
        \[ \p0 \phi(x) \q0 = \p0S(A+W\pi)^*F\q0 = S\p0(A+W\pi)^*\q0F. \]
        Recall that on the right and in the following, $\p{i}$ and $\q{j}$ are
        identified with corresponding diagonal matrices of dimension $n$.
        As in the proof of Theorem \ref{thm-nf2},
	\begin{eqnarray*}
	  \p0 (A+W\pi)^*\q0	&=&	\p0 A^*\q0 (W\p0 A^*\q0)^*.
	\end{eqnarray*}
	It is sufficient to show that $\p0A^*\q0$ does not depend on the choice of $x \in \{1, e\}$.
	Using $\alpha= U+X+V$,
	\begin{eqnarray*}
		\p0 A^*\q0 &=& \p0 (\alpha+Yx)^*\q0 = \p0\alpha^*(Yx\alpha^*)^*\q0.
	\end{eqnarray*}
	Let $N \in (\zc{K}{C_m})^{n\times n}$ be as in Theorem \ref{thm-nf2}, so that
        with \Star-continuity on the matrix level,
	\begin{eqnarray*}
		\alpha^*	&=&	(U+X+V)^* = (NV)^* N (UN)^* \\
				&=&	\Sum{(NV)^kN (NU)^l}{k,l\in\N}.
	\end{eqnarray*}
	There are $U_i, V_j \in \B^{n\times n}$ such that $U=\sum_{0 < i < m} U_i\p{i}$ and $V=\sum_{0 < j < m}\q{j}V_j$.
        As the $q_j$ and $p_i$ commute with $N$ and boolean matrices,
	\begin{eqnarray*}
		\lefteqn{(NV)^kN (UN)^l	=	(\sum_{0 < j < m}\q{j}NV_j)^kN (\sum_{0 < i < m}U_iN\p{i})^l} \\
		&=&	\sum_{0< j_1, \ldots, j_k, i_1, \ldots, i_l < m} {\q{j_1}\cdots \q{j_k}\p{i_l}\cdots \p{i_1} NV_{j_1}\cdots NV_{j_k}NU_{i_l}N\cdots U_{i_1}N}.
	\end{eqnarray*}
	Let $P = P_m\setminus\{\p0\}$ and $Q = Q_m\setminus\{\q0\}$.
        For $v = \q{j_1} \cdots \q{j_k} \in Q^*$ and $u = \p{i_l} \cdots \p{i_1} \in P^*$, put
	\[
		N_{vu} = NV_{j_1}\cdots NV_{j_m}NU_{i_l}N\cdots U_{i_1}N,
	\]
	so that
	\[
		\alpha^* =	\Sum{(NV)^kN (UN)^l}{k,l\in\N} = \Sum{vuN_{vu}}{u\in P^*,v\in Q^*}.
	\]
	By \Star-continuity, it follows that
	\begin{eqnarray*}
	\lefteqn{\p0 \alpha^*(Ye\alpha^*)^*\q0 } \\
		&=&	\Sum{\p0\alpha^*(Ye\alpha^*)^k\q0}{k\in\N} \\
		&=&	\Sum{\p0 v_0u_0N_{v_0u_0}\ldots Yev_ku_kN_{v_ku_k}\q0
                           }{k\in\N, u_0, \ldots, u_k \in P^*, v_0, \ldots, v_k \in Q^*} \\
		&=&	\Sum{\p0 v_0u_0\ldots ev_ku_k\q0 N_{v_0u_0}\ldots YN_{v_ku_k}
                           }{k\in\N, u_0, \ldots, u_k \in P^*, v_0, \ldots, v_k \in Q^*},
	\end{eqnarray*}
	where the final step holds since the $ev_{i+1}u_{i+1}$ commute with $N_{v_0u_0}Y\cdots N_{v_iu_i}Y$.

	To show $\p0\alpha^*(Ye\alpha^*)^*\q0 =\p0\alpha^*(Y\alpha^*)^*\q0$,
	it therefore is sufficient that $e$ can be replaced by 1 in the summands
        $\p0 v_0u_0\ldots ev_ku_k\q0 N_{v_0u_0}\ldots YN_{v_ku_k}$, i.e.~that
        \begin{equation}\label{eqn-p0wk}
          p_0v_0u_0ev_1u_1\ldots ev_ku_k = p_0v_0u_0v_1u_1\ldots v_ku_k.
        \end{equation}
        For $k=0$, equation (\ref{eqn-p0wk}) is obvious. For $0<k$, put $w_j =
        v_0u_0\ldots v_ju_j$ and, by induction, assume
        \[ p_0v_0u_0ev_1u_1\ldots ev_ju_j = p_0w_j \]
        for some $j<k$. Since $w_j\in Q^*P^*\cup\{0\}$, we distinguish three
        cases. If $w_j=1$, then $p_0w_je = p_0e = p_0 = p_0w_j$, so
        $p_0w_jev_{j+1}u_{j+1} = p_0w_jv_{j+1}u_{j+1} = p_0w_{j+1}$.  If $w_j\in
        Q^+$, then $p_0w_j = 0$, so $p_0w_jev_{j+1}u_{j+1} = p_0w_{j+1}$. If
        $w_j\in Q^*P^+\cup\{0\}$, then $w_je = w_j$, so $p_0w_jev_{j+1}u_{j+1} =
        p_0w_{j+1}$.  It follows that $p_0v_0u_0ev_1u_1\ldots ev_{j+1}u_{j+1} =
        p_0w_{j+1}$, and by induction, (\ref{eqn-p0wk}).  Thus we have shown
        $\p0\alpha^*(Ye\alpha^*)^*\q0 =\p0\alpha^*(Y\alpha^*)^*\q0$ and thereby
        $\p0\phi(e)\q0 = \p0\phi(1)\q0$.

        Moreover, since $\p0w_k\q0 \in \{0,1\}$ for all $k$,
        $\p0\alpha^*(Y\alpha^*)^*\q0$ is the least upper bound of a regular set
        of $n\times n$-matrices over $\zc{K}{C_m'}$.  It follows that, for
        $x=1$,
        \[ \p0A^*\q0 = \p0\alpha^*(Y\alpha^*)^*\q0 \in \mat nn{(\zc{K}{C_m'})}, \]
        and therefore $\p0\phi(1)\q0 = S\p0A^*\q0(W\p0A^*\q0)^*F \in \zc{K}{C_m'}$.
\end{proof}

It therefore seems that at least for applications to formal languages, where we can
use a special pair $\p{0},\q{0}$ of brackets to annihilate words of $\{\q
1,\ldots,\q{m-1}\}^*\{\p 1,\ldots,\p {m-1}\}^*$, the completeness equation is of
little help.

\section{Conclusion}

The tensor product $\CR X^*\xR C_m'$ of the algebra $\CR X^*$ of regular sets of $X^*$ with the polycyclic Kleene algebra $C_m'$ based on $m \geq 2$ bracket pairs is a \Star-continuous Kleene algebra subsuming an isomorphic copy of the algebra $\CC X^*$ of context-free sets of $X^*$, the centralizer $\zc{\CR X^*}{C_m'}$ of $C_m'$.

We have investigated $K\xR C_m'$ for arbitrary \Star-continuous Kleene algebras $K$.
Every element $\phi \in K\xR C_m'$ is the value $SA^*F$ of an automaton $\Z{S, A, F}$ whose transition matrix $A=U+X+V$ splits into transitions
by opening brackets (and 0's) in $U$, transitions by elements of $K$ in $X$, and transitions by closing brackets (and 0's) in $V$.
Our main result is a normal form theorem saying that $A^*=(NV)^* N (UN)^*$,
where $N$ is the least solution of $y \geq (UyV+X)^*$ in $\Mat nn{K\xR C_m'}$,
corresponding to Dyck's language $D \subseteq \{U, X, V\}^*$ with bracket pair $U,V$, and $N$ has entries in $\zc{K}{C_m'}$.
If $\phi=SA^*F$ belongs to the centralizer of $C_2'$ in $K\xR C_2'$, and $K$ has no
zero divisors, then $SA^*F=SNF$. It remains open whether the non-existence of zero
divisors is a necessary assumption. These normal forms generalize a simpler normal form for
elements of the polycyclic monoid $P_m'[X]$.

Our main result had been obtained earlier (unpublished) by the first author with the
bra-ket Kleene algebra $C_m$ instead of $C_m'$.
For the brackets $p_0, \ldots, q_{m-1}$, we no longer need the completeness equation $1 = q_0p_0 + \ldots + q_{m-1}p_{m-1}$ of $C_m$,
but only the match- and mismatch equations $p_iq_j=\delta_{i,j}$ of $C_m'$.
It is also shown that in the context $p_0\ldots q_0$, in $K\xR C_m'$ this equation
can be assumed to hold.

The two sets of cases of greatest interest are specializations of $\CR M \xR C$ to
the monoids $M = X^*$ and $M = X^* \times Y^*$ and to $C = C_2'$ and $C =
C_2$. Applications, for $M = X^*$, include recognition of languages over an alphabet
of inputs $X$, while for $M = X^* \times Y^*$, they include parsing or translation of
languages over $X$, where $Y$ may denote an alphabet of actions (such as parse tree
building operations), or an alphabet of outputs. With the results established here,
we have laid a foundation for an algebraic study of recognition, parsing and
translation algorithms for context-free languages over $X$, that we hope to analyze in
greater depth in later publications.

In addition, given the close relation between $C_2'$ and $C_2$ and stack
machines, it is natural to enquire as to whether $\CR M \xR C$ may provide a
representation for 2-stack machine languages and relations, where $C = C_2' \xR
C_2'$ or $C = C_2 \xR C_2$, and, thus, a basis for a calculus for recursively
enumerable languages and relations.
We also hope to elaborate this in a future publication.

\section{Appendix}\label{sec-appendix}

We here complete the proof of Lemma \ref{lem-non-triv} by showing that
$\equiv\,\subseteq P$. We repeat that $P(R,S)$ is
  \begin{equation}\label{eqn-Q2}
    \forall (x,y),(a,b),(a',b')[ (a,b)R(a',b')\setminus Z \preceq (x,y)
      \iff  (a,b)S(a',b')\setminus Z \preceq (x,y)],
  \end{equation}
where $Z = \setof{(a,b)\in K_1\times K_2}{a=0 \textrm{ or }b=0}$ and $R\preceq (x,y)$
says that $(x,y)$ is an upper bound of $R\subseteq K_1\times K_2$.
\begin{proof}
  Let $\equiv_n$ be the $n$-th stage in the inductive definition of $\equiv$, where
  $\equiv_0$ consists of those $(R,S)$ where $R=S$ or where they are a tensor product
  equation, i.e.~$R= A\times B$ and $S=\{(\sum A,\sum B)\}$ for some $A\in \CR K_1,
  B\in\CR K_2$, and $\equiv_{n+1}$ adds pairs to $\equiv_n$ by the closure conditions
  for symmetry, transitivity, sum, product and supremum. To prove $\equiv\,\subseteq
  P$, it is sufficient to show $\equiv_n\,\subseteq P$ by induction on $n$.

Suppose $R\equiv_0 S$. If $R=S$, then $P(R,S)$ is clear, since $P$ is reflexive.
Otherwise, $R\equiv_0 S$ is a tensor product equation, i.e.~there are $A\in \CR K_1$
and $B\in \CR K_2$ such that $R=A\times B$ and $S=\{(\sum A,\sum B)\}$. Let
$(x,y),(a,b),(a',b')\in K_1\times K_2$. To show
\begin{equation}\label{eqn-AxB}
  (a,b)(A\times B)(a',b')\setminus Z \preceq (x,y) \iff
(a,b)\{(\sum A,\sum B)\}(a',b')\setminus Z\preceq (x,y),
\end{equation}
we first observe that, since for a rectangle
$A'\times B'\subseteq K_1\times K_2$,
\begin{eqnarray*}
  A'\times B'\subseteq Z &\iff& A'\subseteq\{0\} \vee B'\subseteq\{0\},
\end{eqnarray*}
either both sets $(a,b)(A\times B)(a',b')$ and $(a,b)\{(\sum A,\sum B)\}(a',b')$ are
subsets of $Z$ or both are not:
\begin{eqnarray*}
  (a,b)(A\times B)(a',b')\subseteq Z
  &\iff& aAa'\subseteq \{0\} \vee bBb'\subseteq\{0\} \\
  &\iff& \sum aAa' = 0 \vee \sum bBb'=0 \\
  &\iff& (a(\sum A)a',b(\sum B)b')\in Z
  \\ &\iff& (a,b)\{(\sum A,\sum B)\}(a',b')\subseteq Z.
\end{eqnarray*}
If both of these sets are subsets of $Z$, then clearly (\ref{eqn-AxB}) holds.
Otherwise, both $(a,b)(A\times B)(a',b')\setminus Z$ and $(a,b)\{(\sum A,\sum
B)\}(a',b')\setminus Z$ are non-empty. Since for rectangles $A'\times B'\not\subseteq
Z$,
\begin{eqnarray*}
  A'\times B' \setminus Z \preceq (x,y) &\iff& A'\times B'\preceq (x,y),
\end{eqnarray*}
the claim (\ref{eqn-AxB}) is implied by the following:
\begin{eqnarray*}
  (a,b)(A\times B)(a',b')\preceq (x,y)
  &\iff& (aAa'\times bBb') \preceq (x,y) \\
  &\iff& aAa' \preceq x \wedge bBb' \preceq y\\
  &\iff& \sum aAa' \leq x \wedge \sum bBb'\leq y \\
  &\iff& (a,b)\{(\sum A,\sum B)\}(a',b')\preceq (x,y).
\end{eqnarray*}

Suppose $R\equiv_{n+1}S$ is obtained from $S\equiv_n R$ by the condition to close
$\equiv$ under symmetry. By induction, $P(S,R)$ holds, and since $P$ is an
equivalence relation, $P(R,S)$ also holds.

Suppose $R\equiv_{n+1}S$ is obtained from $R\equiv_n T$ and $T\equiv_n S$ by the
condition to close $\equiv$ under transitivity. By induction, $P(R,T)$ and $P(T,S)$,
and since $P$ is an equivalence relation, $P(R,S)$.

Suppose $R_1\cup R_2\equiv_{n+1}S_1\cup S_2$ is obtained from $R_1\equiv_nS_1$ and
$R_2\equiv_nS_2$ by the condition to close $\equiv$ under union. By induction,
$P(R_1,S_1)$ and $P(R_2,S_2)$, and hence, for all $(a,b),(a',b')$ and $(x,y)$,
\begin{eqnarray*}
  \lefteqn{(a,b)(R_1\cup R_2)(a',b')\setminus Z \preceq (x,y)} \\
  &\iff& (a,b)R_1(a',b')\setminus Z \preceq (x,y) \wedge (a,b)R_2(a',b')\setminus Z \preceq (x,y) \\
  &\iff& (a,b)S_1(a',b')\setminus Z \preceq (x,y) \wedge (a,b)S_2(a',b')\setminus Z \preceq (x,y) \\
  &\iff& (a,b)(S_1\cup S_2)(a',b')\setminus Z \preceq (x,y),
\end{eqnarray*}
which shows $P(R_1\cup R_2, S_1\cup S_2)$.

Suppose $R_1R_2\equiv_{n+1}S_1S_2$ is obtained from $R_1\equiv_nS_1$ and
$R_2\equiv_nS_2$ by the condition to close $\equiv$ under products. Let
$(a,b),(a',b'),(x,y)\in K_1\times K_2$ and assume $(a,b)R_1R_2(a',b')\setminus Z
\preceq (x,y)$. By induction, $P(R_1,S_1)$, and hence, exploiting the universal
quantification in (\ref{eqn-Q2}),
\[ (a,b)S_1R_2(a',b')\setminus Z \preceq (x,y). \]
Since, by induction, we also have $P(R_2,S_2)$, this similarly gives
$(a,b)S_1S_2(a',b')\setminus Z \preceq (x,y)$. In the same way, from
$(a,b)S_1S_2(a',b')\setminus Z \preceq (x,y)$ one gets $(a,b)R_1R_2(a',b')\setminus
Z\preceq (x,y)$. Taken together, this shows $P(R_1R_2,S_1S_2)$.

Suppose $\bigcup \CU \equiv_{n+1} \bigcup \CV$ comes from
$\CU,\CV\in\CR(\CR(K_1\times K_2))$ with $(\CU/_{\equiv})^\da =
(\CV/_{\equiv})^\da$ in stage $n$, i.e.~
\[ \forall R\in \CU\,\exists S\in \CV (R\cup S\equiv_n S)
\wedge \forall S\in\CV\,\exists R\in \CU(S\cup R\equiv_n R),
\]
by the condition to close $\equiv$ under suprema. By induction,
\begin{equation}\label{eqn-sum-equiv}
  \forall R\in \CU\,\exists S\in \CV\,P(R\cup S, S)
\wedge \forall S\in\CV\,\exists R\in \CU\,P(S\cup R,R).
\end{equation}
Let $(a,b),(a',b'),(x,y) \in K_1\times K_2$, and assume $(a,b)(\bigcup \CU)
(a',b')\setminus Z \preceq (x,y)$, i.e.~
\[ \forall R\in \CU((a,b)R(a',b')\setminus Z \preceq (x,y)). \]
To show $(a,b)(\bigcup\CV)(a',b')\setminus Z \preceq (x,y)$, let
$S\in\CV$. By (\ref{eqn-sum-equiv}), there is $R\in\CU$ with $P(S\cup R,R)$, hence
\[ (a,b)(S\cup R)(a',b') \setminus Z \preceq (x,y) \iff (a,b)R(a',b') \setminus Z \preceq (x,y). \]
Since the right-hand side is true, we get $(a,b)S(a',b')\setminus Z \preceq (x,y)$
from the left-hand side. This shows $\forall S\in \CV((a,b)S(a',b')\setminus Z
\preceq (x,y))$, i.e.~$(a,b)(\bigcup\CV)(a',b')\setminus Z \preceq (x,y)$. The
reverse implication
\[ (a,b)(\bigcup\CU)(a',b')\setminus Z \preceq (x,y) \Leftarrow
(a,b)(\bigcup\CV)(a',b')\setminus Z \preceq (x,y)
\]
is shown by a symmetric argument. Therefore, we have $P(\bigcup\CU,\bigcup\CV)$.
\end{proof}

\section*{Acknowledgements}  
We thank the referees for careful reading and many helpful suggestions for improvement.
The first author wishes to acknowledge the support of his family and support and inspiration of Melanie and Lydia.  
The second author thanks the library of the Deutsches Museum in Munich for an
excellent public working environment.

\end{document}